\documentclass[journal, twoside, web]{Classes/IEEEtran}

% Generic packages
\usepackage{Packages/generic}

% My packages
\usepackage{Packages/myFormat}

% Add Reference location
\addbibresource{refs.bib}

% Title
\def\BibTeX{{\rm B\kern-.05em{\sc i\kern-.025em b}\kern-.08emT\kern-.1667em\lower.7ex\hbox{E}\kern-.125emX}}%
\markboth{\hskip25pc}%
{}%
\title{\vspace{10pt}Matrix-Scheduling of QSR-Dissipative Systems}%
\author{%
    Sepehr Moalemi, and James Richard Forbes
    \thanks{%
        Sepehr Moalemi {\tt\small sepehr.moalemi@mail.mcgill.ca} and James Richard Forbes {\tt\small james.richard.forbes@mcgill.ca} are with the Department of Mechanical Engineering, McGill University, 817 Sherbrooke St. W., Montreal, QC H3A 0C3, Canada.%
    }%
}
\begin{document}
    \maketitle

    % ensure all superscripts / subscripts at the same elevation
    \fontdimen16\textfont2=\fontdimen17\textfont2
    \fontdimen13\textfont2=5pt

    \begin{abstract}
    This paper considers gain-scheduling of QSR-dissipative subsystems using scheduling matrices. The corresponding QSR-dissipative properties of the overall matrix-gain-scheduled system, which depends on the QSR properties of the subsystems scheduled, are explicitly derived. The use of scheduling matrices is a generalization of the scalar scheduling signals used in the literature, and allows for greater design freedom when scheduling systems, such as in the case of gain-scheduled control. Furthermore, this work extends the existing gain-scheduling results to a broader class of QSR-dissipative systems. The matrix-scheduling of important special cases, such as passive, input strictly passive, output strictly passive, finite $\mathcal{L}_2$ gain, very strictly passive, and conic systems are presented. The proposed gain-scheduling architecture is used in the context of controlling a planar three-link robot subject to model uncertainty. A novel control synthesis technique is used to design QSR-dissipative subcontrollers that are gain-scheduled using scheduling matrices. Numerical simulation results highlight the greater design freedom of scheduling matrices, leading to improved performance.
\end{abstract}
% keywords
\begin{IEEEkeywords}
    Gain-scheduling, nonlinear control, QSR-dissipativity.
\end{IEEEkeywords}
    \section{Introduction}
\IEEEPARstart{D}{issipative} systems theory~\cite{J_C_Willems} is a general and broadly applicable systems theoretic framework that characterize a dynamical system by its input-output behavior, and it strongly relates to Lyapunov and \(\mathcal{L}_2\) stability theories~\cite{khalil}. As discussed in~\cite{van_der_schaft}, the input-output dissipativity theory in~\mbox{\cite{Brogliato, Zames_1}} is an extension of classical Lyapunov based dissipativity theory in~\mbox{\cite{J_C_Willems, Hill_Moylan_QSR}}, where \emph{a priori} information between input and output variables is not distinguished. Through the introduction of a storage function, these two approaches were unified in~\cite{J_L_Willems}. One such class of storage functions are quadratic supply rate (QSR) functions, which are used to analyze the stability of interconnected systems~\cite{QSR_Stability, Hill_Moylan_QSR}. A special case of QSR-dissipativity for square systems, whose inputs and outputs have the same dimension, is passivity. Over the years, passivity and dissipativity have seen a wide range of applications, including exponential stabilization of dynamical
systems~\cite{Fradkov_Hill}, analysis of event-triggered networked control systems~\cite{Panos_Rahnama}, synchronization of neural networks~\cite{Zhang_Liu}, and control of Euler\textendash Lagrange systems~\cite{Ortega_Nicklasson}, flexible multi-link manipulators~\cite{Damaren_passivity_multilink}, and delayed teleoperations~\cite{Panzirsch}.
\begin{figure}[t]
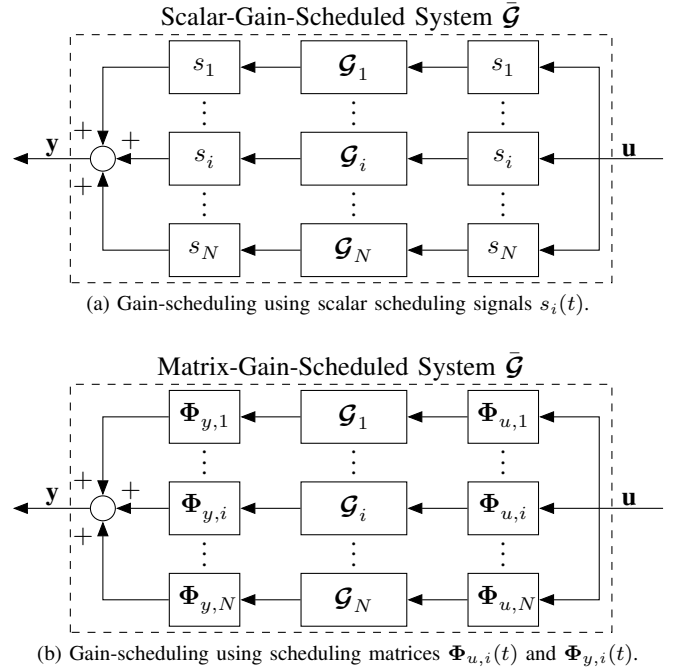

    \centering
    \subfloat[Gain-scheduling using scalar scheduling signals \(s_{{\it i}}(t)\).]{%
        \includegraphics{Figures/scalar_scheduling_intro_simplified.tikz}%
        \label{fig:scalar_scheduling}%
    }%
    \qquad
    \subfloat[Gain-scheduling using scheduling matrices \(\mbs{\Phi}_{u, {\it i}}(t)\) and \(\mbs{\Phi}_{y, {\it i}}(t)\).]{%
        \includegraphics{Figures/matrix_scheduling_intro_simplified.tikz}%
        \label{fig:matrix_scheduling}%
    }%
    \caption{Gain-scheduling of \(N\) subsystems using two different gain-scheduling architectures. The scalar scheduling signals \(s_{{\it i}}(t)\) and the scheduling matrices \(\mbs{\Phi}_{u, {\it i}}(t)\) and \(\mbs{\Phi}_{y, {\it i}}(t)\) are designed to interpolate between the subsystems to achieve acceptable performance.}%
    \label{fig:combined}
    \vspace{-5pt}
\end{figure}

When controlling a nonlinear system, it is possible to design linear controllers using the linearized model of a system about an operating point. However, the linearized model of the plant may not adequately capture the behavior of the nonlinear system over a wide range of operating conditions. Therefore, a single linear controller designed using a single linearized plant may not realize adequate closed-loop performance, or even closed-loop stability, across a wide range of operating conditions. Gain-scheduled control is a nonlinear control technique where a set of linear subcontrollers are designed about multiple linearization points across the operating range of a nonlinear system to be controlled. The linear subcontrollers are then interpolated or scheduled to form a gain-scheduled controller that realizes acceptable performance, ideally with closed-loop stability guarantees as well.

Recently, the stability of gain-scheduled controllers has been studied through the lens of passivity~\cite{Damaren_passive_map, Forbes_Damaren, Forbes_Damaren_Stability}, conicity~\mbox{\cite{Forbes_thesis, Ryan_conic}}, and dissipativity~\cite{QSR}. The gain-scheduling architecture in \mbox{\Cref{fig:scalar_scheduling}} was first introduced in~\cite{Damaren_passive_map} where it was shown that the gain-scheduling of input strictly passive (ISP) subcontrollers, as per \Cref{fig:scalar_scheduling}, results in an overall ISP gain-scheduled controller. Using the same gain-scheduling architecture, the results of~\cite{Damaren_passive_map} were further extended to very strictly passive (VSP), finite \(\mathcal{L}_2\) gain, and conic systems in~\cite{Forbes_Damaren, Forbes_Damaren_Stability, Forbes_thesis, Ryan_conic}. By leveraging the passivity and conic sector theorems, respectively, the gain-scheduling results of~\cite{Damaren_passive_map, Forbes_Damaren, Forbes_Damaren_Stability, Forbes_thesis, Ryan_conic} are used to guarantee \(\mathcal{L}_2\) stability of robotic and aerospace systems. More recently, the gain-scheduling of QSR-dissipative subcontrollers, as per \Cref{fig:scalar_scheduling}, was considered in~\cite{QSR}. The results of~\cite{QSR} are of particular interest, as subsystems to be gain-scheduled are not required to have identical QSR-dissipative properties, nor be square. Alternate scalar-gain-scheduling architectures, accounting for actuator saturation~\cite{Walsh_saturation} and affine parameter dependence~\cite{Alex_LPV}, have also been studied within the context of passivity-based control.

The gain-scheduling architectures presented in~\cite{Damaren_passive_map, Forbes_Damaren, Forbes_Damaren_Stability, Forbes_thesis, Walsh_saturation, Alex_LPV, QSR, Ryan_conic} all involve \emph{scalar scheduling signals}. As shown in \Cref{fig:scalar_scheduling}, scalar scheduling signals, denoted \(s_i\), scale the entire input-output map of the subsystems based on the same scheduling signal and its parameters. When controlling multiple-input multiple-output (MIMO) systems, each control variable may require a different scheduling profile to switch between subcontrollers. In~\cite{Damaren_GS_Row_Signals}, a row of scalar scheduling signals is used to introduce a notion of extended passive systems. However, the scheduling of special classes of passive systems, such as ISP, output strictly passive (OSP), finite \(\mathcal{L}_2\) gain, VSP, and conic systems is not considered in~\cite{Damaren_GS_Row_Signals}. More recently in~\cite{moalemi_forbes_ccta}, the gain-scheduling of VSP subcontrollers using \emph{scheduling matrices} was shown to result in an overall gain-scheduled VSP controller. The extension of scheduling rows to scheduling matrices allows for coupling between the scheduling signals of different control variables through off-diagonal terms. Nevertheless, the gain-scheduling architecture presented in~\cite{moalemi_forbes_ccta} is limited to VSP systems and cannot be used for the scheduling of a broader class of QSR-dissipative systems. This paper presents the generalized matrix-gain-scheduling architecture in \Cref{fig:matrix_scheduling} that can be used to schedule a broader class of QSR-dissipative systems. Specifically, the contributions of this paper are as follows:
\begin{enumerate}
    \item{%
        Presenting a generalized matrix-gain-scheduling architecture along with a detailed discussion on the \emph{design} and \emph{construction} of scheduling matrices.
    }%
    \item{%
        Extending the gain-scheduling results of~\cite{QSR, moalemi_forbes_ccta} by considering the scheduling of a broader class of QSR-dissipative systems using the proposed matrix-gain-scheduling architecture. The scheduling of important special cases of QSR-dissipativity, such as passive, ISP, OSP, finite $\mathcal{L}_2$ gain, VSP, and conic systems are shown to result in an overall gain-scheduled system of the same special case.
    }%
    \item{%
        Presenting a linear matrix inequality (LMI)-based QSR-dissipative control synthesis technique that renders the feedback system asymptotically stable. This proposed technique is inspired by the LMI-based QSR-dissipative control synthesis technique in~\cite{QSR} and the linear-quadratic-Gaussian (LQG) design in~\cite{Benhabib}.
    }%
\end{enumerate}

The remainder of this paper is as follows. Notation and preliminaries are presented in \Cref{sec:preliminaries}. The gain-scheduling architecture, including properties and construction of scheduling matrices, is introduced in \Cref{sec:gain_scheduling_architecture}. Two theorems classifying the QSR-dissipativity of a matrix-scheduled system composed of $N$ QSR-dissipative subsystems are discussed in \Cref{sec:main_results}. In \Cref{Discussion}, the special cases of QSR-dissipativity are used to compare the results of the proposed theorems to existing results in the literature. An application of the proposed gain-scheduling architecture, complete with the proposed LMI-based QSR-dissipative control synthesis technique is presented in \Cref{sec:simulation}, followed by closing remarks in \Cref{sec:closing_remarks}.
    \section{Preliminaries} \label{sec:preliminaries}
\subsection{Notation}
    Scalars are denoted \(\alpha \in \mathbb{R}\), matrices are denoted \(\mbf{A} \in \mathbb{R}^{m \times n}\), and column matrices are denoted \(\mbf{v} \in \mathbb{R}^{n}\). The identity and zero matrices are \(\eye\) and \(\mbf{0} \), respectively. The set \(\mathbb{S}^n\) denotes the set of real symmetric matrices of size \(n \times n\). The set of symmetric negative semidefinite matrices is denoted as \(\mathbb{S}^{n}_{-}\), the set of symmetric negative definite matrices is denoted as \(\mathbb{S}^{n}_{--}\), the set of symmetric positive semidefinite matrices is denoted as \(\mathbb{S}^{n}_{+}\), and the set of symmetric positive definite matrices is denoted as \(\mathbb{S}^{n}_{++}\)~\cite{boyd2004convex}. The maximum eigenvalue and singular value of \(\mbf{A}\) are denoted as \(\lambda_{\max}(\mbf{A})\) and \(\sigma_{\max}(\mbf{A})\), respectively. The notation $\diag(\cdot)$ denotes a block diagonal matrix of its arguments. Operators are denoted by \(\bm{\mathcal{G}}\), and index sets are denoted by \(\mathcal{N}\).
\subsection{Definitions}
    \begin{definition}[Induced Matrix Norm {\mycite{Zhou_Robust_Control}}]
        Let \(\mbf{A} \in \mathbb{R}^{m \times n}\), then the matrix norm induced by a vector \(p\)-norm is defined as \(\mleft\| \mbf{A} \mright\|_{p} = \sup_{\mbf{x} \neq \mbf{0}} \mleft\| \mbf{A}\mbf{x} \mright\|_{p}/\mleft\| \mbf{x} \mright\|_{p}\). The special case of \(p = 2\) can be written as \(\mleft\| \mbf{A} \mright\|_{2} = \sqrt{\lambda_{\max}(\mbf{A}^{\trans}\mbf{A})} = \sigma_{\max}(\mbf{A})\).
    \end{definition}
    \vspace{3pt}
    \begin{definition}[Truncated Signal {\mycite{Marquez}}]
        Given a signal \(\mbf{u} : \mathbb{R}_{\geq 0} \to \mathbb{R}^{n} \), the truncated signal, \(\mbf{u}_T\), is defined as \(\mbf{u}_T(t) = \mbf{u}(t)\) for \(0 \leq t \leq T \) and \(\mbf{u}_T(t) = \mbf{0}\) for \(t > T \in \mathbb{R}_{\geq 0}\).
    \end{definition}
    \vspace{3pt}
    \begin{definition}[Truncated Inner Product {\mycite{Marquez}}]
        Given signals \(\mbf{u},\,\mbf{y} : \mathbb{R}_{\geq 0} \to \mathbb{R}^{n}\), the truncated  inner product is defined as \(\left\langle \mbf{u}, \mbf{y}\right\rangle_T = \left\langle\mbf{u}_T, \mbf{y}_T\right\rangle = \int_{0}^{T} \mbf{u}^{\trans}(t) \mbf{y}(t)\,\dt,\;\forall T \in \mathbb{R}_{\geq 0} \).
    \end{definition}
    \vspace{3pt}
    \begin{definition}[$\mathcal{L}_p$ Signal Spaces {\mycite{Marquez}}] \label{def:Lp_spaces}
        Given a piecewise continuous signal \(\mbf{u} : \mathbb{R}_{\geq 0} \to \mathbb{R}^{n} \), \(\mbf{u} \in \mathcal{L}_{2e}\) if \(\mleft\| \mbf{u} \mright\|^{2}_{2T} = \left\langle \mbf{u}, \mbf{u}\right\rangle_T < \infty\), \(\forall T \in \mathbb{R}_{\geq 0}\). Additionally, \(\mbf{u} \in \mathcal{L}_{\infty}\) if \(\mleft\| \mbf{u} \mright\|_{\infty} = \sup_{t \in \mathbb{R}_{\geq 0}} \max_{i=1, \ldots, n} \lvert u_i(t) \rvert < \infty\). 
    \end{definition}
    \vspace{3pt}
    \begin{definition}[QSR-Dissipativity~\cite{QSR, Hill_Moylan_QSR}] \label{def:QSR-Dissipativity}
        Consider a causal continuous-time system, \(\bm{\mathcal{G}} : \mathcal{L}_{2e} \rightarrow \mathcal{L}_{2e} \), defined by
        \begin{align*}
            \dot{\mbf{x}}(t) &= \mbf{f}(\mbf{x}(t), \mbf{u}(t)), 
            &
            \mbf{y}(t) &= \mbf{h}(\mbf{x}(t), \mbf{u}(t)),
        \end{align*}
        where \(\mbf{x} \in \mathbb{R}^{n_x}\), \(\mbf{y} \in \mathbb{R}^{n_y}\), and \(\mbf{u} \in \mathbb{R}^{n_u}\). The system \(\bm{\mathcal{G}}\) is QSR-dissipative with matrices \(\mbf{Q} \in \mathbb{S}^{n_y}\), \(\mbf{S} \in \mathbb{R}^{n_y \times n_u}\), and \(\mbf{R} \in \mathbb{S}^{n_u}\) if for all \(\mbf{u} \in \mathcal{L}_{2e}\) and \(T \in \mathbb{R}_{\geq 0}\) there exists a storage function \(V : \mathbb{R}^{n_x} \to \mathbb{R}_{\geq 0}\) such that
        \begin{equation*}\label{eqn:QSR-dissipativity}
            \left\langle \mbf{y}, \mbf{Q} \mbf{y} \right\rangle_T
            + 
            2 \left\langle \mbf{y}, \mbf{S} \mbf{u} \right\rangle_T
            + 
            \left\langle \mbf{u}, \mbf{R} \mbf{u} \right\rangle_T
            \geq 
            V(\mbf{x}(T)) - V(\mbf{x}(0)).
        \end{equation*}
        The special cases of QSR-dissipativity are
        \begin{samepage}
            \begin{itemize}
                \item{%
                    \emph{passive} if \(\mbf{Q} = \mbf{0}\), \(\mbf{S} = \frac{1}{2}\mbf{1}\), and \(\mbf{R} = \mbf{0}\),
                }%
                \item{%
                    \emph{input strictly passive (ISP)} if \(\mbf{Q} = \mbf{0}\), \(\mbf{S} = \frac{1}{2}\mbf{1}\), and \(\mbf{R} = -\delta \mbf{1}\), where \(\delta \in \mathbb{R}_{>0}\),
                }%
                \item{%
                    \emph{output strictly passive (OSP)} if \(\mbf{Q} = -\varepsilon\mbf{1}\), \(\mbf{S} = \frac{1}{2}\mbf{1}\), and \(\mbf{R} = \mbf{0} \), where \(\varepsilon \in \mathbb{R}_{>0}\),
                }%
                \item{%
                    \emph{finite \(\mathcal{L}_{2}\) gain} if \(\mbf{Q} = -\mbf{1}\), \(\mbf{S} = \mbf{0}\), and \(\mbf{R} = \gamma^2 \mbf{1}\), where \(\gamma \in \mathbb{R}_{>0}\),
                }%
                \item{%
                    \emph{very strictly passive (VSP)} if \(\mbf{Q} = -\varepsilon\mbf{1}\), \(\mbf{S} = \frac{1}{2}\mbf{1}\), and \(\mbf{R} = -\delta \mbf{1}\), where \(\delta , \varepsilon \in \mathbb{R}_{>0}\), and
                }%
                \item{%
                    \emph{conic} if \(\mbf{Q} = -\mbf{1}\), \(\mbf{S} = \frac{a + b}{2} \mbf{1} = c\mbf{1}\), and \(\mbf{R} = -ab\mbf{1} = (r^2 - c^2)\mbf{1}\), where \(a, b \in \mathbb{R}\) represent lower and upper conic bounds, while \(c \in \mathbb{R}\) and \(r \in \mathbb{R}_{>0}\) represent the center and radius of the conic sector, respectively.
                }%
            \end{itemize}
        \end{samepage}
    \end{definition}
    \section{Matrix-Gain-Scheduling Architecture} \label{sec:gain_scheduling_architecture}
\begin{figure}[t]
    \vspace{2pt}
    \centering
    \includegraphics{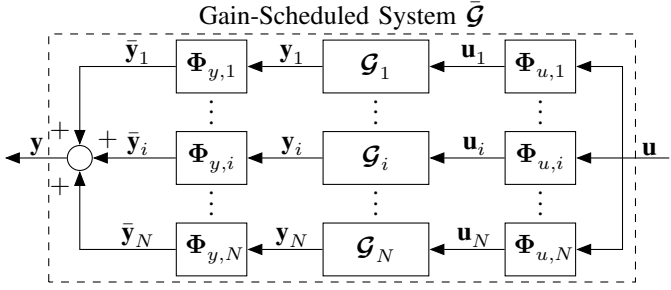}
    \vspace{-10pt}
    \caption{Gain-scheduled system $\bar{\bm{\mathcal{G}}}$, composed of $N$ parallel QSR-dissipative subsystems. The input and output of each subsystem are scheduled as per \cref{eq:GS_QSR_io} via the matrix multiplication between the scheduling matrices \(\mbs{\Phi}_{u, {\it i}}(t)\) and \(\mbs{\Phi}_{y, {\it i}}(t)\), and their corresponding signals \(\mbf{u}(t)\) and \(\mbf{y}_{{\it i}}(t)\), respectively.}
    \label{fig:GS_G}
    \vspace{-5pt}
\end{figure}
This section presents the matrix-gain-scheduling architecture in \Cref{fig:GS_G}. The scheduling functions to be designed take the form of scheduling matrices, allowing for greater design freedom compared to the scalar scheduling signals in~\cite{Damaren_passive_map, Forbes_Damaren, Forbes_Damaren_Stability, Forbes_thesis, Walsh_saturation, Alex_LPV, QSR, Ryan_conic}.%

Consider the gain-scheduled system, $\bar{\bm{\mathcal{G}}}$, in \Cref{fig:GS_G}, composed of $N$ parallel QSR-dissipative subsystems, $\bm{\mathcal{G}}_1, \ldots, \bm{\mathcal{G}}_N$. The subsystems could be linear or nonlinear, and their input-output maps are given by \(\mbf{y}_i(t) = \bm{\mathcal{G}}_i \mleft( \mbf{u}_i(t)\mright)\) for \mbox{\(i \in \mathcal{N} = \{1, \ldots, N \}\)}. The input and output of each subsystem are scheduled in a multiplicative manner such that
\begin{subequations} \label{eq:GS_QSR_io}
    \begin{align} 
        \mbf{u}_i(t) &= \mbs{\Phi}_{u, i}(t) \mbf{u}(t), \label{eq:GS_io_u_i}
        \\
        \bar{\mbf{y}}_i(t) &= \mbs{\Phi}_{y, i} (t) \mbf{y}_{i}^{}(t), \label{eq:GS_io_y_i}
    \end{align}
\end{subequations}
% $\mbf{u}_i, \mbf{u} \in \mathbb{R}^{n_u}$, $\mbf{y}_i, \mbf{y} \in \mathbb{R}^{n_y}$,
where $\mbs{\Phi}_{u, i}^{} \in \mathbb{R}^{n_u \times n_u}$ and $\mbs{\Phi}_{y, i}^{} \in \mathbb{R}^{n_y \times n_y}$ are the scheduling matrices, $\mbf{u}_i, \mbf{u} \in \mathbb{R}^{n_u}$, and $\mbf{y}_i, \mbf{y} \in \mathbb{R}^{n_y}$, for all $i \in \mathcal{N}$. The gain-scheduled system, $\bar{\bm{\mathcal{G}}}$, input-output map can be written in terms of the individual subsystems inputs, outputs, and scheduling matrices as
\begin{align} \label{eq:GS_io_y_c}
    \mbf{y}(t) = \sum_{i \in \mathcal{N}} \bar{\mbf{y}}_i(t) 
    &= \sum_{i \in \mathcal{N}}\mbs{\Phi}_{y, i} (t) \mbf{y}_{i}^{}(t)\nonumber \\%
    &= \sum_{i \in \mathcal{N}}\mbs{\Phi}_{y, i} (t)\bm{\mathcal{G}}_i \mleft( \mbf{u}_i(t)\mright)\nonumber \\%
    &= \sum_{i \in \mathcal{N}} \mbs{\Phi}_{y, i} (t) \bm{\mathcal{G}}_{i}\mleft(\mbs{\Phi}_{u, i}(t) \mbf{u}(t)\mright).
\end{align}

Based on the QSR-dissipativity property in \Cref{def:QSR-Dissipativity}, the input and output of each subsystem must satisfy
\begin{equation}
        \hspace{-5pt}
        \left\langle \mbf{y}_i, \mbf{Q}_i \mbf{y}_i \right\rangle_T
        + 
        2 \left\langle \mbf{y}_i, \mbf{S}_i \mbf{u}_i \right\rangle_T
        + 
        \left\langle \mbf{u}_i, \mbf{R}_i \mbf{u}_i \right\rangle_T
        \geq 
        \tilde{V}_i, \, \forall T \in \mathbb{R}_{\geq 0}, \label{eq:QSR-Dissipativity_i}
\end{equation} 
with \(\tilde{V}_i = V_i(\mbf{x}(T)) - V_i(\mbf{x}(0))\), \(\mbf{Q} \in \mathbb{S}^{n_y}\), \(\mbf{S} \in \mathbb{R}^{n_y \times n_u}\), and \(\mbf{R} \in \mathbb{S}^{n_u}\) for all \(i \in \mathcal{N}\) and \(\mbf{u}_{i} \in \mathcal{L}_{2e}\).

The objective is to \emph{design} the scheduling matrices $\mbs{\Phi}_{u, i}(t)$ and $\mbs{\Phi}_{y, i}(t)$ for $i \in \mathcal{N}$, such that the gain-scheduling of the $N$ QSR-dissipative subsystems results in an overall gain-scheduled QSR-dissipative system. Furthermore, if all subsystems belong to the same special case of QSR-dissipativity as defined in \Cref{def:QSR-Dissipativity}, the scheduling matrix design should result in the overall gain-scheduled system belonging to that same special case.

% ****************************************************************************************************************** %
% ****************************************** Scheduling Matrices **************************************************** %
% ****************************************************************************************************************** %
\subsection{Scheduling Matrix Properties} \label{subsec:scheduling_matrix_properties}
Consider the set of scheduling matrices \(\mbs{\Phi}_{j, i}(t) \in \mathbb{R}^{n_j \times n_j}\) for \(j \in \mleft\{u, y\mright\}\) and \(i \in \mathcal{N}\). With abuse of set notation, denote the time dependent set, $\mathcal{F}_{j}(t)$, as the index set of all the respective full rank scheduling matrices at time $t \in \mathbb{R}_{\geq 0}$. That is, for \(t \in \mathbb{R}_{\geq 0}\) and \(j \in \mleft\{u, y\mright\}\), 
\begin{equation} \label{eqn:Fj(t)}
    \mathcal{F}_j(t) = \mleft\{\,i \in \mathcal{N} \mid \rank \mleft(\mbs{\Phi}_{j, i}(t) \mright) = n_j \,\mright\}.
\end{equation}

\begin{definition}[Active Scheduling Matrices]
    Given a gain-scheduled system of the type shown in \Cref{fig:GS_G} with scheduling matrices $\mbs{\Phi}_{j, i}(t)\in \mathbb{R}^{n_j \times n_j}$ for \(j \in \{u, y\}\) and $i \in \mathcal{N}$, the scheduling matrices are said to be
    \begin{itemize}
        \item \emph{active} if at all times, there exists at least one nonzero scheduling matrix, meaning $\forall t \in \mathbb{R}_{\geq 0}$, \(\exists i \in \mathcal{N}\) such that \(\mbs{\Phi}_{j, i}(t) \neq \mbf{0}\) and
        \item \emph{strongly active} if at all times, there exists at least one full rank scheduling matrix, meaning $\forall t \in \mathbb{R}_{\geq 0}$, \(\exists i \in \mathcal{N}\) such that \(\rank \mleft(\mbs{\Phi}_{j, i}(t) \mright) = n_j\). Using set notation, this can be written as $\forall t \in \mathbb{R}_{\geq 0}$, \(\mathcal{F}_j(t) \neq \varnothing\).
    \end{itemize}
\end{definition}
When considering the special cases of QSR-dissipativity in \Cref{Discussion}, the active and strongly active cases of scheduling matrices are invoked to correctly quantify the overall QSR-dissipativity of the gain-scheduled system. 

For the remainder of this paper the input and output scheduling matrices are assumed to be bounded in the sense that
\begin{align} \label{eqn:scheduling_matrix_boundedness}
    \sup_{t \in \mathbb{R}_{\geq 0}}\mleft\| \mbs{\Phi}_{j, i}(t) \mright\|_2 
    = \sup_{t \in \mathbb{R}_{\geq 0}} \sigma_{j,i}(t) = \mleft\| \sigma_{j,i} \mright\|_{\infty} < \infty,
\end{align} 
for all \(j \in \{u, y\}\) and \( i \in \mathcal{N}\), where \(\sigma_{j,i}(t)\) is the largest singular value of \(\mbs{\Phi}_{j, i}(t)\). This can be thought of as an extension of the \(\mathcal{L}_{\infty}\) space in \Cref{def:Lp_spaces} for piecewise continuous matrix functions \(\mbs{\Phi}_{j, i} : \mathbb{R}_{\geq 0} \to \mathbb{R}^{n_j \times n_j}\), for \(j \in \{u, y\}\) and \( i \in \mathcal{N}\). Finally, for simplicity, the scheduling matrices are assumed to be explicitly time-dependent, though a similar analysis can be used to account for dependence on other signals or parameters.

\subsection{Scheduling Matrix Construction} \label{subsec:scheduling_matrix_construction}
This subsection presents the construction of the scheduling matrices such that they satisfy the following pseudo-commutativity condition.
\begin{definition}[Pseudo-Commuting Scheduling Matrices]\label{def:pseudo_commutativity}
    Consider a gain-scheduled system of type shown in \Cref{fig:GS_G} with scheduling matrices $\mbs{\Phi}_{u, i}(t)\in \mathbb{R}^{n_u \times n_u}$ and $\mbs{\Phi}_{y, i}(t)\in \mathbb{R}^{n_y \times n_y}$ for $i \in \mathcal{N}$. Given a series of matrices \(\mbf{S}_i \in \mathbb{R}^{n_y \times n_u}\) for $i \in \mathcal{N}$, the scheduling matrices are said to \emph{pseudo-commute with $\mbf{S}_i$}, if they satisfy
    \begin{equation} \label{eqn:commutativity}
        \mbs{\Phi}_{y, i}^{\trans}(t)  \mbf{S}_i = \mbf{S}_i \mbs{\Phi}_{u, i}(t), \quad \forall t \in \mathbb{R}_{\geq0}, \forall i \in \mathcal{N}.
    \end{equation}
\end{definition}
\vspace{5pt}
In \Cref{sec:main_results}, for each QSR-dissipative subsystem $\bm{\mathcal{G}}_{i}$ satisfying \cref{eq:QSR-Dissipativity_i} with $\mbf{Q}_i$, $\mbf{S}_i$, and $\mbf{R}_i$, this pseudo-commutativity condition is invoked by assuming the corresponding scheduling matrices of $\bm{\mathcal{G}}_{i}$ pseudo-commute with $\mbf{S}_i$. It will now be shown that the scheduling matrices can be constructed to satisfy the pseudo-commutativity condition in \cref{eqn:commutativity} for any matrix $\mbf{S}_i \in \mathbb{R}^{n_y \times n_u}$.

Consider an arbitrary matrix \(\mbf{S}_i \in \mathbb{R}^{n_y \times n_u}\) and the corresponding input scheduling matrix $\mbs{\Phi}_{u, i}(t)\in \mathbb{R}^{n_u \times n_u}$ and output scheduling matrix $\mbs{\Phi}_{y, i}(t)\in \mathbb{R}^{n_y \times n_y}$.
For the trivial case of \(\mbf{S}_i = \mbf{0}\), the $n_u^2 + n_y^2$ entries of the input and output scheduling matrices can be independently designed to pseudo-commute with \(\mbf{S}_i = \mbf{0}\). Now, consider an arbitrary nonzero matrix \(\mbf{S}_i \in \mathbb{R}^{n_y \times n_u}\) and its singular value decomposition~(SVD) given by
\begin{equation} \label{eqn:SVD_of_S_i}
    \mbf{S}_i = \mbf{U}_i 
    \begin{bmatrix}
        \mbs{\Sigma}_{1, i} & \mbf{0} \\%
        \mbf{0}             & \mbf{0}
    \end{bmatrix}
    \mbf{V}_i^{\trans},
\end{equation}
where \(\mbf{U}_i \in \mathbb{R}^{n_y \times n_y}\), \(\mbs{\Sigma}_{1, i} \in \mathbb{S}^{\varrho_{i}}\), \(\mbf{V}_i \in \mathbb{R}^{n_u \times n_u}\), and \mbox{\(\varrho_{i} = \rank\mleft( \mbf{S}_i \mright) \geq 1\)}. As shown in \Cref{lemma:AS_eq_SB} found in the Appendix, the scheduling matrices $\mbs{\Phi}_{u, i}(t)\in \mathbb{R}^{n_u \times n_u}$ and $\mbs{\Phi}_{y, i}(t)\in \mathbb{R}^{n_y \times n_y}$ satisfy the pseudo-commutativity condition in \cref{eqn:commutativity} if and only if there exists \(\mbf{Z}_{11, i}(t)\), \(\mbf{Z}_{21, i}(t)\), \(\mbf{Z}_{22, i}(t)\), \(\mbf{W}_{21, i}(t)\), and \(\mbf{W}_{22, i}(t)\), with appropriate dimensions, such that
\begin{subequations}\label{eqn:scheduling_matrices_for_general_case}
    \begin{align}
        \mbs{\Phi}_{u, i}(t) &= 
        \mbf{V}_{i} 
        \begin{bmatrix}
            \mbf{Z}_{11, i}(t) & \mbf{0} \\%
            \mbf{Z}_{21, i}(t) & \mbf{Z}_{22, i}(t)
        \end{bmatrix}
        \mbf{V}_{i}^{\trans},
        \\%
        \mbs{\Phi}_{y, i}(t) &= 
        \mbf{U}_{i} 
        \begin{bmatrix}
            \mbs{\Sigma}_{1, i}^{-1} \mbf{Z}_{11, i}^{\trans}(t) \mbs{\Sigma}_{1, i} & \mbf{0}  \\%
            \mbf{W}_{21, i}(t)                                                & \mbf{W}_{22, i}(t)
        \end{bmatrix}
        \mbf{U}_{i}^{\trans}.
    \end{align}
\end{subequations}
Therefore, for any arbitrary matrix \(\mbf{S}_i \in \mathbb{R}^{n_y \times n_u}\) with \mbox{\(\varrho_{i} = \rank\mleft( \mbf{S}_i \mright) \geq 1\)}, the scheduling matrices can be designed using the \(n_{u}^2 + n_{y}^2 + \varrho_{i}^2 - \varrho_{i}(n_u + n_y)\) entries of the independent design variables \(\mbf{Z}_{11, i}(t)\), \(\mbf{Z}_{21, i}(t)\), \(\mbf{Z}_{22, i}(t)\), \(\mbf{W}_{21, i}(t)\), and \(\mbf{W}_{22, i}(t)\) to satisfy the pseudo-commutativity condition in \cref{eqn:commutativity}.

It is also useful to consider the implications of the pseudo-commutativity condition in \cref{eqn:commutativity} for the special case of a square matrix \(\mbf{S}_i \in \mathbb{R}^{n \times n}\). If \(\mbf{S}_i\) is square and full rank, the scheduling matrices can be designed using the similarity transformation
\begin{equation}
    \mbs{\Phi}_{u, i} (t) = \mbf{S}_i^{-1} \mbs{\Phi}_{y, i}^{\trans} (t)\mbf{S}_i, \quad \forall t \in \mathbb{R}_{\geq0}, \forall i \in \mathcal{N},
\end{equation}
where \(\mbs{\Phi}_{y, i}(t) \in \mathbb{R}^{n \times n}\) are free design variables. Additionally, since similarity transformations preserve eigenvalues, the input scheduling matrix can be made full rank by requiring the output scheduling matrix to be full rank. For the case of \(\mbf{S}_i = c_i \eye\) with \(c_i \in \mathbb{R} \setminus \mleft\{ 0 \mright\}\), the pseudo-commutativity in~\cref{eqn:commutativity} simplifies to \(\mbs{\Phi}_{u, i} (t) = \mbs{\Phi}_{y, i}^{\trans} (t)\), which is exactly the scheduling-matrix architecture presented in~\cite{moalemi_forbes_ccta}.
    \section{Main Contribution} \label{sec:main_results}
This section presents the QSR-dissipativity of the matrix-gain-scheduled system in \Cref{fig:GS_G} for two cases.
\begin{itemize}
    \item Case 1: All \(N\) subsystems are QSR-dissipative with \mbox{\(\mbf{Q}_i \in \mathbb{S}_{--}^{n_y}\)}.
    \item Case 2: All \(N\) subsystems are QSR-dissipative with \mbox{\(\mbf{Q}_i \in \mathbb{S}_{-}^{n_y}\)} and share a common \(\mbf{S}_i = \mbf{S} \in \mathbb{R}^{n_y \times n_u}\).
\end{itemize}
Doing so extends the cases discussed in~\cite{QSR} while generalizing the scheduling approach by allowing the scheduling signals to be scheduling matrices. 
    % ****************************************************************************************************************** %
    % ************************************************ QSR ******************************************************** %
    % ****************************************************************************************************************** %

    \subsection{Case 1: All \texorpdfstring{\(\mathit{N}\)}{N} Subsystems are QSR-Dissipative with \texorpdfstring{\(\mbf{Q}_\mathit{i} \in \mathbb{S}_{--}^{\mathit{n_y}}\)}{Q < 0}}
    \begin{figure}[t]
        \centering
        \vspace{1pt}
        \includegraphics{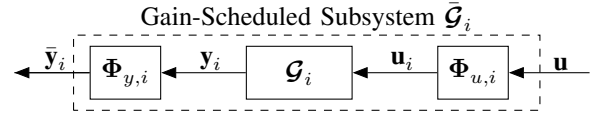}
        \caption{Gain-scheduling of the \(i^{th}\) subsystem, \(\bm{\mathcal{G}}_i\). The input and output of \(\bm{\mathcal{G}}_i\) are scheduled as per \cref{eq:GS_QSR_io} via the matrix multiplication between the scheduling matrices \(\mbs{\Phi}_{u, {\it i}}(t)\) and \(\mbs{\Phi}_{y, {\it i}}(t)\), and their corresponding signals \(\mbf{u}(t)\) and \(\mbf{y}_{{\it i}}(t)\), respectively.}
        \label{fig:GS_Gi}
        \vspace{-5pt}
    \end{figure}
    \begin{lemma} \label{lemma:QSR-Dissipativity_of_subsystem_i}
        Consider the subsystem \(\bm{\mathcal{G}}_i\) in \Cref{fig:GS_Gi} being QSR-dissipative with \(\mbf{Q}_i \in \mathbb{S}_{--}^{n_y}\), \(\mbf{S}_i \in \mathbb{R}^{n_y \times n_u}\), and \(\mbf{R}_i \in \mathbb{S}^{n_u} \). The gain-scheduled subsystem \(\bar{\bm{\mathcal{G}}}_i\) is QSR-dissipative, provided its corresponding scheduling matrices pseudo-commute \mbox{with \(\mbf{S}_i\)}.
    \end{lemma}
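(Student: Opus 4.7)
My plan is to start from the QSR-dissipativity inequality \cref{eq:QSR-Dissipativity_i} for $\bm{\mathcal{G}}_i$ and rewrite its three quadratic terms so that the input signal becomes $\mbf{u}$ and the output signal becomes $\bar{\mbf{y}}_i$, thereby producing a constant QSR-triple $(\bar{\mbf{Q}}_i, \bar{\mbf{S}}_i, \bar{\mbf{R}}_i)$ for $\bar{\bm{\mathcal{G}}}_i$ with the same storage function $V_i$.

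First I would substitute $\mbf{u}_i = \mbs{\Phi}_{u,i} \mbf{u}$ from \cref{eq:GS_io_u_i} into the cross term and apply the pseudo-commutativity hypothesis $\mbs{\Phi}_{y,i}^{\trans} \mbf{S}_i = \mbf{S}_i \mbs{\Phi}_{u,i}$ of \cref{eqn:commutativity}. This turns $\left\langle \mbf{y}_i, \mbf{S}_i \mbf{u}_i \right\rangle_T$ into $\left\langle \mbs{\Phi}_{y,i} \mbf{y}_i, \mbf{S}_i \mbf{u} \right\rangle_T = \left\langle \bar{\mbf{y}}_i, \mbf{S}_i \mbf{u} \right\rangle_T$, reading off $\bar{\mbf{S}}_i = \mbf{S}_i$ exactly, with no inequality incurred. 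Pseudo-commutativity is precisely what makes this step an equality; without it, there is no clean way to extract a constant $\bar{\mbf{S}}_i$ from the cross term.

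Second, I would use the boundedness assumption \cref{eqn:scheduling_matrix_boundedness} to dominate the two remaining quadratic forms by constant symmetric matrices. For the $\mbf{R}_i$ term, the time-varying congruence $\mbs{\Phi}_{u,i}^{\trans}(t) \mbf{R}_i \mbs{\Phi}_{u,i}(t)$ can be bounded above pointwise by a constant $\bar{\mbf{R}}_i \in \mathbb{S}^{n_u}$, for example $\bar{\mbf{R}}_i = \|\mbf{R}_i\|_2 \|\sigma_{u,i}\|_\infty^2 \eye$, giving $\left\langle \mbf{u}_i, \mbf{R}_i \mbf{u}_i \right\rangle_T \leq \left\langle \mbf{u}, \bar{\mbf{R}}_i \mbf{u} \right\rangle_T$ after time integration. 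For the $\mbf{Q}_i$ term, the hypothesis $\mbf{Q}_i \in \mathbb{S}^{n_y}_{--}$ lets me pick a constant $\bar{\mbf{Q}}_i \in \mathbb{S}^{n_y}$ satisfying $\mbs{\Phi}_{y,i}^{\trans}(t) \bar{\mbf{Q}}_i \mbs{\Phi}_{y,i}(t) - \mbf{Q}_i \succeq 0$ for all $t$; a concrete choice, again using boundedness of $\sigma_{y,i}$, is $\bar{\mbf{Q}}_i = (\lambda_{\max}(\mbf{Q}_i)/\|\sigma_{y,i}\|_\infty^2) \eye$, which remains negative definite. Pre- and post-multiplying by $\mbf{y}_i^{\trans}$ and $\mbf{y}_i$ and integrating then yields $\left\langle \mbf{y}_i, \mbf{Q}_i \mbf{y}_i \right\rangle_T \leq \left\langle \bar{\mbf{y}}_i, \bar{\mbf{Q}}_i \bar{\mbf{y}}_i \right\rangle_T$.

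Summing the three bounds and chaining with \cref{eq:QSR-Dissipativity_i} produces the QSR-dissipativity inequality for $\bar{\bm{\mathcal{G}}}_i$ with triple $(\bar{\mbf{Q}}_i, \bar{\mbf{S}}_i, \bar{\mbf{R}}_i)$ and storage function $V_i$, which is the claim. The hard part I anticipate is the $\mbf{Q}_i$ step: because $\mbs{\Phi}_{y,i}$ is not assumed full rank, $\mbf{y}_i$ cannot generally be recovered from $\bar{\mbf{y}}_i$ by inversion, so a direct change of variables is unavailable. The hypothesis $\mbf{Q}_i \prec 0$ is precisely what rescues this step, supplying enough negative slack that the rank deficiency of $\mbs{\Phi}_{y,i}$ can be absorbed into the right-hand side while keeping $\bar{\mbf{Q}}_i$ symmetric, time-independent, and definite; the cross-term rewrite and the $\mbf{R}_i$ bound are then essentially automatic consequences of pseudo-commutativity and scheduling-matrix boundedness.
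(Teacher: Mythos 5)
Your proposal is correct and follows essentially the same route as the paper: the pseudo-commutativity identity disposes of the cross term exactly as in \cref{eqn:QSR_yi_to_ybar_for_Si}, and your \(\mbf{Q}_i\)- and \(\mbf{R}_i\)-bounds are the same Rayleigh/singular-value estimates the paper uses, differing only in that you keep the storage function \(V_i\) unscaled (so \(\bar{\mbf{S}}_i = \mbf{S}_i\) instead of the paper's \(\bar{\sigma}_{y,i}^2\mbf{S}_i\) with storage \(\bar{\sigma}_{y,i}^2 V_i\)) and take a coarser, sign-agnostic bound \(\|\mbf{R}_i\|_2\|\sigma_{u,i}\|_{\infty}^2\eye\) where the paper case-splits on the sign of \(\lambda_{\max}(\mbf{R}_i)\) in \cref{eqn:QSR_delta_i} to preserve negativity for the later ISP/VSP special cases. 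The one caveat is that your \(\bar{\mbf{Q}}_i\) divides by \(\|\sigma_{y,i}\|_{\infty}^2\), which requires \(\mbs{\Phi}_{y,i}\not\equiv\mbf{0}\); the paper makes the same implicit assumption and justifies it by design.
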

    \begin{proof}
        Applying the Rayleigh inequality to the QSR-dissipativity property in \cref{eq:QSR-Dissipativity_i} yields
        \begin{equation} \label{eqn:QSR_per_subsystem_expanded}
            \hspace{-4pt}
            \tilde{V}_i
            \leq
            \lambda_{\max}\mleft( \mbf{Q}_i \mright) \mleft\| \mbf{y}_{i}\mright\|_{2T}^{2} 
            + 2 \left\langle \mbf{y}_i, \mbf{S}_i \mbf{u}_i \right\rangle_T
            + \lambda_{\max}\mleft( \mbf{R}_i \mright) \mleft\| \mbf{u}_{i} \mright\|_{2T}^{2},
        \end{equation}
        where \(\lambda_{\max}\mleft( \mbf{Q}_i \mright) \in \mathbb{R}_{<0}\) and \(\lambda_{\max}\mleft( \mbf{R}_i \mright) \in \mathbb{R}\) since \mbox{\(\mbf{Q}_i \in \mathbb{S}_{--}^{n_y}\)} and \(\mbf{R}_{i} \in \mathbb{S}^{n_y}\), respectively.
        Moreover, based on the pseudo-commutativity condition in \Cref{def:pseudo_commutativity}, it follows that \(\mbs{\Phi}_{y, i}^{\trans}(t) \mbf{S}_i = \mbf{S}_i \mbs{\Phi}_{u, i}(t)\) for all \(t \in \mathbb{R}_{\geq0}\). Consider the term \(\left\langle \mbf{y}_i, \mbf{S}_i \mbf{u}_i \right\rangle_T\) in \cref{eqn:QSR_per_subsystem_expanded}. Using the relation in \cref{eq:GS_QSR_io}, it follows that
        \begin{align}
            \left\langle \mbf{y}_i, \mbf{S}_i \mbf{u}_i \right\rangle_T
            &=
            \left\langle \mbf{y}_i, \mbf{S}_i \mbs{\Phi}_{u, i}\mbf{u} \right\rangle_T
            =
            \left\langle \mbf{y}_i, \mbs{\Phi}_{y, i}^{\trans} \mbf{S}_i \mbf{u} \right\rangle_T \nonumber
            \\%
            &=
            \left\langle \mbs{\Phi}_{y, i} \mbf{y}_i, \mbf{S}_i \mbf{u} \right\rangle_T
            =
            \left\langle \bar{\mbf{y}}_i, \mbf{S}_i \mbf{u} \right\rangle_T. \label{eqn:QSR_yi_to_ybar_for_Si}
        \end{align}
        Additionally, define \(\sigma_{y,i}(t)\) as the largest singular value of \(\mbs{\Phi}_{y, i}(t)\). From \cref{eq:GS_io_y_i}, it follows that
        \begin{align}
            \mleft\| \bar{\mbf{y}}_{i} \mright\|_{2T}^{2}
            =
            \mleft\| \mbs{\Phi}_{y, i} \mbf{y}_{i} \mright\|_{2T}^{2}
            &\leq \int_{0}^{T}\sigma_{y,i}^2(t) \mleft\| \mbf{y}_{i}(t) \mright\|_{2}^{2}\,\dt\nonumber
            \\%
            &\leq
            \bar{\sigma}_{y,i}^2\mleft\| \mbf{y}_{i} \mright\|_{2T}^{2}, \label{eqn:QSR_y_to_ybar_before_rearranging}
        \end{align}
        where 
        \begin{equation} \label{eqn:QSR_bar_sigma_y}
            \bar{\sigma}_{y,i} = \sup_{t \in \mathbb{R}_{\geq0}} \sigma_{y,i}(t).
        \end{equation}
        Furthermore, \(\bar{\sigma}_{y,i} \in \mathbb{R}_{>0} \), provided there exists a \(t \in \mathbb{R}_{\geq0}\) such that \(\mbs{\Phi}_{y, i}(t) \neq \mbf{0} \). This condition is satisfied by design, since if this subsystem is being included in the parallel interconnection, it means at some point in time, its output must be nonzero, and therefore, its output scheduling matrix cannot be zero for all time. Additionally, \(\bar{\sigma}_{y,i} < \infty\), since as per \cref{eqn:scheduling_matrix_boundedness}, the scheduling matrices are assumed to be bounded. Consequently, rearranging \cref{eqn:QSR_y_to_ybar_before_rearranging} yields
        \begin{equation}
            \frac{1}{\bar{\sigma}_{y,i}^2}\mleft\| \bar{\mbf{y}}_{i} \mright\|_{2T}^{2} \leq \mleft\| \mbf{y}_{i} \mright\|_{2T}^{2}. \label{eqn:QSR_y_to_ybar}
        \end{equation}
        Since \(\lambda_{\max}\mleft( \mbf{Q}_i \mright) \in \mathbb{R}_{<0}\), substituting \cref{eqn:QSR_yi_to_ybar_for_Si} and \cref{eqn:QSR_y_to_ybar} into \cref{eqn:QSR_per_subsystem_expanded} leads to
        \begin{equation} \label{eqn:QSR_per_subsystem_expanded_2}
                \tilde{V}_i
                \leq
                \frac{\lambda_{\max}\mleft( \mbf{Q}_i \mright)}{\bar{\sigma}_{y,i}^2} \mleft\| \bar{\mbf{y}}_{i} \mright\|_{2T}^{2}
                + 2\left\langle \bar{\mbf{y}}_{i}, \mbf{S}_i \mbf{u}\right\rangle_T
                + \lambda_{\max}\mleft( \mbf{R}_i \mright) \mleft\| \mbf{u}_{i} \mright\|_{2T}^{2}.
        \end{equation}
        Given that \(\bar{\sigma}_{y,i}^2 \in \mathbb{R}_{>0}\), \cref{eqn:QSR_per_subsystem_expanded_2} can be written as
        \begin{equation} \label{eqn:QSR_per_subsystem_expanded_3}
            \begin{split}
                \bar{\sigma}_{y,i}^2\tilde{V}_i
                &\leq
                -\varepsilon_i \mleft\| \bar{\mbf{y}}_{i} \mright\|_{2T}^{2}
                + 2\bar{\sigma}_{y,i}^2\left\langle \bar{\mbf{y}}_{i}, \mbf{S}_i \mbf{u}\right\rangle_T
                \\% 
                &\quad\quad 
                + \lambda_{\max}\mleft( \mbf{R}_i \mright)\bar{\sigma}_{y,i}^2 \mleft\| \mbf{u}_{i} \mright\|_{2T}^{2},
            \end{split}
        \end{equation}
        where 
        \begin{equation} \label{eqn:QSR_per_subsystem_epsilon}
            \varepsilon_i = \lvert \lambda_{\max}\mleft( \mbf{Q}_i \mright) \rvert = - \lambda_{\max}\mleft( \mbf{Q}_i \mright)>0.
        \end{equation}
        Since other than symmetry, there are no restrictions on \(\mbf{R}_i\), the sign of its maximum eigenvalue is unknown. Denoting \(\sigma_{u,i}(t)\) and \(\nu_{u,i}(t)\) as the largest and smallest singular values of \(\mbs{\Phi}_{u, i}(t)\), respectively, define
        \begin{equation} \label{eqn:QSR_delta_i}
            \delta_i = 
            \begin{cases}
                \lambda_{\max}\mleft( \mbf{R}_i \mright) \bar{\sigma}_{y,i}^2\bar{\sigma}_{u,i}^2, 
                & \text{if } \lambda_{\max}\mleft( \mbf{R}_i \mright) > 0, 
                \\
                \lambda_{\max}\mleft( \mbf{R}_i \mright) \bar{\sigma}_{y,i}^2\bar{\nu}_{u,i}^2,    & \text{if } \lambda_{\max}\mleft( \mbf{R}_i \mright) \leq 0,
            \end{cases}
        \end{equation}
        where
        \begin{align}
            \bar{\sigma}_{u,i} &= \sup_{t \in \mathbb{R}_{\geq0}} \sigma_{u,i}(t),
            &
            \bar{\nu}_{u,i} &= \inf_{t \in \mathbb{R}_{\geq0}} \nu_{u,i}(t).
        \end{align}
        If \(\mbf{S}_i\) is full rank and \(n_u \leq n_y\), the pseudo-commutativity property suggests \(\bar{\sigma}_{u,i} \in \mathbb{R}_{>0}\), since  \(\bar{\sigma}_{y,i} \in \mathbb{R}_{>0}\). If \(\mbf{S}_i\) is rank deficient or \(n_y < n_u\), it is possible to have \(\bar{\sigma}_{u,i} \in \mathbb{R}_{\geq 0}\), since the output scheduling matrices can be designed with \(\mbs{\Phi}_{u, i}(t) = \mbf{0}\) for all \(t \in \mathbb{R}_{\geq0}\). Additionally, \(\bar{\nu}_{u,i} \in \mathbb{R}_{\geq 0} \), since the scheduling matrix \(\mbs{\Phi}_{u, i}(t)\) may not be full rank at all times. Consider the term \(\lambda_{\max}\mleft( \mbf{R}_i \mright)\bar{\sigma}_{y,i}^2 \mleft\| \mbf{u}_{i} \mright\|_{2T}^{2}\) in \cref{eqn:QSR_per_subsystem_expanded_3}. Using \cref{eq:GS_io_u_i} and the definition of \(\delta_i\) in \cref{eqn:QSR_delta_i}, it follows that
        \begin{align}
            \lambda_{\max}\mleft( \mbf{R}_i \mright) \bar{\sigma}_{y,i}^2\mleft\| \mbf{u}_{i} \mright\|_{2T}^{2}
            &=
            \lambda_{\max}\mleft( \mbf{R}_i \mright) \bar{\sigma}_{y,i}^2\mleft\| \mbs{\Phi}_{u, i} \mbf{u} \mright\|_{2T}^{2} \nonumber
            \\%
            &\leq
            \delta_i \mleft\| \mbf{u} \mright\|_{2T}^{2}. \label{eqn:QSR_ui_to_u_c}
        \end{align}
        Substituting \cref{eqn:QSR_ui_to_u_c} into \cref{eqn:QSR_per_subsystem_expanded_3} yields
        \begin{align} 
           \hat{V}_i
            &\leq
            -\varepsilon_i\mleft\| \bar{\mbf{y}}_{i} \mright\|_{2T}^{2}
            + 2\bar{\sigma}_{y,i}^2 \left\langle \bar{\mbf{y}}_{i}, \mbf{S}_i \mbf{u}\right\rangle_T 
            + \delta_i \mleft\| \mbf{u} \mright\|_{2T}^{2} \label{eqn:QSR_per-subsystem_norm_form}
            \\%
            &=
            \left\langle \bar{\mbf{y}}_{i}, \bar{\mbf{Q}}_i \bar{\mbf{y}}_{i} \right\rangle_T
            + 
            2 \left\langle \bar{\mbf{y}}_{i}, \bar{\mbf{S}}_i \mbf{u} \right\rangle_T
            + 
            \left\langle \mbf{u}, \bar{\mbf{R}}_i \mbf{u} \right\rangle_T,\nonumber
        \end{align}
        where \(\hat{V}_i = \bar{\sigma}_{y,i}^2\tilde{V}_i\), and
        \begin{align} \label{eqn:QSR_per_subsystem_Qi_Si_Ri}
            \bar{\mbf{Q}}_i &= -\varepsilon_i\eye,
            &
            \bar{\mbf{S}}_i &= \bar{\sigma}_{y,i}^2 \mbf{S}_i,
            &
            \bar{\mbf{R}}_i &= \delta_i \eye.
        \end{align}
        Consequently, the gain-scheduled subsystem \(\bar{\bm{\mathcal{G}}}_i\) is QSR-dissipative with \(\bar{\mbf{Q}}_i \in \mathbb{S}_{--}^{n_y}\), \(\bar{\mbf{S}}_i \in \mathbb{R}^{n_y \times n_u}\), and \(\bar{\mbf{R}}_i \in \mathbb{S}^{n_u}\) as defined in \cref{eqn:QSR_per_subsystem_Qi_Si_Ri}.
    \end{proof}
    \begin{theorem} \label{theorem:1}
        Given that each subsystem \(\bm{\mathcal{G}}_i\) for \( i \in \mathcal{N}\) is QSR-dissipative with \(\mbf{Q}_i \in \mathbb{S}_{--}^{n_y}\), \(\mbf{S}_i \in \mathbb{R}^{n_y \times n_u}\), and \(\mbf{R}_i \in \mathbb{S}^{n_u} \), the gain-scheduled system $\bar{\bm{\mathcal{G}}}$ in \Cref{fig:GS_G} is QSR-dissipative, provided the scheduling matrices \(\mbs{\Phi}_{u, i}(t)\) and \(\mbs{\Phi}_{y, i}(t)\) pseudo-commute with \(\mbf{S}_i\).
    \end{theorem}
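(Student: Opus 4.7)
The plan is to apply Lemma~\ref{lemma:QSR-Dissipativity_of_subsystem_i} to each of the $N$ subsystems and then aggregate the per-subsystem QSR-dissipativity inequalities into a single one in the signals $\mbf{u}$ and $\mbf{y} = \sum_{i \in \mathcal{N}} \bar{\mbf{y}}_i$. The pseudo-commutativity hypothesis is precisely what Lemma~\ref{lemma:QSR-Dissipativity_of_subsystem_i} requires, so for each $i \in \mathcal{N}$ I would invoke the per-subsystem bound
\begin{equation*}
    \bar{\sigma}_{y,i}^{2}\tilde{V}_i \leq -\varepsilon_i \mleft\| \bar{\mbf{y}}_{i} \mright\|_{2T}^{2} + 2\bar{\sigma}_{y,i}^{2} \left\langle \bar{\mbf{y}}_{i}, \mbf{S}_i \mbf{u}\right\rangle_{T} + \delta_i \mleft\| \mbf{u} \mright\|_{2T}^{2}.
\end{equation*}
Summing over $i \in \mathcal{N}$ and choosing the aggregate storage function $V = \sum_{i \in \mathcal{N}} \bar{\sigma}_{y,i}^{2} V_i$ collapses the left-hand sides into $V(T) - V(0)$.

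The core obstacle is the summed cross term $\sum_{i} 2\bar{\sigma}_{y,i}^{2}\langle \bar{\mbf{y}}_i, \mbf{S}_i \mbf{u}\rangle_T$, which couples each \emph{individual} output $\bar{\mbf{y}}_i$ to $\mbf{u}$ through a distinct $\mbf{S}_i$ and therefore cannot be recast directly as $2\langle \mbf{y}, \bar{\mbf{S}}\mbf{u}\rangle_T$ in the aggregate output. I would decouple it by applying Young's inequality index-by-index,
\begin{equation*}
    2\bar{\sigma}_{y,i}^{2} \left\langle \bar{\mbf{y}}_{i}, \mbf{S}_i \mbf{u}\right\rangle_{T} \leq \frac{\varepsilon_i}{2}\mleft\| \bar{\mbf{y}}_i \mright\|_{2T}^{2} + \frac{2\bar{\sigma}_{y,i}^{4}\sigma_{\max}^{2}(\mbf{S}_i)}{\varepsilon_i}\mleft\| \mbf{u} \mright\|_{2T}^{2},
\end{equation*}
which is legitimate because the hypothesis $\mbf{Q}_i \in \mathbb{S}_{--}^{n_y}$ guarantees $\varepsilon_i = -\lambda_{\max}(\mbf{Q}_i) > 0$, keeping the Young parameter finite. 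This absorbs half of each $-\varepsilon_i \|\bar{\mbf{y}}_i\|_{2T}^2$ contribution while producing an additional $\|\mbf{u}\|_{2T}^2$ term.

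Finally, the remaining $-\sum_{i}\tfrac{\varepsilon_i}{2}\|\bar{\mbf{y}}_i\|_{2T}^{2}$ must be expressed in terms of the overall output $\mbf{y}$, which I would do with the Cauchy--Schwarz-type bound $\|\mbf{y}\|_{2T}^{2} = \|\sum_i \bar{\mbf{y}}_i\|_{2T}^{2} \leq N \sum_i \|\bar{\mbf{y}}_i\|_{2T}^{2}$, together with $\bar{\varepsilon} = \min_i \varepsilon_i > 0$. Collecting constants yields
\begin{equation*}
    V(T) - V(0) \leq -\frac{\bar{\varepsilon}}{2N}\mleft\| \mbf{y} \mright\|_{2T}^{2} + \bar{\delta}\mleft\| \mbf{u} \mright\|_{2T}^{2},
\end{equation*}
with $\bar{\delta} = \sum_{i \in \mathcal{N}}\bigl(\delta_i + 2\bar{\sigma}_{y,i}^{4}\sigma_{\max}^{2}(\mbf{S}_i)/\varepsilon_i\bigr)$, establishing QSR-dissipativity of $\bar{\bm{\mathcal{G}}}$ with $\bar{\mbf{Q}} = -\tfrac{\bar{\varepsilon}}{2N}\eye \in \mathbb{S}_{--}^{n_y}$, $\bar{\mbf{S}} = \mbf{0}$, and $\bar{\mbf{R}} = \bar{\delta}\eye$. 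The main difficulty I anticipate is the constant bookkeeping through Young's inequality and ensuring the Young parameter choice simultaneously keeps $\bar{\mbf{Q}}$ strictly negative definite; the strict definiteness $\mbf{Q}_i \in \mathbb{S}_{--}^{n_y}$ assumed in Case~1 is precisely what enables this, which also clarifies why Case~2 must trade this structural assumption for the common-$\mbf{S}$ hypothesis.
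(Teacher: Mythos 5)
Your proposal is correct and does prove the literal statement, but it takes a genuinely different route from the paper after the common first step (both invoke Lemma~\ref{lemma:QSR-Dissipativity_of_subsystem_i} and sum the per-subsystem bounds). Where you apply Young's inequality with parameter $\varepsilon_i/2$ to annihilate the cross term $2\bar{\sigma}_{y,i}^{2}\left\langle \bar{\mbf{y}}_i, \mbf{S}_i\mbf{u}\right\rangle_T$ entirely, the paper instead completes the square: it adds and subtracts $\bar{\sigma}_{y,i}^{4}/\varepsilon_i \left\| \mbf{S}_i\mbf{u}\right\|_{2T}^{2}$, introduces the shifted outputs $\tilde{\mbf{y}}_i = \bar{\mbf{y}}_i - (\bar{\sigma}_{y,i}^{2}/\varepsilon_i)\mbf{S}_i\mbf{u}$, applies the AM--QM inequality (Lemma~\ref{lemma:AM-QM}) plus the triangle inequality to $\sum_i \left\|\tilde{\mbf{y}}_i\right\|_{2T}^{2}$, and only then re-expands the square. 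Because $\sum_i \tilde{\mbf{y}}_i = \mbf{y} - \bar{\mbf{S}}\mbf{u}$ with $\bar{\mbf{S}} = \sum_i (\bar{\sigma}_{y,i}^{2}/\varepsilon_i)\mbf{S}_i$, the re-expansion recovers a \emph{nonzero} cross term, yielding $\mbf{S} = \varepsilon_{\min}\bar{\mbf{S}}$ in \cref{eqn:QSR_Q_S_R_for_case_1}. Your observation that the summed cross term cannot be recast directly as a single $\left\langle \mbf{y}, \bar{\mbf{S}}\mbf{u}\right\rangle_T$ is accurate, but the completion-of-squares device circumvents this without discarding the term. What your shortcut buys is brevity and more elementary bookkeeping; what it costs is that the resulting triple has $\mbf{S} = \mbf{0}$ and $\mbf{R} = \bar{\delta}\eye \succeq 0$, i.e., a pure finite-gain characterization. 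That is strictly more conservative and would undermine the downstream use of \Cref{theorem:1} in \Cref{Discussion}: for the OSP and conic special cases the retained $\mbf{S}$ is exactly what produces the nonzero conic center $c = \sum_i c_i\bar{\sigma}_{y,i}^{2}$ and the passivity-flavoured cross term needed for the feedback stability test in \Cref{eqn:QSR_Stability_Theorem}. So your proof is valid for the theorem as stated, but the paper's extra work is not incidental --- it preserves structure the rest of the paper relies on.
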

    \begin{proof}
        Following \Cref{lemma:QSR-Dissipativity_of_subsystem_i}, each scheduled subsystem \(\bar{\bm{\mathcal{G}}}_i\) is QSR-dissipative with \(\bar{\mbf{Q}}_i \in \mathbb{S}_{--}^{n_y}\), \(\bar{\mbf{S}}_i  \in \mathbb{R}^{n_y \times n_u}\), and \(\bar{\mbf{R}}_i \in \mathbb{S}^{n_u}\) defined in \cref{eqn:QSR_per_subsystem_Qi_Si_Ri}. Furthermore, \(\bar{\sigma}_{y,i}\), \(\varepsilon_i\), and \(\delta_i\) are given by \cref{eqn:QSR_bar_sigma_y}, \cref{eqn:QSR_per_subsystem_epsilon}, and \cref{eqn:QSR_delta_i}, respectively. Adding and subtracting \(\bar{\sigma}_{y,i}^4/\varepsilon_i \mleft\|\mbf{S}_i\mbf{u}\mright\|_{2T}^{2}\) from \cref{eqn:QSR_per-subsystem_norm_form} and factoring leads to
        \begin{equation}\label{eqn:QSR_with_different_S_single_i_before_merging}
            \hat{V}_i
                \leq
                - \varepsilon_i\mleft\|\bar{\mbf{y}}_{i} -  \frac{\bar{\sigma}_{y,i}^2}{\varepsilon_i} \mbf{S}_i\mbf{u} \mright\|_{2T}^{2}\hspace{-5pt}
                + \delta_i \mleft\| \mbf{u} \mright\|_{2T}^{2}
                + \frac{\bar{\sigma}_{y,i}^4}{\varepsilon_i} \mleft\|\mbf{S}_i\mbf{u}\mright\|_{2T}^{2}.
        \end{equation}
        To simplify further analysis, let 
        \begin{equation}
            \tilde{\mbf{y}}_i(t) = \bar{\mbf{y}}_{i}(t) -  \frac{\bar{\sigma}_{y,i}^2}{\varepsilon_i} \mbf{S}_i\mbf{u}(t), \label{eqn:QSR_y_tilde}
        \end{equation}
        for \(i \in \mathcal{N}\) and denote \(\sigma_{\mbf{S}_i}\) as the largest singular value of \(\mbf{S}_i\). Substituting \cref{eqn:QSR_y_tilde} into \cref{eqn:QSR_with_different_S_single_i_before_merging} and using the Rayleigh inequality results in
        \begin{equation}
                \hat{V}_i
                \leq
                - \varepsilon_i\mleft\|\tilde{\mbf{y}}_i \mright\|_{2T}^{2}
                    +
                    \mleft( 
                        \delta_i
                        +
                        \frac{\bar{\sigma}_{y,i}^4\sigma_{\mbf{S}_i}^2}{\varepsilon_i}
                    \mright)
                    \mleft\|\mbf{u}\mright\|_{2T}^{2}. \label{eqn:QSR_with_different_S_single_i}
        \end{equation}
        Summing \cref{eqn:QSR_with_different_S_single_i} over all \(i \in \mathcal{N}\), it follows that
        \begin{align}
            \hspace{-5pt}
            \sum_{i \in \mathcal{N}}\hat{V}_i
            &\leq
                - \sum_{i \in \mathcal{N}}\varepsilon_i\mleft\|\tilde{\mbf{y}}_i \mright\|_{2T}^{2}
                + \sum_{i \in \mathcal{N}}
                \mleft( 
                    \delta_i
                    +
                    \frac{\bar{\sigma}_{y,i}^4\sigma_{\mbf{S}_i}^2}{\varepsilon_i}
                \mright)
                \mleft\|\mbf{u}\mright\|_{2T}^{2}\nonumber
            \\%
            &\leq 
                - \varepsilon_{\min}\sum_{i \in \mathcal{N}}\mleft\|\tilde{\mbf{y}}_i\mright\|_{2T}^{2}
                + \bar{\delta}
                \mleft\|\mbf{u}\mright\|_{2T}^{2}, \label{eqn:QSR_with_different_S_before_cauchy}
        \end{align}
        where 
        \begin{align}
            \varepsilon_{\min} &= \min_{i \in \mathcal{N}}\varepsilon_i > 0,
            &
            \bar{\delta} &= \sum_{i \in \mathcal{N}}
            \mleft( 
                \delta_i
                +
                \frac{\bar{\sigma}_{y,i}^4\sigma_{\mbf{S}_i}^2}{\varepsilon_i}
            \mright). \label{eqn:QSR_hat_delta}
        \end{align}
        Rewriting the summation in \cref{eqn:QSR_with_different_S_before_cauchy} as a vector norm yields
        \begin{align} \label{eqn:QSR_with_different_S_before_cauchy_vector_form}
            \sum_{i \in \mathcal{N}}\hat{V}_i
            &\leq
            -\varepsilon_{\min}
            \mleft\|
            \begin{bmatrix}
                \mleft\|\tilde{\mbf{y}}_1\mright\|_{2T} \\%
                \vdots\\%
                \mleft\|\tilde{\mbf{y}}_N \mright\|_{2T}
            \end{bmatrix}
            \mright\|_{2}^{2}
            + \bar{\delta}
            \mleft\|\mbf{u}\mright\|_{2T}^{2}.
        \end{align}
        Applying the inequality in \Cref{lemma:AM-QM}, which is a special case of the Cauchy\textendash{}Schwartz inequality, to \cref{eqn:QSR_with_different_S_before_cauchy_vector_form} results in
        \begin{align} 
            \sum_{i \in \mathcal{N}}\hat{V}_i
            &\leq 
            - \frac{\varepsilon_{\min}}{N}
            \left\langle 
                \begin{bmatrix}
                    1\\%
                    \vdots\\%
                    1
                \end{bmatrix} 
                ,
                \begin{bmatrix}
                    \mleft\|\tilde{\mbf{y}}_1 \mright\|_{2T} \\%
                    \vdots\\%
                    \mleft\|\tilde{\mbf{y}}_N \mright\|_{2T}
                \end{bmatrix} 
            \right\rangle^{2}
            + \bar{\delta} \mleft\|\mbf{u}\mright\|_{2T}^{2} \nonumber
            \\%
            &=
            - \frac{\varepsilon_{\min}}{N} \mleft(\sum_{i \in \mathcal{N}}\mleft\| \tilde{\mbf{y}}_i\mright\|_{2T}\mright)^{2}
            + \bar{\delta} \mleft\|\mbf{u}\mright\|_{2T}^{2}.\label{eqn:_QSR_with_different_S_after_cauchy_before_triangle}
        \end{align}
        Multiplying both sides of \cref{eqn:_QSR_with_different_S_after_cauchy_before_triangle} by \(N\) and applying the triangle inequality yields
        \begin{align}\label{eqn:QSR_with_different_S_after_cauchy_before_expanding}
            % \tilde{V}
            N\sum_{i \in \mathcal{N}}\hat{V}_i
            &\leq 
            - \varepsilon_{\min}\mleft\|\sum_{i \in \mathcal{N}}\tilde{\mbf{y}}_i \mright\|_{2T}^{2} 
            + N\bar{\delta} \mleft\|\mbf{u}\mright\|_{2T}^{2}.
        \end{align}
        Substituting back \cref{eqn:QSR_y_tilde} into \cref{eqn:QSR_with_different_S_after_cauchy_before_expanding} and defining \(\tilde{V} = N\sum_{i \in \mathcal{N}}\hat{V}_i\) results in
        \begin{align}\label{eqn:QSR_with_different_S_after_cauchy_after_expanding}
            \tilde{V}
            &\leq 
            - \varepsilon_{\min}\mleft\|\sum_{i \in \mathcal{N}}\mleft( \bar{\mbf{y}}_{i} - \frac{\bar{\sigma}_{y,i}^2}{\varepsilon_i} \mbf{S}_i\mbf{u} \mright)  \mright\|_{2T}^{2} 
            + N\bar{\delta} \mleft\|\mbf{u}\mright\|_{2T}^{2}.
        \end{align}
        Defining \(\bar{\mbf{S}} = \sum_{i \in \mathcal{N}}\mleft( \bar{\sigma}_{y,i}^2/\varepsilon_i \mright) \mbf{S}_i\) and using \cref{eq:GS_io_y_c}, it follows from~\cref{eqn:QSR_with_different_S_after_cauchy_after_expanding} that
        \begin{align}
            \tilde{V}
            &\leq 
            - \varepsilon_{\min}\mleft\|\mbf{y} -  \bar{\mbf{S}}\mbf{u} \mright\|_{2T}^{2} 
            + N\bar{\delta} \mleft\|\mbf{u}\mright\|_{2T}^{2} \nonumber
            \\%
            &=
            - \varepsilon_{\min}
            \mleft( 
                \mleft\|\mbf{y}\mright\|_{2T}^{2} 
                -2\left\langle \mbf{y}, \bar{\mbf{S}} \mbf{u} \right\rangle_T
                +
                \mleft\|\bar{\mbf{S}}\mbf{u} \mright\|_{2T}^{2}
            \mright)
            + N\bar{\delta} \mleft\|\mbf{u}\mright\|_{2T}^{2}.\label{eqn:QSR_with_different_S_after_cauchy}
        \end{align}
        Applying the Rayleigh inequality to \cref{eqn:QSR_with_different_S_after_cauchy} and defining \(\nu_{\bar{\mbf{S}}}\) as the smallest singular value of \(\bar{\mbf{S}}\), it follows that
        \begin{align}
            \tilde{V}
            &\leq
            - \varepsilon_{\min}\mleft\|\mbf{y}\mright\|_{2T}^{2} 
            + 2\varepsilon_{\min}\left\langle \mbf{y}, \bar{\mbf{S}} \mbf{u} \right\rangle_T
            + \mleft(N\bar{\delta} - \varepsilon_{\min}\nu_{\bar{\mbf{S}}}^{2} \mright) \mleft\|\mbf{u}\mright\|_{2T}^{2} \nonumber
            \\%
            &=
            - \varepsilon_{\min}\mleft\|\mbf{y}\mright\|_{2T}^{2} 
            + 2\varepsilon_{\min}\left\langle \mbf{y}, \bar{\mbf{S}} \mbf{u} \right\rangle_T
            + \hat{\delta}\mleft\|\mbf{u}\mright\|_{2T}^{2}, \label{eqn:QSR_with_different_S_after_cauchy_with_delta_hat}
        \end{align}
        where \(\hat{\delta} = N\bar{\delta} - \varepsilon_{\min}\nu_{\bar{\mbf{S}}}^{2}\).
        Alternatively, \cref{eqn:QSR_with_different_S_after_cauchy_with_delta_hat} can be written as
        \begin{equation*}
            \tilde{V}
            \leq 
            \left\langle \mbf{y}, \mbf{Q} \mbf{y} \right\rangle_T
            + 
            2 \left\langle \mbf{y}, \mbf{S} \mbf{u} \right\rangle_T 
            +
            \left\langle \mbf{u}, \mbf{R} \mbf{u} \right\rangle_T,
        \end{equation*}
        where
        \begin{align} \label{eqn:QSR_Q_S_R_for_case_1}
            \mbf{Q} &= -\varepsilon_{\min} \eye,
            &
            \mbf{S} &= \varepsilon_{\min} \bar{\mbf{S}},
            &
            \mbf{R} &= \hat{\delta} \eye.%
        \end{align}
        Additionally, \cref{eqn:QSR_Q_S_R_for_case_1} can be written in terms of \(\bar{\sigma}_{y,i}\), \(\varepsilon_i\), and \(\delta_i\), given by \cref{eqn:QSR_bar_sigma_y}, \cref{eqn:QSR_per_subsystem_epsilon}, and \cref{eqn:QSR_delta_i}, respectively, as follows:
        \begin{subequations} \label{eqn:QSR_Q_S_R_for_case_1_expanded}
            \begin{align}
                \mbf{Q} &= - \min_{i \in \mathcal{N}}\mleft( \varepsilon_i \mright) \eye,
                \\%
                \mbf{S} &= \min_{i \in \mathcal{N}}\mleft( \varepsilon_i \mright)  \sum_{i \in \mathcal{N}}\frac{\bar{\sigma}_{y,i}^2}{\varepsilon_i} \mbf{S}_i ,
                \\%
                \mbf{R} &= 
                \mleft( 
                    N
                    \sum_{i \in \mathcal{N}}
                    \mleft( 
                        \delta_i
                        +
                        \frac{\bar{\sigma}_{y,i}^4\sigma_{\mbf{S}_i}^2}{\varepsilon_i}
                    \mright)
                    -
                    \min_{i \in \mathcal{N}}\mleft( \varepsilon_i \mright)
                    \nu_{\bar{\mbf{S}}}^{2}
                \mright) \eye,%
            \end{align}
        \end{subequations}
        where \(\sigma_{\mbf{S}_i}\) is the largest singular value of \(\mbf{S}_i\) and \(\nu_{\bar{\mbf{S}}}\) is the smallest singular value of \(\bar{\mbf{S}} = \sum_{i \in \mathcal{N}}\mleft(\bar{\sigma}_{y,i}^2/\varepsilon_i \mright) \mbf{S}_i\). Therefore, the gain-scheduled system \(\bar{\bm{\mathcal{G}}}\) is QSR-dissipative with \mbox{\(\mbf{Q} \in \mathbb{S}_{--}^{n_y}\)}, \(\mbf{S} \in \mathbb{R}^{n_y \times n_u}\), and \(\mbf{R} \in \mathbb{S}^{n_u}\) as defined in \cref{eqn:QSR_Q_S_R_for_case_1_expanded}.
    \end{proof}
% ****************************************************************************************************************** %
% ************************************************ Q *************************************************************** %
% ****************************************************************************************************************** %
\subsection{Case 2: All \texorpdfstring{\(\mathit{N}\)}{N} Subsystems are QSR-Dissipative with \texorpdfstring{\(\mbf{Q}_\mathit{i} \in \mathbb{S}_{-}^{\mathit{n_y}}\)}{Q <= 0} and Share a Common \texorpdfstring{\(\mbf{S}_\mathit{i} = \mbf{S} \in \mathbb{R}^{\mathit{n_y} \times \mathit{n_u}}\)}{Si}}
    \begin{lemma} \label{lemma:QSR-Dissipative_of_GS_wrt_Q}
        Given each subsystem \(\bm{\mathcal{G}}_i\) for \( i \in \mathcal{N}\) is QSR-dissipative with \(\mbf{Q}_i \in \mathbb{S}_{-}^{n_y}\), there exists a \(\mbf{Q} = -\varepsilon \eye\) with \(\varepsilon \in \mathbb{R}_{\geq 0} \) such that the gain-scheduled system $\bar{\bm{\mathcal{G}}}$ in \Cref{fig:GS_G} satisfies 
        \(\sum_{i \in \mathcal{N}} \left\langle \mbf{y}_i, \mbf{Q}_i \mbf{y}_i \right\rangle_T \leq \left\langle \mbf{y}, \mbf{Q} \mbf{y} \right\rangle_T\), provided the output scheduling matrices are active.
    \end{lemma}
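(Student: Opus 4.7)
The plan is to apply the Rayleigh inequality term-by-term to $\sum_{i \in \mathcal{N}}\langle \mbf{y}_i, \mbf{Q}_i \mbf{y}_i\rangle_T$, and then convert the resulting bound, which is expressed in terms of the subsystem outputs $\mbf{y}_i$, into a bound in the gain-scheduled output $\mbf{y}$ using the parallel interconnection relation $\mbf{y}(t) = \sum_{i \in \mathcal{N}}\mbs{\Phi}_{y,i}(t)\mbf{y}_i(t)$ from \cref{eq:GS_io_y_c}. The overall argument mirrors the style already used in \Cref{lemma:QSR-Dissipativity_of_subsystem_i} and \Cref{theorem:1}, namely a combination of Rayleigh, triangle, and Cauchy--Schwarz inequalities.

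First, I would invoke the Rayleigh inequality on each summand. Since $\mbf{Q}_i \in \mathbb{S}_{-}^{n_y}$, it follows that $\langle \mbf{y}_i, \mbf{Q}_i \mbf{y}_i\rangle_T \leq \lambda_{\max}(\mbf{Q}_i)\|\mbf{y}_i\|_{2T}^2 = -\varepsilon_i \|\mbf{y}_i\|_{2T}^2$ with $\varepsilon_i = -\lambda_{\max}(\mbf{Q}_i) \geq 0$. Summing over $i \in \mathcal{N}$ and bounding each $\varepsilon_i$ below by $\varepsilon_{\min} = \min_{i \in \mathcal{N}}\varepsilon_i \geq 0$ yields
\[
\sum_{i \in \mathcal{N}}\langle \mbf{y}_i, \mbf{Q}_i \mbf{y}_i\rangle_T \;\leq\; -\varepsilon_{\min}\sum_{i \in \mathcal{N}}\|\mbf{y}_i\|_{2T}^2.
\]

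Next, starting from $\|\mbf{y}(t)\| = \|\sum_{i \in \mathcal{N}} \mbs{\Phi}_{y,i}(t)\mbf{y}_i(t)\|$, I would apply the triangle inequality and Cauchy--Schwarz, combined with the induced-norm bound in~\cref{eqn:scheduling_matrix_boundedness}, to obtain an estimate of the form $\|\mbf{y}\|_{2T}^2 \leq C\sum_{i \in \mathcal{N}}\|\mbf{y}_i\|_{2T}^2$ for some $C > 0$ depending on $N$ and on the suprema $\bar{\sigma}_{y,i}$ (e.g., $C = N\max_{i \in \mathcal{N}}\bar{\sigma}_{y,i}^2$). Rearranging into a lower bound on $\sum_i \|\mbf{y}_i\|_{2T}^2$ and substituting back into the previous display gives
\[
\sum_{i \in \mathcal{N}}\langle \mbf{y}_i, \mbf{Q}_i \mbf{y}_i\rangle_T \;\leq\; -\frac{\varepsilon_{\min}}{C}\|\mbf{y}\|_{2T}^2 \;=\; \langle \mbf{y}, \mbf{Q}\mbf{y}\rangle_T,
\]
with $\mbf{Q} = -\varepsilon\eye$ and $\varepsilon = \varepsilon_{\min}/C \geq 0$, proving the claim.

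The main obstacle I expect is the bookkeeping of inequality directions, since every quantity involved is nonpositive: multiplying $\sum_i \|\mbf{y}_i\|_{2T}^2 \geq \|\mbf{y}\|_{2T}^2/C$ by the nonpositive factor $-\varepsilon_{\min}$ reverses the sign, and the final chain must be assembled with care to land on the correct side. A closely related subtlety is ensuring $C > 0$, which is exactly where the \emph{active} assumption on the output scheduling matrices enters: it guarantees $\max_{i \in \mathcal{N}}\bar{\sigma}_{y,i} > 0$, so that $C$ is strictly positive and $\varepsilon$ is well-defined. The degenerate case $\varepsilon_{\min} = 0$ (when some $\mbf{Q}_i$ has a zero maximal eigenvalue) remains consistent with the statement $\varepsilon \in \mathbb{R}_{\geq 0}$, simply giving the admissible choice $\varepsilon = 0$.
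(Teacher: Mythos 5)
Your proof is correct, and its skeleton (Rayleigh inequality to pull out $-\varepsilon_{\min}\sum_i\|\mbf{y}_i\|_{2T}^2$, then a lower bound on $\sum_i\|\mbf{y}_i\|_{2T}^2$ in terms of $\|\mbf{y}\|_{2T}^2$) matches the paper's. The one step you do differently is the key estimate $\|\mbf{y}\|_{2T}^2 \leq C\sum_{i}\|\mbf{y}_i\|_{2T}^2$: you get it from the triangle inequality plus Cauchy--Schwarz (the AM--QM inequality of \Cref{lemma:AM-QM}), giving $C = N\max_{i}\bar{\sigma}_{y,i}^2$, whereas the paper stacks the outputs into $\mbs{\upsilon}(t)$, writes $\mbf{y}(t)=\mbs{\Psi}(t)\mbs{\upsilon}(t)$ with the augmented matrix $\mbs{\Psi}(t)=[\mbs{\Phi}_{y,1}(t)\,\cdots\,\mbs{\Phi}_{y,N}(t)]$, and applies Rayleigh once to get $C=\bar{\sigma}_{\mbs{\Psi}}^2$. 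Both are valid, but the paper's constant is tighter: $\sigma_{\mbs{\Psi}}^2(t)=\lambda_{\max}\bigl(\sum_i\mbs{\Phi}_{y,i}(t)\mbs{\Phi}_{y,i}^{\trans}(t)\bigr)\leq\sum_i\sigma_{y,i}^2(t)\leq N\max_i\bar{\sigma}_{y,i}^2$, so your $\varepsilon=\varepsilon_{\min}/C$ is more conservative than the paper's $\varepsilon=\varepsilon_{\min}/\bar{\sigma}_{\mbs{\Psi}}^2$. This does not affect the lemma as stated (it only asserts existence of some $\varepsilon\geq 0$), but it matters downstream: the explicit value in \cref{eqn:Q_epsilon} is reused in \Cref{theorem:2} and in the OSP/VSP discussion, where $\bar{\sigma}_{\mbs{\Psi}}^2=\sup_t\sum_i\lvert s_i(t)\rvert^2$ in the base case is shown to exactly recover the coefficients of prior work; your constant would not reproduce those comparisons. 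Your handling of the sign bookkeeping and of the role of the \emph{active} assumption (ensuring $C>0$ so the division is legitimate, with the degenerate $\varepsilon_{\min}=0$ case still consistent) is sound; in fact you only need activity at a single time instant, a slightly weaker hypothesis than the paper invokes.
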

    \begin{proof}
        Using the Rayleigh inequality, it follows that
        \begin{align} 
            \sum_{i \in \mathcal{N}} \left\langle \mbf{y}_i, \mbf{Q}_i \mbf{y}_i \right\rangle_T
            &\leq
            -\sum_{i \in \mathcal{N}}\varepsilon_i \mleft\| \mbf{y}_{i} \mright\|_{2T}^{2}\nonumber\\%
            &\leq
            -\varepsilon_{\min}\sum_{i \in \mathcal{N}} \mleft\| \mbf{y}_{i} \mright\|_{2T}^{2}, \label{eqn:R_S_0_tookout_Q_after_sum}
        \end{align} 
        where \(\varepsilon_i = - \lambda_{\max}\mleft( \mbf{Q}_i \mright) \geq 0\) and \(\varepsilon_{\min} = \min_{i \in \mathcal{N}} \varepsilon_i \geq 0\).
        Additionally, the summation in \cref{eq:GS_io_y_c} can be rewritten as the matrix-vector multiplication \(\mbf{y}(t) = \bm{\Psi}(t) \mbs{\upsilon}(t)\), where
        \begin{align}\label{eqn:augmented_matrcies_for_Q}
            \mbs{\Psi}(t) &= 
            \begin{bmatrix}
                \mbs{\Phi}_{y, 1}(t) \,\, \cdots \,\, \mbs{\Phi}_{y, N}(t)
            \end{bmatrix} 
            ,&
            \mbs{\upsilon}(t) &= 
            \begin{bmatrix}
                \mbf{y}_1(t) \\%
                \vdots       \\%
                \mbf{y}_N(t) \\%
            \end{bmatrix}.
        \end{align}
        Consequently, by defining \(\sigma_{\mbs{\Psi}}(t)\) as the largest singular value of \(\mbs{\Psi}(t)\) and applying the Rayleigh inequality to \(\mleft\| \bm{\Psi}(t) \mbs{\upsilon}(t)\mright\|_{2}^{2}\), it follows that
        \begin{align}
            \mleft\| \mbf{y}(t)\mright\|_{2}^{2}
            = 
            \mleft\| \bm{\Psi}(t) \mbs{\upsilon}(t)\mright\|_{2}^{2}
            \leq \sigma_{\mbs{\Psi}}^{2}(t)
            \mleft\| \mbs{\upsilon}(t)\mright\|_{2}^{2}. \label{eqn:QSR_yc_to_yi}
        \end{align}
        Moreover, \(\sigma_{\mbs{\Psi}}(t)\in \mathbb{R}_{>0}\), provided the output scheduling matrices are active, that is, \(\forall t \in \mathbb{R}_{\geq0}\), \(\exists i \in \mathcal{N}\) such that \(\mbs{\Phi}_{y, i}(t) \neq \mbf{0} \), and therefore, \(\bm{\Psi}^{\trans}(t) \bm{\Psi}(t) \in \mathbb{S}_{+}^{N \times n_y}\) is nonzero.
        Additionally,
        \begin{align*}
            \sigma_{\mbs{\Psi}}^2(t) 
            \leq \trace{\mleft( \bm{\Psi}^{\trans} (t) \bm{\Psi}(t) \mright) }
            &= 
                \sum_{i \in \mathcal{N}} \trace{\mleft( \mbs{\Phi}_{y, i}^{\trans} (t) \mbs{\Phi}_{y, i}(t)\mright)}
            \\%
            &\leq 
                n_y \sum_{i \in \mathcal{N}} \lambda_{\max}\mleft( \mbs{\Phi}_{y, i}^{\trans} (t) \mbs{\Phi}_{y, i}(t) \mright)\\%
            &= n_y \sum_{i \in \mathcal{N}} \sigma_{y, i}^2(t).
        \end{align*}
        Therefore, \(\sup_{t \in \mathbb{R}_{\geq0}}\sigma_{\mbs{\Psi}}^2(t) < \infty\) since the output scheduling matrices are assumed to be bounded as per \cref{eqn:scheduling_matrix_boundedness}.
        Consequently, rearranging \cref{eqn:QSR_yc_to_yi} yields
        \begin{equation}
                \frac{1}{\sigma_{\mbs{\Psi}}^{2}(t)} \mleft\| \mbf{y}(t)\mright\|_{2}^{2}
                \leq 
                \mleft\|\mbs{\upsilon}(t)\mright\|_{2}^{2} 
                = \sum_{i \in \mathcal{N}} \mleft\| \mbf{y}_{i}(t) \mright\|_{2}^{2}. \label{eqn:QSR_yc_to_yi_2}
        \end{equation}
        Since \(\lambda_{\max}\mleft( \mbf{Q}_i \mright) \in \mathbb{R}_{\leq 0}\), substituting \cref{eqn:QSR_yc_to_yi_2} into \cref{eqn:R_S_0_tookout_Q_after_sum}, it follows that
        \begin{align} 
            \sum_{i \in \mathcal{N}} \left\langle \mbf{y}_i, \mbf{Q}_i \mbf{y}_i \right\rangle_T
            &\leq
            -\varepsilon_{\min}\int_{0}^{T}\frac{1}{\sigma_{\mbs{\Psi}}^{2}(t)}\mleft\| \mbf{y}(t)\mright\|_{2}^{2}\,\dt\nonumber
            \\%
            &\leq
            -\frac{\varepsilon_{\min}}{\bar{\sigma}_{\mbs{\Psi}}^{2}} \int_{0}^{T}\mleft\| \mbf{y}(t)\mright\|_{2}^{2} \,\dt\nonumber
            \\%
            &= - \varepsilon \mleft\| \mbf{y}\mright\|_{2T}^{2}
            = \left\langle \mbf{y}, \mbf{Q} \mbf{y} \right\rangle_T,\nonumber
        \end{align} 
        where
        \begin{align}\label{eqn:Q_epsilon}
        \bar{\sigma}_{\mbs{\Psi}} &= \sup_{t \in \mathbb{R}_{\geq0}} \sigma_{\mbs{\Psi}}(t) > 0, &
        \varepsilon &= \frac{\varepsilon_{\min}}{\bar{\sigma}_{\mbs{\Psi}}^2} \geq 0,
        \end{align}
        and \(\mbf{Q} = -\varepsilon \eye \in \mathbb{S}_{-}^{n_y}\).
    \end{proof}
    % ****************************************************************************************************************** %
    % ************************************************ R ******************************************************** %
    % ****************************************************************************************************************** %
    \begin{lemma} \label{lemma:QSR-Dissipative_of_GS_wrt_R}
        Given each subsystem \(\bm{\mathcal{G}}_i\) for \( i \in \mathcal{N}\) is QSR-dissipative with \(\mbf{R}_i \in \mathbb{S}^{n_u}\), there exists an \(\mbf{R} = \delta \eye\) with \(\delta \in \mathbb{R} \) such that the gain-scheduled system, $\bar{\bm{\mathcal{G}}}$, in \Cref{fig:GS_G} satisfies 
        \(\sum_{i \in \mathcal{N}} \left\langle \mbf{u}_i, \mbf{R}_i \mbf{u}_i \right\rangle_T \leq \left\langle \mbf{u}, \mbf{R} \mbf{u} \right\rangle_T\).
    \end{lemma}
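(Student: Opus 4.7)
The plan is to mimic the structure of Lemma \ref{lemma:QSR-Dissipative_of_GS_wrt_Q}, but the input side is considerably simpler. Because \cref{eq:GS_io_u_i} gives \(\mbf{u}_i(t) = \mbs{\Phi}_{u, i}(t)\mbf{u}(t)\), each \(\mbf{u}_i\) is a direct multiplicative image of the single common input \(\mbf{u}\), so no augmented matrix analogous to \(\mbs{\Psi}\) in \cref{eqn:augmented_matrcies_for_Q} is required, and there is no need to pass through anything like \cref{eqn:QSR_yc_to_yi}.

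First, I would apply the Rayleigh inequality to each summand to obtain \(\left\langle \mbf{u}_i, \mbf{R}_i \mbf{u}_i \right\rangle_T \leq \lambda_{\max}(\mbf{R}_i)\mleft\|\mbf{u}_i\mright\|_{2T}^2\), then substitute \cref{eq:GS_io_u_i} so that \(\mleft\|\mbf{u}_i\mright\|_{2T}^2 = \mleft\|\mbs{\Phi}_{u,i}\mbf{u}\mright\|_{2T}^2\). The main subtlety, and the step I expect to be the hardest to get right, is that \(\mbf{R}_i\) is only assumed symmetric, so \(\lambda_{\max}(\mbf{R}_i)\) can have any sign. This forces a case-split identical in spirit to the one used for \(\delta_i\) in \cref{eqn:QSR_delta_i}: when \(\lambda_{\max}(\mbf{R}_i) > 0\), bound \(\mleft\|\mbs{\Phi}_{u,i}\mbf{u}\mright\|_{2T}^2\) above by \(\bar{\sigma}_{u,i}^2\mleft\|\mbf{u}\mright\|_{2T}^2\); when \(\lambda_{\max}(\mbf{R}_i) \leq 0\), bound it below by \(\bar{\nu}_{u,i}^2\mleft\|\mbf{u}\mright\|_{2T}^2\), which flips upon multiplication by the non-positive \(\lambda_{\max}(\mbf{R}_i)\) and preserves the desired upper bound. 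Defining
\[
    \tilde{\delta}_i =
    \begin{cases}
        \lambda_{\max}(\mbf{R}_i)\bar{\sigma}_{u,i}^2, & \text{if } \lambda_{\max}(\mbf{R}_i) > 0,\\
        \lambda_{\max}(\mbf{R}_i)\bar{\nu}_{u,i}^2,    & \text{if } \lambda_{\max}(\mbf{R}_i) \leq 0,
    \end{cases}
\]
then yields \(\left\langle \mbf{u}_i, \mbf{R}_i \mbf{u}_i \right\rangle_T \leq \tilde{\delta}_i \mleft\|\mbf{u}\mright\|_{2T}^2\) uniformly in \(i\).

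Summing over \(i \in \mathcal{N}\) and setting \(\delta = \sum_{i \in \mathcal{N}} \tilde{\delta}_i \in \mathbb{R}\) and \(\mbf{R} = \delta\eye\) gives \(\sum_{i \in \mathcal{N}} \left\langle \mbf{u}_i, \mbf{R}_i \mbf{u}_i \right\rangle_T \leq \delta \mleft\|\mbf{u}\mright\|_{2T}^2 = \left\langle \mbf{u}, \mbf{R} \mbf{u} \right\rangle_T\), which is the claim. Boundedness of \(\bar{\sigma}_{u,i}\) follows directly from \cref{eqn:scheduling_matrix_boundedness}, and \(\bar{\nu}_{u,i}\) is finite and non-negative by definition, so \(\delta\) is a well-defined (possibly negative) real scalar, as required. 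No activity or strong-activity assumption on the input scheduling matrices is needed, since the inequality is trivially respected when some \(\mbs{\Phi}_{u,i}(t) = \mbf{0}\) (the corresponding \(\tilde{\delta}_i\) simply becomes zero in the upper case or non-positive in the lower case).
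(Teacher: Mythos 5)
Your proof is correct and establishes the lemma as stated, but the final bounding step is genuinely different from the paper's. You push the supremum/infimum inside the sum on a per-subsystem basis: each term is bounded by \(\tilde{\delta}_i = \lambda_{\max}(\mbf{R}_i)\,\bar{\sigma}_{u,i}^2\) or \(\lambda_{\max}(\mbf{R}_i)\,\bar{\nu}_{u,i}^2\), and \(\delta = \sum_i \tilde{\delta}_i\). The paper instead partitions \(\mathcal{N}\) into the index sets \(\mathcal{R}_{>0}\) and \(\mathcal{R}_{<0}\) of \cref{eqn:R_sets}, replaces the individual eigenvalue magnitudes by \(\delta_{\max}=\max_{i\in\mathcal{R}_{>0}}\lvert\lambda_{\max}(\mbf{R}_i)\rvert\) and \(\delta_{\min}=\min_{i\in\mathcal{R}_{<0}}\lvert\lambda_{\max}(\mbf{R}_i)\rvert\), and only then takes \(\sup_t\) and \(\inf_t\) of the \emph{sums} \(\sum_i\sigma_{u,i}^2(t)\) and \(\sum_i\nu_{u,i}^2(t)\), arriving at \(\delta = \delta_{\max}\bar{\sigma}_u - \delta_{\min}\bar{\nu}_u\) in \cref{eq:QSR_delta}. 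Neither bound dominates the other in general: yours retains the individual \(\lambda_{\max}(\mbf{R}_i)\) (less conservative there) but uses a sum of suprema/infima, which is more conservative than the paper's supremum/infimum of sums --- this is precisely the trade-off the paper itself discusses when comparing \cref{eqn:QSR_delta_ISP_base_case} with \cref{eqn:ryans_delta_ISP}. The practical consequence is downstream: with your \(\delta\), the strict-sign refinement in \Cref{col:QSR-Dissipative_of_GS_wrt_R} (case 2.1) would fail under mere strong activity, because each individual \(\bar{\nu}_{u,i}=\inf_t\nu_{u,i}(t)\) can be zero when every subsystem is ``turned off'' at some time (as with the signals in \cref{eq:scalar_scheduling_signals}); you would instead need at least one permanently full-rank \(\mbs{\Phi}_{u,i}\), which is the stronger condition of~\cite{QSR} that the paper deliberately avoids. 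So your argument proves the lemma, but the paper's grouping is what makes the ISP and VSP special cases in \Cref{Discussion} go through.
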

    \begin{proof}
        Using the Rayleigh inequality, it follows that
        \begin{align} \label{eqn:Q_S_0_tookout_R_sum}
            \sum_{i \in \mathcal{N}} \left\langle \mbf{u}_i, \mbf{R}_i \mbf{u}_i \right\rangle_T
            &\leq
            \sum_{i \in \mathcal{N}} \lambda_{\max}\mleft( \mbf{R}_i \mright) \mleft\| \mbf{u}_{i}\mright\|_{2T}^{2}.
        \end{align}
        Defining the index sets
        \begin{subequations} \label{eqn:R_sets}
            \begin{align}
                \mathcal{R}_{>0} &= \mleft\{\,i \in \mathcal{N} \mid \lambda_{\max}\mleft( \mbf{R}_i \mright) > 0 \,\mright\},
                \\%
                \mathcal{R}_{<0} &= \mleft\{\,i \in \mathcal{N} \mid \lambda_{\max}\mleft( \mbf{R}_i \mright) < 0 \,\mright\},
                \\%
                \mathcal{R}_{0}  &= \mleft\{\,i \in \mathcal{N} \mid \lambda_{\max}\mleft( \mbf{R}_i \mright) = 0 \,\mright\},
            \end{align}
        \end{subequations}
        it follows that
        \begin{equation} \label{eqn:R_sets_sum}
            \sum_{i \in \mathcal{N}} \lambda_{\max}\mleft( \mbf{R}_i \mright)
            =
            \sum_{i \in \mathcal{R}_{>0}} \lambda_{\max}\mleft( \mbf{R}_i \mright)
            +
            \sum_{i \in \mathcal{R}_{<0}} \lambda_{\max}\mleft( \mbf{R}_i \mright).
        \end{equation}
        Expanding \cref{eqn:Q_S_0_tookout_R_sum} using \cref{eqn:R_sets_sum} results in
        \begin{equation} \label{eqn:R_sets_sum_expanded}
            \begin{split}
                \sum_{i \in \mathcal{N}} \left\langle \mbf{u}_i, \mbf{R}_i \mbf{u}_i \right\rangle_T
                \leq
                &\sum_{i \in \mathcal{R}_{>0}} \lambda_{\max}\mleft( \mbf{R}_i \mright) \mleft\| \mbf{u}_i\mright\|_{2T}^{2} 
                \\
                    \quad &+
                    \sum_{i \in \mathcal{R}_{<0}} \lambda_{\max}\mleft( \mbf{R}_i \mright) \mleft\|\mbf{u}_i\mright\|_{2T}^{2}.
            \end{split}
        \end{equation}
        Substituting the relation in \cref{eq:GS_io_u_i} into \cref{eqn:R_sets_sum_expanded}, it follows that
        \begin{equation} \label{eqn:R_sets_sum_expanded_2}
            \begin{split}
            \sum_{i \in \mathcal{N}} \left\langle \mbf{u}_i, \mbf{R}_i \mbf{u}_i \right\rangle_T
            \leq
            {}&\delta_{\max} \sum_{i \in \mathcal{R}_{>0}}\mleft\| \mbs{\Phi}_{u, i} \mbf{u}\mright\|_{2T}^{2} 
            \\
            \quad &-
            \delta_{\min}\sum_{i \in \mathcal{R}_{<0}} \mleft\| \mbs{\Phi}_{u, i} \mbf{u}\mright\|_{2T}^{2},
            \end{split}
        \end{equation}
        \vspace{-3pt}
        where 
            \begin{align*}
                \delta_i &= \lvert \lambda_{\max}\mleft( \mbf{R}_i \mright) \rvert,
                &
                \delta_{\max} &= \max_{i \in \mathcal{R}_{>0}} \delta_i,
                &
                \delta_{\min} &= \min_{i \in \mathcal{R}_{<0}} \delta_i.
            \end{align*}
        % \end{gather*}
        Expanding \cref{eqn:R_sets_sum_expanded_2} and using the Rayleigh inequality yields
        \begin{align}
            \sum_{i \in \mathcal{N}}\left\langle \mbf{u}_i, \mbf{R}_i \mbf{u}_i \right\rangle_T
            &\leq
            \delta_{\max} \sum_{i \in \mathcal{R}_{>0}}  \int_{0}^{T} \sigma_{u,i}^2(t) \mleft\| \mbf{u}(t)\mright\|_{2}^{2}\,\dt \nonumber
            \\%
            &\quad
            -\delta_{\min} \sum_{i \in \mathcal{R}_{<0}}  \int_{0}^{T} \nu_{u,i}^2(t) \mleft\| \mbf{u}(t)\mright\|_{2}^{2}\,\dt\nonumber
            \\%
            &\leq
                \delta_{\max}\bar{\sigma}_{u} \int_{0}^{T} \mleft\| \mbf{u}(t)\mright\|_{2}^{2}\,\dt\nonumber
                \\%
                &\quad
                -\delta_{\min} \bar{\nu}_{u}\int_{0}^{T} \mleft\| \mbf{u}(t)\mright\|_{2}^{2}\,\dt\nonumber
            \\%
            &= \delta \mleft\| \mbf{u}\mright\|_{2T}^{2}= \left\langle \mbf{u}, \mbf{R} \mbf{u} \right\rangle_T,\nonumber
        \end{align}
        where
        \begin{subequations} \label{eqn:QSR_R}
            \begin{gather}
                \bar{\sigma}_{u} =\sup_{t \in \mathbb{R}_{\geq0}} \sum_{i \in \mathcal{R}_{>0}} \sigma_{u,i}^2(t) \geq  0, \label{eqn:sigma_bar_u}\\%
                \bar{\nu}_{u} = \inf_{t \in \mathbb{R}_{\geq0}} \sum_{i \in \mathcal{R}_{<0}} \nu_{u,i}^2(t) \geq  0
                ,\\%
                \delta = \delta_{\max}\bar{\sigma}_{u} -\delta_{\min} \bar{\nu}_{u}
                , \label{eq:QSR_delta}
            \end{gather}
        \end{subequations}
        and \(\mbf{R} = \delta \eye\).
    \end{proof}

    Since \Cref{lemma:QSR-Dissipative_of_GS_wrt_R} considers the most general case of subsystems being QSR-dissipative with \(\mbf{R}_i \in \mathbb{S}^{n_u}\) and imposes no further restriction on the eigenvalues of \(\mbf{R}_i\), the sign of \(\delta\) in~\cref{eq:QSR_delta} cannot be determined without additional information.
    \begin{corollary} \label{col:QSR-Dissipative_of_GS_wrt_R}
        Consider the gain-scheduled system, $\bar{\bm{\mathcal{G}}}$, in \Cref{fig:GS_G} where each subsystem \(\bm{\mathcal{G}}_i\) for \( i \in \mathcal{N}\) is QSR-dissipative. Given the sets \(\mathcal{R}_{>0}\), \(\mathcal{R}_{<0}\), and \(\mathcal{R}_{0}\) defined in \cref{eqn:R_sets}, the special cases of \Cref{lemma:QSR-Dissipative_of_GS_wrt_R} are as follows:
        \begin{enumerate}
            \item{%
                \(\delta \in \mathbb{R}_{\geq 0}\) provided \(\mathcal{R}_{>0} \neq \varnothing\) and \(\mathcal{R}_{<0} = \varnothing\).
            }%
            \begin{enumerate}[1.1)]
                \item{%
                    \(\delta \in \mathbb{R}_{>0}\), provided further \(\exists t \in \mathbb{R}_{\geq0} \) and an \mbox{\(i \in \mathcal{R}_{>0}\)} such that \(\mbs{\Phi}_{u, i}(t) \neq \mbf{0} \).
                }%
            \end{enumerate}
            \item{%
                \(\delta \in \mathbb{R}_{\leq 0}\) provided \(\mathcal{R}_{>0} = \varnothing\) and \(\mathcal{R}_{<0} \neq \varnothing\).
            }%
            \begin{enumerate}[2.1)]
                \item{% 
                    \(\delta \in \mathbb{R}_{<0}\), provided further \(\mathcal{R}_{<0} \cap \mathcal{F}_u(t) \neq \varnothing\), \mbox{\(\forall t \in \mathbb{R}_{\geq0}\)}, meaning for \(i \in \mathcal{R}_{<0}\), the nonempty subset of input scheduling matrices, \(\mbs{\Phi}_{u, i}(t)\), is strongly active.
                }%
            \end{enumerate}
            \item{%
                \(\delta = 0\) provided \(\mathcal{R}_{>0} = \varnothing\) and \(\mathcal{R}_{<0} = \varnothing\).
            }%
        \end{enumerate}
    \end{corollary}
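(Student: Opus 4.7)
The plan is to read off the sign of $\delta$ directly from its closed-form expression in \cref{eq:QSR_delta}, $\delta = \delta_{\max}\bar{\sigma}_{u} - \delta_{\min}\bar{\nu}_{u}$, and then track how each hypothesis on $\mathcal{R}_{>0}$, $\mathcal{R}_{<0}$, and the activity of the input scheduling matrices propagates into that sign. Three preliminary observations do most of the work: $\bar{\sigma}_{u}$ and $\bar{\nu}_{u}$ from \cref{eqn:QSR_R} are always nonnegative, being a supremum and an infimum of sums of squared singular values; whenever $\mathcal{R}_{>0}$ (respectively $\mathcal{R}_{<0}$) is nonempty, $\delta_{\max}$ (respectively $\delta_{\min}$) is strictly positive, since it is a max (min) of the strictly positive quantities $\lvert\lambda_{\max}(\mbf{R}_i)\rvert$; and when one of the index sets is empty, the corresponding sum in $\bar{\sigma}_{u}$ or $\bar{\nu}_{u}$ is vacuously zero, annihilating the associated term in $\delta$.

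Cases 1, 2, and 3 then follow mechanically. Case 1 zeros the second term of $\delta$, leaving $\delta = \delta_{\max}\bar{\sigma}_{u} \geq 0$; case 2 zeros the first, leaving $\delta = -\delta_{\min}\bar{\nu}_{u} \leq 0$; and case 3 zeros both, giving $\delta = 0$. For case 1.1 I would sharpen case 1 by arguing $\bar{\sigma}_{u} > 0$: the hypothesis supplies a time $t^{\star} \in \mathbb{R}_{\geq 0}$ and an index $i^{\star} \in \mathcal{R}_{>0}$ with $\mbs{\Phi}_{u,i^{\star}}(t^{\star}) \neq \mbf{0}$, so $\sigma_{u,i^{\star}}(t^{\star}) > 0$, which makes the summand in \cref{eqn:sigma_bar_u} strictly positive at $t^{\star}$ and hence its supremum over $t$ strictly positive. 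Combined with $\delta_{\max} > 0$, this gives $\delta > 0$.

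Case 2.1 is the one I expect to have to work hardest on. The hypothesis $\mathcal{R}_{<0} \cap \mathcal{F}_{u}(t) \neq \varnothing$ for every $t \in \mathbb{R}_{\geq 0}$ yields pointwise positivity of $\sum_{i \in \mathcal{R}_{<0}} \nu_{u,i}^{2}(t)$: at each $t$ some $i \in \mathcal{R}_{<0}$ has $\mbs{\Phi}_{u,i}(t)$ full rank, so $\nu_{u,i}(t) > 0$ there. What is actually needed, however, is uniform positivity, namely $\bar{\nu}_{u} = \inf_{t} \sum_{i \in \mathcal{R}_{<0}} \nu_{u,i}^{2}(t) > 0$, and pointwise positivity on the unbounded interval $\mathbb{R}_{\geq 0}$ does not in general deliver a strictly positive infimum. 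I would therefore argue this either by reading strong activity as providing a uniform lower bound on the smallest singular value across $\mathcal{R}_{<0} \cap \mathcal{F}_{u}(t)$, or by invoking regularity already implicit in the scheduling matrix framework (\cref{eqn:scheduling_matrix_boundedness} together with piecewise continuity). Once $\bar{\nu}_{u} > 0$ is secured, $\delta_{\min} > 0$ immediately yields $\delta < 0$, completing the corollary.
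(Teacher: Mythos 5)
Your proposal follows essentially the same route as the paper: every case is read off from the closed form \(\delta = \delta_{\max}\bar{\sigma}_{u} - \delta_{\min}\bar{\nu}_{u}\) in \cref{eq:QSR_delta}, with case 1.1 sharpened exactly as you do by showing \(\bar{\sigma}_{u}>0\) from a single time and index with \(\mbs{\Phi}_{u,i}(t)\neq\mbf{0}\). Your concern about case 2.1 is well placed but does not distinguish your argument from the paper's: the paper merely rewrites \(\bar{\nu}_{u}\) as \(\inf_{t}\sum_{i\in\mathcal{R}_{<0}\cap\mathcal{F}_{u}(t)}\nu_{u,i}^{2}(t)\) and asserts positivity, which, as you observe, does not follow from pointwise full rank alone on an unbounded horizon (take \(\nu_{u,i}(t)=e^{-t}\)); strong activity must indeed be read as supplying a uniform lower bound on the smallest singular values, and your explicit flagging of that step is, if anything, more careful than the published proof.
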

    \begin{proof}
        The proof for each case follows from the definition of \(\delta\) in \cref{eq:QSR_delta} and the sets \(\mathcal{R}_{>0}\), \(\mathcal{R}_{<0}\), and \(\mathcal{R}_{0}\) in \cref{eqn:R_sets}. The strictly positive case follows from the fact that \( \bar{\sigma}_{u} \in \mathbb{R}_{>0}\), provided there exists a \(t \in \mathbb{R}_{\geq0}\) and an \(i \in \mathcal{R}_{>0}\) such that \(\mbs{\Phi}_{u, i}(t) \neq \mbf{0} \). 
        The strictly negative case follows from
        \begin{equation*}
            \bar{\nu}_{u} = \inf_{t \in \mathbb{R}_{\geq0}} \sum_{i \in \mathcal{R}_{<0}} \nu_{u,i}^{2}(t) = \inf_{t \in \mathbb{R}_{\geq0}} \sum_{i \in \mathcal{R}_{<0} \cap \mathcal{F}_u(t)} \nu_{u,i}^{2}(t) > 0,
        \end{equation*}
        provided that \(\mathcal{R}_{<0} \cap \mathcal{F}_u(t) \neq \varnothing\) for all \(t \in \mathbb{R}_{\geq0}\).
    \end{proof}
    % ****************************************************************************************************************** %
    % ****************************************** QSR common S **************************************************** %
    % ****************************************************************************************************************** %
    \begin{theorem} \label{theorem:2}%
        Given each subsystem \(\bm{\mathcal{G}}_i\) for \( i \in \mathcal{N}\) is QSR-dissipative with \(\mbf{Q}_i \in \mathbb{S}_{-}^{n_y}\), \(\mbf{S}_i = \mbf{S} \in \mathbb{R}^{n_y \times n_u} \), and \(\mbf{R}_i \in \mathbb{S}^{n_u}\), the gain-scheduled system, $\bar{\bm{\mathcal{G}}}$, in \Cref{fig:GS_G} is QSR-dissipative, 
        % with \(\mbf{Q} = -\varepsilon \eye\), \(\mbf{S} = \bar{\mbf{S}}\), and \(\mbf{R} = \delta \eye\) where \(\varepsilon \in \mathbb{R}_{>0} \) and \(\delta \in \mathbb{R} \), 
        provided the output scheduling matrices are active, and the scheduling matrices pseudo-commute with \(\mbf{S}\).
    \end{theorem}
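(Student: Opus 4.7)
The plan is to sum the per-subsystem QSR inequality \cref{eq:QSR-Dissipativity_i} over $i \in \mathcal{N}$ under the common-$\mbf{S}$ hypothesis, and then rewrite each of the three aggregate quadratic forms in terms of the aggregate input $\mbf{u}$ and output $\mbf{y}$ using three building blocks already established earlier in the excerpt. The natural candidate storage function is $V = \sum_{i \in \mathcal{N}} V_i$, so that $\sum_{i \in \mathcal{N}} \tilde{V}_i = V(\mbf{x}(T)) - V(\mbf{x}(0))$ matches the right-hand side required by \Cref{def:QSR-Dissipativity}.

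The $\mbf{Q}$ and $\mbf{R}$ terms require no new work. Under the active-output-scheduling hypothesis, \Cref{lemma:QSR-Dissipative_of_GS_wrt_Q} immediately gives $\sum_{i \in \mathcal{N}} \langle \mbf{y}_i, \mbf{Q}_i \mbf{y}_i\rangle_T \leq \langle \mbf{y}, \mbf{Q}\mbf{y}\rangle_T$ with $\mbf{Q} = -\varepsilon\mbf{1}$ from \cref{eqn:Q_epsilon}. Similarly, \Cref{lemma:QSR-Dissipative_of_GS_wrt_R} gives $\sum_{i \in \mathcal{N}} \langle \mbf{u}_i, \mbf{R}_i \mbf{u}_i\rangle_T \leq \langle \mbf{u}, \mbf{R}\mbf{u}\rangle_T$ with $\mbf{R} = \delta\mbf{1}$ as in \cref{eq:QSR_delta}. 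The sign of $\delta$ is not pinned down a priori, but that is immaterial for QSR-dissipativity itself and is handled separately by \Cref{col:QSR-Dissipative_of_GS_wrt_R} when recovering special cases in \Cref{Discussion}.

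The only genuinely new piece is the cross term, and this is precisely where the common-$\mbf{S}$ assumption pays off. Because every scheduling pair $(\mbs{\Phi}_{u,i}, \mbs{\Phi}_{y,i})$ pseudo-commutes with the \emph{same} matrix $\mbf{S}$, the identity chain in \cref{eqn:QSR_yi_to_ybar_for_Si} from the proof of \Cref{lemma:QSR-Dissipativity_of_subsystem_i} applies verbatim to each subsystem, producing $\langle \mbf{y}_i, \mbf{S}\mbf{u}_i\rangle_T = \langle \bar{\mbf{y}}_i, \mbf{S}\mbf{u}\rangle_T$ for each $i \in \mathcal{N}$. Summing over $i$ and using the aggregate relation $\mbf{y} = \sum_{i \in \mathcal{N}} \bar{\mbf{y}}_i$ from \cref{eq:GS_io_y_c} together with bilinearity of the inner product collapses the cross-term sum exactly to $\langle \mbf{y}, \mbf{S}\mbf{u}\rangle_T$, with no scaling factor incurred.

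Assembling the three bounds with the correct $+2$ factor on the cross term reproduces the QSR inequality of \Cref{def:QSR-Dissipativity} with triple $(\mbf{Q},\mbf{S},\mbf{R}) = (-\varepsilon\mbf{1}, \mbf{S}, \delta\mbf{1})$, establishing QSR-dissipativity of $\bar{\bm{\mathcal{G}}}$. I do not anticipate a serious obstacle: the common-$\mbf{S}$ hypothesis is exactly what allows the cross-term sum to telescope cleanly, so none of the completion-of-squares and Cauchy\textendash Schwartz machinery invoked in \Cref{theorem:1} (needed there because heterogeneous $\mbf{S}_i$ prevented such a collapse) has to be deployed. The main thing to verify carefully is merely that no factor is lost when combining the three inequalities and that the resulting $\mbf{Q}$, $\mbf{R}$ have the correct symmetry and sign class required by \Cref{def:QSR-Dissipativity}.
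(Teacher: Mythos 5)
Your proposal is correct and follows essentially the same route as the paper: sum the per-subsystem dissipation inequalities, collapse the cross term via the pseudo-commutativity identity $\left\langle \mbf{y}_i, \mbf{S}\mbf{u}_i\right\rangle_T = \left\langle \bar{\mbf{y}}_i, \mbf{S}\mbf{u}\right\rangle_T$ summed against $\mbf{y} = \sum_{i\in\mathcal{N}}\bar{\mbf{y}}_i$, and invoke \Cref{lemma:QSR-Dissipative_of_GS_wrt_Q,lemma:QSR-Dissipative_of_GS_wrt_R} for the $\mbf{Q}$ and $\mbf{R}$ terms. The only cosmetic difference is that the paper rederives the cross-term chain inline rather than citing it from the proof of \Cref{lemma:QSR-Dissipativity_of_subsystem_i}.
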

    \begin{proof}%
        Summing \cref{eq:QSR-Dissipativity_i} over all \(N\) subsystems yields
        \begin{equation}
            \begin{split}
                \sum_{i \in \mathcal{N}}\tilde{V}_i
                &\leq 
                \sum_{i \in \mathcal{N}} \left\langle \mbf{y}_i, \mbf{Q}_i \mbf{y}_i \right\rangle_T
                + 2\sum_{i \in \mathcal{N}}\left\langle \mbf{y}_i, \mbf{S} \mbf{u}_i \right\rangle_T
                \\%
                &\quad\quad
                + \sum_{i \in \mathcal{N}} \left\langle \mbf{u}_i, \mbf{R}_i \mbf{u}_i \right\rangle_T. \label{eqn:th2_start}
            \end{split}
        \end{equation}
        Consider the term \(\sum_{i \in \mathcal{N}} \left\langle \mbf{y}_i, \mbf{S} \mbf{u}_i \right\rangle_T\). Given the scheduling matrices pseudo-commute with respect to \(\mbf{S}\) such that \mbox{\(\mbs{\Phi}_{y, i}^{\trans}(t) \mbf{S} = \mbf{S} \mbs{\Phi}_{u, i}(t)\)}, and using \cref{eq:GS_QSR_io} and \cref{eq:GS_io_y_c}, it follows that
        \begin{align}
            \sum_{i \in \mathcal{N}} \left\langle \mbf{y}_i, \mbf{S} \mbf{u}_i \right\rangle_T 
            &= 
            \sum_{i \in \mathcal{N}} \left\langle \mbf{y}_i, \mbf{S}\mbs{\Phi}_{u, i} \mbf{u}\right\rangle_T \nonumber\\%
            &= 
            \sum_{i \in \mathcal{N}}\left\langle \mbf{y}_i, \mbs{\Phi}_{y, i}^{\trans}\mbf{S} \mbf{u}\right\rangle_T \nonumber
            \\%
            &= 
            \sum_{i \in \mathcal{N}}\left\langle \mbs{\Phi}_{y, i}\mbf{y}_i, \mbf{S} \mbf{u}\right\rangle_T
            = \left\langle \mbf{y}, \mbf{S} \mbf{u}\right\rangle_T. \label{eqn:th2_S_y_to_u}
        \end{align}
        Consequently, using \Cref{lemma:QSR-Dissipative_of_GS_wrt_Q,lemma:QSR-Dissipative_of_GS_wrt_R} and substituting \cref{eqn:th2_S_y_to_u} into \cref{eqn:th2_start} yields 
        \begin{equation*}
            \begin{split}
                \tilde{V} =
                \sum_{i \in \mathcal{N}} \tilde{V}_i
                &\leq 
                \left\langle \mbf{y}, \mbf{Q} \mbf{y} \right\rangle_T
                + 
                2 \left\langle \mbf{y}, \mbf{S} \mbf{u} \right\rangle_T 
                +
                \left\langle \mbf{u}, \mbf{R} \mbf{u} \right\rangle_T,
            \end{split}
        \end{equation*}
        where \(\mbf{Q} = -\varepsilon\eye\) and \(\mbf{R} = \delta\eye\), with \(\varepsilon\) and \(\delta\) defined as per \Cref{eqn:Q_epsilon,eqn:QSR_R}, respectively. Therefore, the gain-scheduled system, $\bar{\bm{\mathcal{G}}}$, is QSR-dissipative with \(\mbf{Q} \in \mathbb{S}_{-}^{n_y}\), \(\mbf{S} \in \mathbb{R}^{n_y \times n_u} \), and \mbox{\(\mbf{R} \in \mathbb{S}^{n_u}\)}.
    \end{proof}
% ****************************************************************************************************************** %
% ****************************************** Discussion **************************************************** %
% ****************************************************************************************************************** %
\section{Discussion} \label{Discussion}
In this section, the special cases of QSR-dissipativity, as defined in \Cref{def:QSR-Dissipativity}, are discussed for the gain-scheduled system $\bar{\bm{\mathcal{G}}}$ in \Cref{fig:GS_G}. For each case, to compare the scheduling matrix results presented in \Cref{theorem:1,theorem:2} with the existing literature on scalar scheduling signals, the special case of \(\mbs{\Phi}_{u, i}(t) = s_i(t)\eye\) and \(\mbs{\Phi}_{y, i}(t) = s_i(t)\eye\) for all \(i \in \mathcal{N}\) is considered and referred to as the \emph{base case}. 

Moreover, the matrix-scheduling architecture in~\cite{moalemi_forbes_ccta} includes additional constant scaling parameters, \(\alpha_i \in \mathbb{R}_{>0}\), for each subsystem \(\bm{\mathcal{G}}_i\). These scaling parameters are not considered in this work to maintain focus on the proposed generalized matrix-gain-scheduling architecture. However, they can easily be incorporated into the results presented in \Cref{theorem:1,theorem:2}. Therefore, when comparing the results of this work with~\cite{moalemi_forbes_ccta}, the scaling parameters are assumed to be \(\alpha_i = 1\) for all \(i \in \mathcal{N}\).

% ************************************************ Passive ******************************************************** %
\subsection{Subsystems are Passive}
    Consider the case where each subsystem \(\bm{\mathcal{G}}_i\) for \( i \in \mathcal{N}\) is passive with \(\mbf{Q}_i = \mbf{0}\), \(\mbf{S}_i = \frac{1}{2} \eye\), and \(\mbf{R}_i = \mbf{0}\). \Cref{theorem:2} suggests $\bar{\bm{\mathcal{G}}}$ is QSR-dissipative with \(\mbf{Q} = \mbf{0}, \mbf{S} = \frac{1}{2} \eye\), and \mbox{\(\mbf{R} = \mbf{0}\)}, and therefore is passive, provided the scheduling matrices pseudo-commute with \(\mbf{S}\). Note, \Cref{theorem:2} further requires the output scheduling matrices to be active. However, this condition can be relaxed for the passive case since \(\mbf{Q}_i = \mbf{0}\), for all \(i \in \mathcal{N}\). As discussed in \Cref{subsec:scheduling_matrix_construction}, for this choice of \(\mbf{S}\), the pseudo-commutativity condition leads to \mbox{\(\mbs{\Phi}_{u, i}(t) = \mbs{\Phi}_{y, i}^{\trans}(t)\)} for all \(i \in \mathcal{N}\) and \(t \in \mathbb{R}_{\geq0}\). Therefore, the gain-scheduling of passive subsystems using scheduling matrices results in an overall passive gain-scheduled system. For the base case, this result is consistent with the results shown in~\cite[Theorem~8.2]{Marquez}, which is again shown in~\mbox{\cite[Theorem~2]{QSR}} and~\cite[Corollary~6.1.2]{Forbes_thesis}.
% ************************************************ ISP ******************************************************** %
\subsection{Subsystems are ISP}
    Consider the case where each subsystem \(\bm{\mathcal{G}}_i\) for \( i \in \mathcal{N}\) is ISP with \(\mbf{Q}_i = \mbf{0}\), \(\mbf{S}_i = \frac{1}{2} \eye\), and \(\mbf{R}_i = -\delta_i \eye\), where \(\delta_i \in \mathbb{R}_{>0}\). \Cref{theorem:2} suggests $\bar{\bm{\mathcal{G}}}$ is QSR-dissipative with 
    \begin{align*}
        \mbf{Q} &= \mbf{0},
        &
        \mbf{S} &= \frac{1}{2} \eye,
        &
        \mbf{R} &= -\delta \eye,
    \end{align*}
    where 
    \begin{align}\label{eqn:QSR_delta_ISP_not_strictly_active}
        \delta 
        & = \min_{i \in \mathcal{N}} \mleft( \delta_i \mright) \inf_{t \in \mathbb{R}_{\geq0}}\sum_{i \in \mathcal{N}} \nu_{u,i}^2(t) \geq 0,
    \end{align}
    provided the scheduling matrices pseudo-commute with \mbox{\(\mbf{S} = \frac{1}{2} \eye\)}. Furthermore, if the scheduling matrices are designed to be strongly active, \Cref{col:QSR-Dissipative_of_GS_wrt_R} with \(\mathcal{R}_{<0} = \mathcal{N}\) suggests $\bar{\bm{\mathcal{G}}}$ is ISP with
    \begin{equation} \label{eqn:QSR_delta_ISP_fully_active}
        \delta = \delta_{\min} \inf_{t \in \mathbb{R}_{\geq0}} \sum_{i \in \mathcal{F}_u(t)} \nu_{u,i}^2(t) > 0,
    \end{equation} 
    where \(\delta_{\min} = \min_{i \in \mathcal{N}} \delta_i\). Note, as per the definition of \(\mathcal{F}_u(t)\) in~\cref{eqn:Fj(t)}, \(s_i(t) = 0\) for \(i \in \mathcal{N} \setminus \mathcal{F}_u(t)\). The ISP coefficient in~\cref{eqn:QSR_delta_ISP_fully_active}, exactly recovers the matrix scheduling of ISP subsystems result shown in~\cite[Theorem~1]{moalemi_forbes_ccta}. For the base case, the ISP coefficient \(\delta\) in \cref{eqn:QSR_delta_ISP_fully_active} simplifies to 
    \begin{align} \label{eqn:QSR_delta_ISP_base_case}
        \delta &= \delta_{\min} \inf_{t \in \mathbb{R}_{\geq0}} \sum_{i \in \mathcal{F}_u(t)} \lvert s_i(t) \rvert^2,
    \end{align}
    matching the results in~\cite[Theorem~1]{Damaren_passive_map}. In~\cite[Theorem~2]{QSR} a slightly different bound on \(\delta\) is provided as
    \begin{align} \label{eqn:ryans_delta_ISP}
        \delta &=  \sum_{i \in \mathcal{N}} \delta_i \nu_{i}^{2},
    \end{align}
    where \(\nu_{i} = \inf_{t \in \mathbb{R}_{\geq0}}  \lvert s_i(t) \rvert \geq 0\) for all \(i \in \mathcal{N}\). The authors of~\cite{QSR} argue that \cref{eqn:ryans_delta_ISP} is a less conservative bound on \(\delta\) compared to~\cite[Theorem~1]{Damaren_passive_map}, since it involves the summation of \(\delta_i\) across all subsystems instead of the minimum value of \(\delta_i\). However, \cref{eqn:ryans_delta_ISP} also involves the summation of infima, which is more conservative than the infimum of the summation in \cref{eqn:QSR_delta_ISP_base_case}. Therefore, it is not guaranteed that \cref{eqn:ryans_delta_ISP} is less conservative than \cref{eqn:QSR_delta_ISP_base_case}. Additionally, to maintain the ISP property of the subsystems,~\cite[Theorem~2]{QSR} requires the existence of at least one \(i \in \mathcal{N}\) such that \(\nu_{i} > 0\). This means that there exists at least one scheduled subsystem \(\bar{\bm{\mathcal{G}}}_i\) with \(i \in \mathcal{N}\) that is always active in the sense that \(s_i(t) \neq 0\) for all \(t \in \mathbb{R}_{\geq0}\). Depending on the application, one may need to freely switch between the scheduled subsystems, which may not be possible if at least one subsystem is always active. In practice, it can be argued that by setting the scheduling signal to an arbitrary small value, the scheduled subsystem can be made inactive while still retaining the ISP property of the gain-scheduled system. 
    
    Contrary to~\cite{QSR}, at any time,~\cite[Theorem~1]{Damaren_passive_map} requires at least one scheduling signal to be nonzero. This ensures that for all time \(t \in \mathbb{R}_{\geq0}\), there exists at least one \(i \in \mathcal{N}\) such that \(\mbf{u}_i(t)\) and \(\bar{\mbf{y}}_i(t)\) are nonzero, provided that \(\mbf{u}(t)\) and \(\mbf{y}_{i}(t)\) are nonzero. Here, the scheduling matrices are required to pseudo-commute with \(\mbf{S}\) and be strongly active to ensure for all time \(t \in \mathbb{R}_{\geq0}\), there exists at least one \(i \in \mathcal{N}\) such that \mbox{\(\mbs{\Phi}_{u, i}(t) = \mbs{\Phi}_{y, i}^{\trans}(t)\)} is full rank. The existence of a full rank scheduling matrix at each time ensures that there exists at least one \(i \in \mathcal{N}\) such that \(\mbf{u}_i(t)\) and \(\bar{\mbf{y}}_i(t)\) are nonzero, provided that \(\mbf{u}(t)\) and \(\mbf{y}_{i}(t)\) are nonzero, which can be thought of as direct extension of the aforementioned condition in~\cite[Theorem~1]{Damaren_passive_map}. 
% ************************************************ OSP ******************************************************** %
\subsection{Subsystems are OSP}
    Consider the case where each subsystem \(\bm{\mathcal{G}}_i\) for \( i \in \mathcal{N}\) is OSP with \(\mbf{Q}_i = - \varepsilon_i\eye\), \(\mbf{S}_i = \frac{1}{2} \eye\), and \(\mbf{R}_i = \mbf{0}\), where \(\varepsilon_i \in \mathbb{R}_{>0}\). \Cref{theorem:1} suggests $\bar{\bm{\mathcal{G}}}$ is QSR-dissipative with
    \begin{gather*}
        \mbf{Q} = -\varepsilon_{\min} \eye, \quad\quad
        \mbf{S} = \varepsilon_{\min} \sum_{i \in \mathcal{N}}\frac{\bar{\sigma}_{y,i}^2}{2\varepsilon_i} \eye,
        \\%
        \mbf{R} = 
        \mleft( 
            N
            \sum_{i \in \mathcal{N}}
            \mleft( 
                \frac{\bar{\sigma}_{y,i}^4}{4\varepsilon_i}
            \mright)
            -
            \varepsilon_{\min}
            \mleft( \sum_{i \in \mathcal{N}}\frac{\bar{\sigma}_{y,i}^2}{2\varepsilon_i} \mright) ^{2}
        \mright) \eye,
    \end{gather*}
    provided the scheduling matrices pseudo-commute with \mbox{\(\mbf{S} = \frac{1}{2} \eye\)}. Notice \(\mbf{R} \in \mathbb{S}_{+}^{n_u}\) as a consequence of \Cref{lemma:AM-QM}, since
    \begin{align*}
        N
        \sum_{i \in \mathcal{N}}
        % \mleft( 
            &\frac{\bar{\sigma}_{y,i}^4}{4\varepsilon_i}
        % \mright)
            =
            N
            \sum_{i \in \mathcal{N}}
            % \mleft( 
                \varepsilon_{i}
                \frac{\bar{\sigma}_{y,i}^4}{4 \varepsilon_{i}^{2}}
            % \mright)
            \geq
            \mleft( 
                \sum_{i \in \mathcal{N}}
                % \mleft( 
                    \sqrt{\varepsilon_{i}}
                    \frac{\bar{\sigma}_{y,i}^2}{2 \varepsilon_{i}}
                % \mright)
            \mright)^2\nonumber
            \\%
            &\geq
            \mleft(
                \min_{i \in \mathcal{N}}\mleft( \sqrt{\varepsilon_{i}}\mright) 
                \sum_{i \in \mathcal{N}}
                % \mleft( 
                    \frac{\bar{\sigma}_{y,i}^2}{2 \varepsilon_{i}}
                % \mright)
            \mright)^2 \hspace{-4pt}=
            \varepsilon_{\min}
            \mleft(
                \sum_{i \in \mathcal{N}}
                    \frac{\bar{\sigma}_{y,i}^2}{2 \varepsilon_{i}}
            \mright)^2,
    \end{align*}
    with \(\mbf{R} = \mbf{0}\) if and only if \(\bar{\sigma}_{y,i}^2 /\sqrt{\varepsilon_{i}} = k\), for some positive constant \(k \in \mathbb{R}_{>0}\) and for all \(i \in \mathcal{N}\). Therefore, \Cref{theorem:1} does not provide a tight bound on \(\mbf{R}\) for the OSP case. Similarly, for the base case,~\cite[Theorem~1]{QSR} also leads to \(\mbf{R} \in \mathbb{S}_{+}^{n_u}\), with \(\mbf{R} = \mbf{0}\) if and only if \(\sup_{t \in \mathbb{R}_{\geq0}} \lvert s_i(t) \rvert^{2} /\varepsilon_i = k\), for some positive constant \(k \in \mathbb{R}_{>0}\), for all \(i \in \mathcal{N}\). However, \Cref{theorem:2} suggests $\bar{\bm{\mathcal{G}}}$ is QSR-dissipative with 
        \begin{align*}
            \mbf{Q} &= - \varepsilon \eye,&
            \mbf{S} &= \frac{1}{2} \eye, &
            \mbf{R} &= \mbf{0},
        \end{align*}
    and therefore is OSP with 
    \begin{equation} \label{eqn:QSR_epsilon_OSP}
        \varepsilon = \frac{\varepsilon_{\min}}{\bar{\sigma}_{\mbs{\Psi}}^2} > 0,
    \end{equation}
    provided the output scheduling matrices are active and the scheduling matrices pseudo-commute with \(\mbf{S} = \frac{1}{2} \eye\). Additionally, for the base case, the augmented matrix, \(\mbs{\Psi}(t)\), defined in \cref{eqn:augmented_matrcies_for_Q}, can be written as
    \begin{align*}
        \mbs{\Psi}(t) &= 
        \begin{bmatrix}
            s_1(t) & \cdots & s_N(t)
        \end{bmatrix}
        \bm{\otimes} \eye 
        =
        \mbf{s}^{\trans}(t) \bm{\otimes} \eye,
    \end{align*}
    where \(\bm{\otimes}\) denotes the Kronecker product. Since the output scheduling matrices are assumed to be active, then \(\mbf{s}(t) \neq \mbf{0}\) for all \(t \in \mathbb{R}_{\geq0}\). Therefore, the nonzero singular values of \(\mbs{\Psi}(t)\) are the positive numbers \(\mleft\{\,\sigma(\mbf{s}(t))\sigma_i(\eye)
    \mid i \in \mathcal{N}_y \,\mright\}\), where \mbox{\(\mathcal{N}_y = \mleft\{ 1, 2, \ldots, n_y \mright\}\)}~\cite[Theorem~4.2.15]{Horn_Johnson_1991}. Consequently,
    \begin{equation*}
        \sigma_{\mbs{\Psi}}(t) =  \max_{i \in \mathcal{N}_y}  \sigma\mleft( \mbf{s}(t) \mright)\sigma_i(\eye)
        = \sqrt{\sum_{i \in \mathcal{N}} \lvert s_i(t) \rvert ^2} > 0.
    \end{equation*}
    From \cref{eqn:Q_epsilon}, it follows that
    \begin{equation} \label{eqn:QSR_sigma_bar_Psi_for_scalar_si}
        \bar{\sigma}_{\mbs{\Psi}}^2 = \sup_{t \in \mathbb{R}_{\geq0}} \sum_{i \in \mathcal{N}} \lvert s_i(t) \rvert ^2 > 0.
    \end{equation}
    Similar to the ISP case, the OSP coefficient, \(\varepsilon \), in \cref{eqn:QSR_epsilon_OSP} exactly matches matrix scheduling of OSP subsystems shown in~\mbox{\cite[Theorem~2]{moalemi_forbes_ccta}}. For the base case, \cref{eqn:QSR_sigma_bar_Psi_for_scalar_si} further expands on the relationship between the scheduling signals and the OSP coefficient
    \begin{equation} \label{eqn:QSR_epsilon_OSP_base_case}
        \varepsilon = \frac{\varepsilon_{\min}}{\sup_{t \in \mathbb{R}_{\geq0}} \sum_{i \in \mathcal{N}} \lvert s_i(t) \rvert ^2} > 0.
    \end{equation}
    Using the Cauchy\textendash{}Schwartz inequality, it can be shown that if a system is OSP with \(\mbf{Q} = - \varepsilon\eye\), it also possesses finite \(\mathcal{L}_2\) gain such that \(\varepsilon = 1/\gamma\). Here, by defining \(\varepsilon_i = 1/\gamma_i\) with \(\gamma_i \in \mathbb{R}_{>0}\), \cref{eqn:QSR_epsilon_OSP_base_case} can be rewritten as \(\varepsilon = 1/\gamma\), where
    \begin{equation} \label{eqn:QSR_finite_L2_scalar_scheduling_gamma}
        \gamma = \max_{i \in \mathcal{N}} \mleft( \gamma_i \mright)\sup_{t \in \mathbb{R}_{\geq0}} \sum_{i \in \mathcal{N}} \lvert s_i(t) \rvert ^2 > 0.
    \end{equation}
    Therefore, the gain-scheduled system $\bar{\bm{\mathcal{G}}}$ possess finite \(\mathcal{L}_2\) gain with \(\gamma\) defined as per \cref{eqn:QSR_finite_L2_scalar_scheduling_gamma}.
% ************************************************ L2 ******************************************************** %
\subsection{Subsystems possess Finite \texorpdfstring{\(\mathcal{L}_2\)}{L2} Gain}
    Consider the case where each subsystem \(\bm{\mathcal{G}}_i\) for \( i \in \mathcal{N}\) has finite \(\mathcal{L}_2\) gain with \(\mbf{Q}_i = -\eye\), \(\mbf{S}_i = \mbf{0} \), and \(\mbf{R}_i = \gamma_i^2 \eye\), where \(\gamma_i \in \mathbb{R}_{>0}\). \Cref{theorem:2} suggests $\bar{\bm{\mathcal{G}}}$ is QSR-dissipative with
    \begin{align} \label{eqn:QSR_finite_L2_pre_scaling}
        \hspace{-7pt}
        \mbf{Q} &= -\frac{1}{\bar{\sigma}_{\mbs{\Psi}}^2}\eye,&
        \mbf{S} &= \mbf{0},&
        \mbf{R} &= \bar{\sigma}_{u} \max_{i \in \mathcal{N}}\gamma_i^2\eye,
    \end{align}
    provided the output scheduling matrices are active. Since \(\mbf{S} = \mbf{0}\), the pseudo-commutativity condition is trivially satisfied, effectively decoupling the input and output scheduling matrices. To get a more familiar form, \cref{eqn:QSR_finite_L2_pre_scaling} can be scaled by \(\bar{\sigma}_{\mbs{\Psi}}^2 > 0\) to obtain 
    \begin{align} \label{eqn:QSR_finite_L2_after_scaling}
        \mbf{Q} &= -\eye,&
        \mbf{S} &= \mbf{0},&
        \mbf{R} &= \bar{\sigma}_{\mbs{\Psi}}^2 \bar{\sigma}_{u} \max_{i \in \mathcal{N}} \gamma_i^2\eye.
    \end{align}
    Considering the base case and \Cref{col:QSR-Dissipative_of_GS_wrt_R} with \(\mathcal{R}_{>0} = \mathcal{N}\), substituting \cref{eqn:sigma_bar_u} and \cref{eqn:QSR_sigma_bar_Psi_for_scalar_si} into \cref{eqn:QSR_finite_L2_after_scaling} leads to
    \begin{align*}
        \mbf{Q} &= -\eye,&
        \mbf{S} &= \mbf{0},&
        \mbf{R} &= \gamma^2 \eye,
    \end{align*}
where $\bar{\bm{\mathcal{G}}}$ possess finite \(\mathcal{L}_2\) gain with the exact same \(\gamma\) derived in \cref{eqn:QSR_finite_L2_scalar_scheduling_gamma}, where the subsystems were assumed to be OSP\@. Moreover, the gain in \cref{eqn:QSR_finite_L2_scalar_scheduling_gamma} is similar to that of~\mbox{\cite[Theorem~5.2]{Forbes_Damaren}}. It is unclear which gain is more conservative since here, the maximum \(\gamma_i\) over all subsystems is used, whereas in~\mbox{\cite[Theorem~5.2]{Forbes_Damaren}}, the \(\gamma_i\) terms remain in the summation. On the other hand, \cref{eqn:QSR_finite_L2_scalar_scheduling_gamma} uses the supremum of sums of scheduling signals, \(s_i(t)\), whereas~\cite[Theorem~5.2]{Forbes_Damaren} uses sum of suprema of scheduling signals. Alternatively, \Cref{theorem:1} suggests $\bar{\bm{\mathcal{G}}}$ is QSR-dissipative with \(\mbf{Q} = -\eye\), \(\mbf{S} = \mbf{0}\), and  
    \begin{align} \label{eqn:QSR_finite_L2_using_theorem_1_general}
        \mbf{R} &= \mleft( N \sum_{i \in \mathcal{N}} \gamma_{i}^{2} \bar{\sigma}_{y,i}^2\bar{\sigma}_{u,i}^2 \mright)  \eye.
    \end{align}
    Again, for the base case, \cref{eqn:QSR_finite_L2_using_theorem_1_general} simplifies to
    \begin{align} \label{eqn:QSR_finite_L2_using_theorem_1_scalar}
        \mbf{R} &= \mleft( N \sum_{i \in \mathcal{N}} \gamma_{i}^{2} \mleft\| s_i \mright\|^{4}_{\infty}\mright) \eye,
    \end{align}
    where \(\mleft\| s_i \mright\|_{\infty} = \sup_{t \in \mathbb{R}_{\geq0}} \lvert s_i(t) \rvert\). Furthermore, applying~\mbox{\cite[Theorem~1]{QSR}} to the finite \(\mathcal{L}_2\) gain case recovers \cref{eqn:QSR_finite_L2_using_theorem_1_scalar}. Using the inequality in \Cref{lemma:AM-QM}, it can be shown that 
    \begin{equation*}
        N \sum_{i \in \mathcal{N}} \gamma_{i}^{2} \mleft\| s_i \mright\|_{\infty}^4 
    \geq \mleft( \sum_{i \in \mathcal{N}} \gamma_{i} \mleft\| s_i \mright\|_{\infty}^2\mright)^2,
    \end{equation*} 
    meaning,~\cite[Theorem~5.2]{Forbes_Damaren} is less conservative than \mbox{\Cref{theorem:1}} and~\cite[Theorem~1]{QSR} for the finite \(\mathcal{L}_2\) gain case.
% ************************************************ VSP ******************************************************** %
\subsection{Subsystems are VSP}
    Consider the case where each subsystem \(\bm{\mathcal{G}}_i\) for \( i \in \mathcal{N}\) is VSP with \(\mbf{Q}_i = - \varepsilon_i\eye\), \(\mbf{S}_i = \frac{1}{2} \eye\), and \(\mbf{R}_i = - \delta_i\eye\), where \(\varepsilon_i, \delta_i \in \mathbb{R}_{>0}\). \Cref{theorem:2} suggests $\bar{\bm{\mathcal{G}}}$ is QSR-dissipative with 
        \begin{align*}
            \mbf{Q} &= - \varepsilon \eye,&
            \mbf{S} &= \frac{1}{2} \eye, &
            \mbf{R} &= - \delta \eye,
        \end{align*}
    with \(\varepsilon \in \mathbb{R}_{>0}\) and \(\delta \in \mathbb{R}_{\geq0}\) given by \cref{eqn:QSR_epsilon_OSP} and \cref{eqn:QSR_delta_ISP_not_strictly_active}, respectively, provided the output scheduling matrices are active and the scheduling matrices pseudo-commute with \(\mbf{S} = \frac{1}{2} \eye\). Furthermore, if the scheduling matrices are strongly active, \Cref{col:QSR-Dissipative_of_GS_wrt_R} with \(\mathcal{R}_{<0} = \mathcal{N}\) suggests $\bar{\bm{\mathcal{G}}}$ is VSP with \(\delta \in \mathbb{R}_{>0}\) given in \cref{eqn:QSR_delta_ISP_fully_active}.
    % To the best of the knowledge of the authors, this is the first time the VSP coefficients of the gain-scheduled system are explicitly derived.
    In~\cite{Forbes_Damaren}, the authors combine the ISP and finite \(\mathcal{L}_2\) gain results to show that the gain-scheduled system is VSP, while in~\cite{moalemi_forbes_ccta}, the ISP and OSP results are combined to achieve the same. Similar to the OSP case, \Cref{theorem:1} and~\cite[Theorem~1]{QSR} can still be used to show the overall gain-scheduled system $\bar{\bm{\mathcal{G}}}$ is QSR-dissipative. However, they cannot ensure $\bar{\bm{\mathcal{G}}}$ is VSP as they lead to \(\delta \in \mathbb{R}\) without the further guarantee of \(\delta > 0\).
% ************************************************ Conic ******************************************************** %
\subsection{Subsystems are Conic}
    Consider the case where each subsystem \(\bm{\mathcal{G}}_i\) for \( i \in \mathcal{N}\) is conic with \(\mbf{Q}_i = -\eye\), \(\mbf{S}_i = c_i \eye\), and \(\mbf{R}_i = \mleft( r_i^2 - c_i^2 \mright) \eye\), where \(c_i \in \mathbb{R}\) and \(r_i \in \mathbb{R}_{>0} \), provided the scheduling matrices pseudo-commute with \(\mbf{S}_i\). \Cref{theorem:1} suggests $\bar{\bm{\mathcal{G}}}$ is QSR-dissipative with
    \begin{align} \label{eqn:QSR_conic_before_scaling}
            \mbf{Q} &= - \eye,&
            \mbf{S} &= c\eye,&
            \mbf{R} &= 
            \mleft(r^2-c^2\mright) \eye,
    \end{align}
    where
    \begin{subequations}
        \begin{align} 
            c &= \sum_{i \in \mathcal{N}}\mleft(c_i \bar{\sigma}_{y,i}^2\mright),\label{eqn:QSR_conic_center}
            \\%
            r &=%
            \sqrt{%           
                N\sum_{i \in \mathcal{N}}
                \mleft( 
                    \delta_i
                    +
                    \bar{\sigma}_{y,i}^4 c_i^2
                \mright) 
            }%
            =%
            \sqrt{%            
                N\sum_{i \in \mathcal{N}}
                \bar{r}_{i}^2
            },\label{eqn:QSR_conic_radius}
        \end{align}
    \end{subequations}
    are the conic center and radius, respectively, and \(\bar{r}_{i}^2 = \delta_i + \bar{\sigma}_{y,i}^4 c_i^2\). Following the definition of \(\delta_i\) in \cref{eqn:QSR_delta_i}, 
    \begin{align} \label{eqn:QSR_conic_r_bar}
        \bar{r}_{i}
        =
        \begin{cases}
            \bar{\sigma}_{y,i} \sqrt{\bar{\sigma}_{u,i}^2 \mleft( r_i^2 - c_i^2 \mright) + \bar{\sigma}_{y,i}^2 c_i^2}, 
            & \text{if } r_i > \lvert c_i \rvert , 
            \\
            \bar{\sigma}_{y,i} \sqrt{\bar{\nu}_{u,i}^2 \mleft( r_i^2 - c_i^2 \mright) +\bar{\sigma}_{y,i}^2 c_i^2},    
            & \text{if } r_i \leq \lvert c_i \rvert.
        \end{cases}
    \end{align}
    Additionally, the scheduling matrices pseudo-commute with \(\mbf{S}_i = c_i \eye\), meaning \(\mbs{\Phi}_{y, i}^{\trans}(t) \mbf{S}_i = \mbf{S}_i \mbs{\Phi}_{u, i}(t)\), for all \mbox{\(t \in \mathbb{R}_{\geq0}\)} and \(i \in \mathcal{N}\). It follows that
    \begin{equation*}
        \mbs{\Phi}_{y, i}^{\trans}(t) = \mbf{S}_i \mbs{\Phi}_{u, i}(t) \mbf{S}_i^{-1} = \mbs{\Phi}_{u, i}(t),
    \end{equation*}
    for all \(t \in \mathbb{R}_{\geq0}\) and \(i \in \mathcal{N}\), such that \(c_i \neq 0\). Therefore, the scheduling matrices corresponding to the \(i^{\text{th}}\) subsystem must share the same eigenvalues and singular values, provided \(c_i \neq 0\). Consequently, \cref{eqn:QSR_conic_r_bar} can be simplified as
    \begin{align} \label{eqn:QSR_conic_r_bar_2}
        \bar{r}_{i}
        =
        \begin{cases}
            \bar{\sigma}_{y,i}\bar{\sigma}_{u,i}r_i, 
            & \text{if } r_i > \lvert c_i \rvert = 0, 
            \\
            \bar{\sigma}_{i}^2 r_i, 
            & \text{if } r_i > \lvert c_i \rvert \neq 0, 
            \\
            \bar{\sigma}_{i}
            \sqrt{\bar{\nu}_{i}^2 r_i^2 + \mleft( \bar{\sigma}_{i}^2 - \bar{\nu}_{i}^2 \mright)c_i^2},    
            & \text{if } r_i \leq \lvert c_i \rvert,
        \end{cases}
    \end{align}
    where \(\bar{\sigma}_{i} = \bar{\sigma}_{u, i} = \bar{\sigma}_{y, i}\) and \(\bar{\nu}_{i} = \bar{\nu}_{u, i} = \bar{\nu}_{y, i}\) for all \(i \in \mathcal{N}\) such that \(c_i \neq 0\).
    For the base case, the conic center defined per~\cref{eqn:QSR_conic_center} simplifies to
    % \begin{align} \label{eqn:QSR_conic_c_base_case}
        \(c = \sum_{i \in \mathcal{N}}c_i \mleft\| s_i \mright\|_{\infty}^{2}\),
    % \end{align}
   matching the results in~\cite[Theorem~1]{QSR} and~\cite[Theorem~6.2.1]{Forbes_thesis}. Moreover, for the base case, \(\bar{r}_i\) in \cref{eqn:QSR_conic_r_bar_2} simplifies to the results in~\cite[Theorem~1]{QSR} as
    \begin{align*} %\label{eqn:QSR_conic_r_bar_base_case}
        \bar{r}_{i}
        =
        \begin{cases}
            \sigma_{i}^2 r_i, 
            & \text{if } r_i > \lvert c_i \rvert, 
            \\
            \sigma_{i}
            \sqrt{\nu_{i}^2 r_i^2 + \mleft( \sigma_{i}^2 - \nu_{i}^2 \mright)c_i^2},    
            & \text{if } r_i \leq \lvert c_i \rvert,
        \end{cases}
    \end{align*}
    with \(\sigma_i =  \mleft\| s_i \mright\|_{\infty} < \infty\) and \(\nu_{i} = \inf_{t \in \mathbb{R}_{\geq 0} } \lvert s_i(t) \rvert \geq 0\), for all \(i \in \mathcal{N}\). As discussed in~\cite{QSR}, when \(r_i > \lvert c_i \rvert \), \mbox{\cite[Theorem~6.2.1]{Forbes_thesis}} provides a less conservative bound on the conic radius of the gain-scheduled system compared to \Cref{theorem:1}. This is again as a direct result of the special case of Cauchy\textendash{}Schwartz inequality used in the proof of \Cref{theorem:1}. Alternatively, since \(\nu_{i} \leq \sigma_{i}\) for all \(i \in \mathcal{N}\), \Cref{theorem:1} provides a tighter bound on the conic radius of the gain-scheduled system when \(r_i \leq \lvert c_i \rvert \) compared to~\cite[Theorem~6.2.1]{Forbes_thesis}.

    \section{Application Example} \label{sec:simulation}
\subsection{QSR-Dissipative Properties of the Plant}
\begin{figure}[t]
    \centering
    \vspace{-3pt}
    \includegraphics{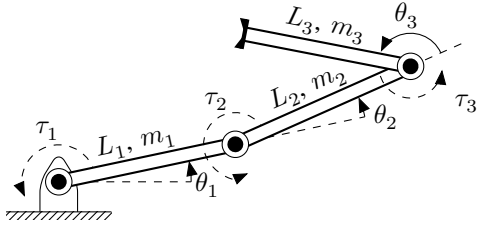}
    \vspace{-5pt}
    \caption{Planar rigid three-link robotic manipulator with a fixed base, damped joints, no forces acting on the end-effector, and no potential energy due to gravity}
    \label{fig:robot}
    % \vspace{-10pt}
\end{figure}
\begin{figure}[t]
    \vspace{1pt}
    \centering
    \includegraphics{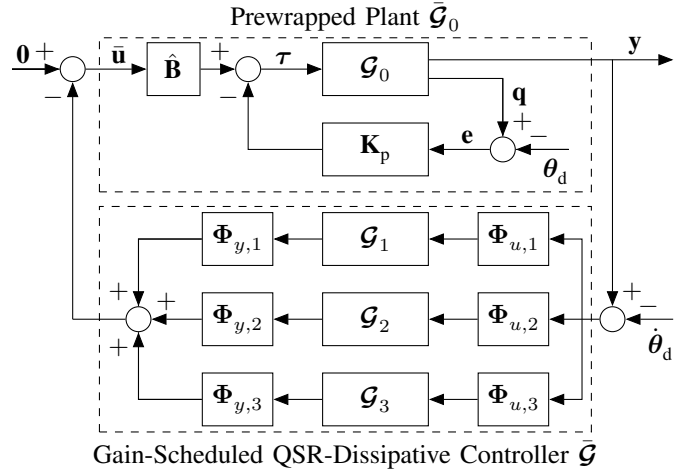}
    \caption{Gain-scheduled feedback control of the plant to be controlled. The input and output of each subcontroller are scheduled as per \cref{eq:GS_QSR_io} via the matrix multiplication between the scheduling matrices \(\mbs{\Phi}_{u, {\it i}}(t)\) and \(\mbs{\Phi}_{y, {\it i}}(t)\), and their corresponding signals \(\mbf{u}(t)\) and \(\mbf{y}_{{\it i}}(t)\), for \mbox{\({\it i} \in \mleft\{ 1, 2, 3 \mright\}\)}.}
    \label{fig:GS_feedback_control_system}
    \vspace{-5pt}
\end{figure}
    Consider a planar rigid three-link robotic manipulator, shown in \Cref{fig:robot}, with a fixed base, damped joints, no forces acting on the end-effector, and no potential energy due to gravity. The equations of motion of the robot are given by
    \begin{equation} \label{eq:robot_dynamics}
        \mbf{M}(\mbf{q}(t))\ddot{\mbf{q}}(t) + \mbf{D} \dot{\mbf{q}}(t) = 
        \mbf{f}_{\text{non}}(\mbf{q}(t), \dot{\mbf{q}}(t))
        +
        \mbs{\tau}(t), 
    \end{equation}
    where \(\mbf{M}(\mbf{q}(t)) = \mbf{M}^{\trans}(\mbf{q}(t)) \in \mathbb{S}_{++}^{3}\) is the mass matrix, \mbox{\(\mbf{D} = \diag \mleft(d_1, d_2, d_3\mright) \in \mathbb{S}_{++}^{3}\)} is the damping matrix, \(\mbf{f}_{\text{non}}(\mbf{q}(t), \dot{\mbf{q}}(t))\) captures the nonlinear inertial and Coriolis forces, \(\mbs{\tau}(t) = \begin{bmatrix} \tau_1(t) & \tau_2(t) & \tau_3(t)\end{bmatrix}^{\trans}\) are the joint torques, and \(\mbf{q}(t) = \begin{bmatrix} \theta_1(t) & \theta_2(t) & \theta_3(t)\end{bmatrix}^{\trans}\) are the generalized coordinates. The mass matrix and nonlinear forces can be obtained using the general Lagrangian dynamics formulation in~\cite[Section 8.1.2]{Modern_Robotics} as shown in~\cite{3DOF_robot}.
    A proportional control prewrap is added in negative feedback with the plant, \(\bm{\mathcal{G}}_0\), as per \Cref{fig:GS_feedback_control_system}, where \(\mbf{K}_\text{p} = \diag\mleft( k_{\text{p}, 1}, k_{\text{p}, 2}, k_{\text{p}, 3} \mright) \in \mathbb{S}_{++}^{3} \) and \(\mbs{\theta}_\text{d}(t) = \begin{bmatrix} \theta_{\text{d}, 1}(t) & \theta_{\text{d}, 2}(t) & \theta_{\text{d}, 3}(t)\end{bmatrix}^{\trans}\) is the desired trajectory. The resulting inner-loop, \(\bar{\bm{\mathcal{G}}}_0\), controls the displacement of the system. Dropping the time dependency for brevity, the nonlinear dynamics of the prewrapped system can be written as 
    \begin{equation} \label{eq:robot_dynamics_prewrap}
        \mbf{M}\ddot{\mbf{q}} = \mbf{f}_{\text{non}} - \mbf{D} \dot{\mbf{q}} + \hat{\mbf{B}}\bar{\mbf{u}} - \mbf{K}_\text{p} \mbf{e}
    \end{equation}
    where
    \(\hat{\mbf{B}} = 
        \begin{bmatrix}
            0 & 1 & 0 \\%
            0 & 0 & 1
        \end{bmatrix}^{\trans}\),
    \(\bar{\mbf{u}} = \begin{bmatrix} \bar{u}_1(t) & \bar{u}_2(t)\end{bmatrix}^{\trans}\) is from the output of the gain-scheduled controller \(\bar{\bm{\mathcal{G}}}\), and \(\mbf{e}(t) = \mbf{q}(t) - \mbs{\theta}_\text{d}(t)\) is the tracking error. The zero row in \(\hat{\mbf{B}}\) is purposely chosen to allow for gain-scheduling of non-square controllers. To analyze the dissipative property of the inner-loop, define the storage function
    \(
        V = 
        \frac{1}{2}\dot{\mbf{q}}^{\trans}\mbf{M} \dot{\mbf{q}}
        + 
        \frac{1}{2}\mbf{e}^{\trans} \mbf{K}_\text{p} \mbf{e}.
    \)
    It follows that
    \begin{equation} \label{eq:storage_function_before_simplification}
        \dot{V}
        = 
        \dot{\mbf{q}}^{\trans}
        \mleft( 
            \mbf{M} \ddot{\mbf{q}}
            +
            \frac{1}{2} \dot{\mbf{M}} \dot{\mbf{q}}
        \mright) 
        + 
        \dot{\mbf{e}}^{\trans} \mbf{K}_\text{p} \mbf{e}.
    \end{equation}
    For a constant tracking trajectory, it follows that \(\dot{\mbf{e}} = \dot{\mbf{q}}\). Substituting \cref{eq:robot_dynamics_prewrap} into \cref{eq:storage_function_before_simplification} and simplifying yields
    \begin{align}\label{eqn:storage_function_before_simplification_after_substitution}
        \dot{V}
        &=
        \dot{\mbf{q}}^{\trans}
        \mleft( 
            \mbf{f}_{\text{non}}
            +
            \frac{1}{2} \dot{\mbf{M}} \dot{\mbf{q}}
        \mright)
        - \dot{\mbf{q}}^{\trans}\mbf{D}\dot{\mbf{q}}
        +
        \dot{\mbf{q}}^{\trans}\hat{\mbf{B}}\bar{\mbf{u}}.
    \end{align}
    From the Lagrangian dynamics formulation of the system, it can be verified that \(\dot{\mbf{q}}^{\trans}
    \mleft( 
        \mbf{f}_{\text{non}} + \frac{1}{2} \dot{\mbf{M}} \dot{\mbf{q}}
    \mright) = 0\). Therefore, \cref{eqn:storage_function_before_simplification_after_substitution} simplifies to \( \dot{V} = - \dot{\mbf{q}}^{\trans}\mbf{D}\dot{\mbf{q}} +  \dot{\mbf{q}}^{\trans}\hat{\mbf{B}}\bar{\mbf{u}}\). Integrating both sides with \(t \in \left[ 0, T \right]\) yields
    \begin{align} \label{eq:storage_function}
        \hspace{-3pt}
        V(T) - V(0) &= \int_{0}^{T}  - \dot{\mbf{q}}^{\trans}(t)\mbf{D}\dot{\mbf{q}}(t) + \dot{\mbf{q}}^{\trans}(t) \hat{\mbf{B}}\bar{\mbf{u}}(t) \, \dt \nonumber
        \\% 
        &= \left\langle \mbf{y}, -\mbf{D}\mbf{y} \right\rangle_{T} + \left\langle \mbf{y}, \hat{\mbf{B}}\bar{\mbf{u}} \right\rangle_{T},\quad \forall T \in \mathbb{R}_{\geq0},
    \end{align}
    where \(\mbf{y}(t) = \dot{\mbf{q}}(t) = \begin{bmatrix} \dot{\theta}_1(t) & \dot{\theta}_2(t) & \dot{\theta}_3(t)\end{bmatrix}^{\trans}\) is the measurement.
    Consequently, the prewrapped system \(\bar{\bm{\mathcal{G}}}_0\), in \mbox{\Cref{fig:GS_feedback_control_system}}, with input \(\bar{\mbf{u}}\mleft( t \mright) \) and output \(\mbf{y}\mleft( t \mright) \) is QSR-dissipative with \mbox{\(\mbf{Q}_\text{P} = -\mbf{D}\)}, \(\mbf{S}_\text{P} = \frac{1}{2}\hat{\mbf{B}}\), and \(\mbf{R}_\text{P} = \mbf{0}\). More precisely, as per~\mbox{\cite[Definition~3.1.2]{van_der_schaft}}, the prewrapped system is conservative with respect to the quadratic storage function constructed using \(\mbf{Q}_\text{P}\), \(\mbf{S}_\text{P}\), and \(\mbf{R}_\text{P}\).
    \subsection{Trajectory Generation}
    \begin{table}[t]
        \centering
        \caption{Discrete Joint Angles for Trajectory Generation}
        \vspace{-5pt}
        \label{tab:trajectory_points}
        \scriptsize
        \begin{tabularx}{\linewidth}{*{2}{>{\centering\arraybackslash}X}}
            \hline
            \hline
            \addlinespace[2pt]
            {\footnotesize Discrete Time Point \(t_k\)} & {\footnotesize Desired Joint Angle \( \bm{\theta}_\text{d}(t_{k})\)} \\
            {\footnotesize $\si{[s]}$} & {\footnotesize $\si{[deg]}$} \\
            \addlinespace[1pt]
            \hline
            \addlinespace[2pt]
            \(t_0 = 0 \) & \(\begin{bmatrix} 0^\circ & 160^\circ & -90^\circ \end{bmatrix}^{\trans} \) \\
            \(t_1 = 2 \) & \(\begin{bmatrix} 0^\circ & 160^\circ & -90^\circ \end{bmatrix}^{\trans} \) \\
            \(t_2 = 3 \) & \(\begin{bmatrix} 0^\circ & \hspace{-1pt}\phantom{-}45^\circ & \phantom{-}45^\circ \end{bmatrix}^{\trans} \) \\
            \(t_3 = 7 \) & \(\begin{bmatrix} 0^\circ & \hspace{-1pt}\phantom{-}45^\circ & \phantom{-}45^\circ \end{bmatrix}^{\trans} \) \\
            \(t_4 = 9 \) & \(\begin{bmatrix} 0^\circ & -90^\circ           & 160^\circ \end{bmatrix}^{\trans} \) \\
            \addlinespace[2pt]
            \hline
            \hline
        \end{tabularx}
        % \vspace{-5pt}
    \end{table}
    The control objective is to have the three-link robot track a position, \(\bm{\theta}_\text{d}(t)\), and rate, \(\dot{\bm{\theta}}_\text{d}(t)\), trajectory. This is achieved by choosing discrete joint angles \(\bm{\theta}_\text{d}(t_k)\) and \(\bm{\theta}_\text{d}(t_{k+1})\) at times \(t_k\) and \(t_{k+1}\), and interpolating between them as such
    \begin{subequations}\label{eqn:trajectory_interpolation}
        \begin{gather}
            \begin{align}
                \eta(t)
                &= \frac{t - t_k}{t_{k+1} - t_k},&
                p_5(t) 
                &= 6 \eta^5 - 15\eta^4 + 10\eta^3,
            \end{align}
            \\%
            \bm{\theta}_\text{d}(t) 
            = 
            p_5(t) \mleft(\bm{\theta}_\text{d}(t_{k+1}) - \bm{\theta}_\text{d}(t_{k})\mright) 
                +
                \bm{\theta}_\text{d}(t_{k}).
        \end{gather}
    \end{subequations}
    As shown in \Cref{tab:trajectory_points}, the desired discrete joint angles are chosen such that the base remains fixed at \(0^\circ\) while the second and third joint angles operate within \(\left[-90^\circ, 160^\circ\right]\).
% ****************************************************************************************************************** %
% ************************************************ Control Synthesis ******************************************************** %
% ****************************************************************************************************************** %
\subsection{QSR-Dissipative Control Synthesis}\label{subsec:QSR-Dissipative_Control_Synthesis}
    The subcontrollers to be gain-scheduled will be linear non-square QSR-dissipative controllers, synthesized using the linearized model of the system. In~\cite{QSR}, similar to \(\mathcal{H}_2\)-conic synthesis in~\cite{ryan_conic_forbes}, \(\mathcal{H}_2\)-optimal controllers are first designed using the linearized model of the system. These controllers are then rendered QSR-dissipative by solving a semidefinite program (SDP) involving a modified version of the linear matrix inequality (LMI) found in \Cref{eqn:QSR_LMI_Lemma}. Finally, the asymptotic stability of the closed-loop system is guaranteed using \Cref{eqn:QSR_Stability_Theorem}. Herein, a similar synthesis approach to~\mbox{\cite{QSR, Benhabib, Damaren_passive_map, walsh}} is taken, where QSR-dissipative controllers are synthesized using the solution to the linear-quadratic regulator (LQR) problem in conjunction with \Cref{eqn:QSR_LMI_Lemma} and \Cref{eqn:QSR_Stability_Theorem}.
    
    \begin{table}[t]
        \caption{Three-Link Manipulator Properties}
        \vspace{-5pt}
        \label{tab:Manipulator_Properties}
        \scriptsize
        \begin{tabularx}{\linewidth}{l @{\hspace{10pt}} *{3}{S[table-format=1.4]}}
            \hline
            \hline
            \addlinespace[2pt]
            {\footnotesize Link Parameters}
            & 
            {\footnotesize Link 1}
            &
            {\footnotesize Link 2} 
            &
            {\footnotesize Link 3} \\
            \addlinespace[1pt] 
            \hline
            \addlinespace[2pt]
            Length {$\si{[\meter]}$}                          & {\(\,L_1       = 1.10\)} & {\(\,L_2       = 0.60\)} & {\(\,L_3       = 0.50\)} \\
            Measured Length {$\si{[\meter]}$}                 & {\(\,\bar{L}_1 = 1.21\)} & {\(\,\bar{L}_2 = 0.54\)} & {\(\,\bar{L}_3 = 0.55\)} \\
            Mass {$\si{[\kilo\gram]}$}                        & {\(m_1       = 2.00\)}   & {\(m_2       = 0.90\)}   & {\(m_3       = 0.30\)} \\
            Measured Mass {$\si{[\kilo\gram]}$}               & {\(\bar{m}_1 = 2.40\)}   & {\(\bar{m}_2 = 0.72\)}   & {\(\bar{m}_3 = 0.36\)} \\
            Damping Coefficient {$\si[per-mode=symbol]{[\newton\mkern-4mu\cdot\mkern-4mu\meter\mkern-4mu\cdot\mkern-4mu\second\per\radian]}$}  & {\(\,\,d_1       = 5.00\)}  & {\(\,\,d_2       = 2.50\)}  & {\(\,\,d_3       = 2.50\)} \\
            \addlinespace[1pt]
            \hline
            \hline
        \end{tabularx}
        % \vspace{-5pt}
    \end{table}
    \begin{table}[t]
        \caption{Controller Design Parameters}
        \vspace{-5pt}
        \label{tab:Controller_Properties}
        \scriptsize
        \begin{tabularx}{\linewidth}{l>{\centering\arraybackslash}X>{\centering\arraybackslash}X}
            \hline
            \hline
            \addlinespace[2pt]
            {\footnotesize Properties}
            & 
            {\footnotesize Symbol}
            &
            {\footnotesize Value} \\
            \addlinespace[1pt]
            \hline
            \addlinespace[2pt]
            Proportional Gain               & {\(\mbf{K}_{\text{p}}\)} & {\(\diag \mleft(5, 35, 35 \mright)\) } \\
            \addlinespace[2pt]
            \hline
            \addlinespace[2pt]
            \multirow{2}{*}{LQR Weights} & {\(\mbf{Q}_\text{LQR}\)} & \multicolumn{1}{@{\hspace{2pt}}c}{\(\diag \mleft(15, 15, 15, 10, 10, 10 \mright)^{-2}\)} \\
                                            & {\(\mbf{R}_\text{LQR}\)} & {\hspace{15pt}\(\diag \mleft(25, 25 \mright)^{-2}\)} \\
            \addlinespace[2pt]
            \hline
            \hline
        \end{tabularx}
        \vspace{-5pt}
    \end{table}
    First, the LQR problem requires a linearized state-space representation of \cref{eq:robot_dynamics}. As shown in~\cite{3DOF_robot}, \(\mbf{M}(\mbf{q}(t))\) is nonlinear with respect to the second and third joint angles,~\mbox{\(\theta_2(t)\) and \(\theta_3(t)\)}. Therefore, to cover the range of possible joint angles during the desired trajectory in \Cref{tab:trajectory_points}, three linearization points are chosen as \(\bar{\mbf{q}}_{1} = \bm{\theta}_\text{d}(t_{0})\), \(\bar{\mbf{q}}_{2} = \bm{\theta}_\text{d}(t_{2})\), and \(\bar{\mbf{q}}_{3} = \bm{\theta}_\text{d}(t_{3})\). The linearization of the prewrapped model, \(\bar{\bm{\mathcal{G}}}_0\) in \Cref{fig:GS_feedback_control_system}, about \(\bar{\mbf{q}}_{i}\), for \(i \in \mathcal{N} = \mleft\{ 1, 2, 3 \mright\}\), is given by
    \begin{align} \label{eq:linearized_model}
        \delta \dot{\mbf{x}}(t)
        &= 
        \mbf{A}_{i} \delta \mbf{x}(t)
        +
        \mbf{B}_{i} \delta \mbf{u}(t), &
        \delta \mbf{y}(t) &= \mbf{C}_{i} \delta \mbf{x}(t),
    \end{align}
    with
    \vspace{-5pt}
    \begin{subequations} \label{eq:linearized_model_A_B_C}
            \begin{align}
                \mbf{A}_{i}
                &= 
                    \begin{bmatrix}
                        \mbf{0}                               & \eye \\%
                        -\bar{\mbf{M}}^{-1}(\bar{\mbf{q}}_{i})\mbf{K}_{\text{p}} & -\bar{\mbf{M}}^{-1}(\bar{\mbf{q}}_{i})\mbf{D}
                    \end{bmatrix}, 
                \\%
                \mbf{B}_{i} 
                &=
                    \begin{bmatrix}
                        \mbf{0}  \\%
                        \bar{\mbf{M}}^{-1}(\bar{\mbf{q}}_{i})
                    \end{bmatrix}, \quad \quad
                \mbf{C}_{i}
                = 
                    \begin{bmatrix} 
                        \mbf{0} & \eye 
                    \end{bmatrix},
            \end{align}
    \end{subequations}
    where \(\bar{\mbf{M}}\mleft( \bar{\mbf{q}}_i \mright) \) is the measured mass matrix obtained using the measured link lengths and masses in \Cref{tab:Manipulator_Properties}, \(\mbf{K}_{\text{p}}\) is the proportional gain matrix in \Cref{tab:Controller_Properties}, and \mbox{\(\delta \mbf{x}(t) = \begin{bmatrix}
        \delta \mbf{q}^{\trans}(t)  & \delta \dot{\mbf{q}}^{\trans}(t)
    \end{bmatrix}^{\trans}\)}.
    Furthermore, the LQR problem's state and input weight matrices, \(\mbf{Q}_\text{LQR}\) and \(\mbf{R}_\text{LQR}\), are chosen following Bryson's rule~\cite{brysons}, 
    and are provided in \Cref{tab:Controller_Properties}. Using the linearization of the prewrapped model in \cref{eq:linearized_model} with the LQR state and input weight matrices, the gain matrix \(\mbf{K}_{i}\) is computed for each linearization point by solving the algebraic Riccati equation (ARE)~\cite{brysons}. Inspired by the LQG controller design in~\cite{Benhabib}, the three subcontrollers to be gain-scheduled are designed to have the state-space form
    \begin{subequations}
        \begin{align}
            \dot{\mbf{x}}_i(t) &= \mleft(\mbf{A}_{i} - \mbf{B}_{i} \mbf{K}_{i} - \mbf{B}_{\text{c}, i} \mbf{C}_i \mright)  \mbf{x}_i(t) + \mbf{B}_{\text{c}, i}\mbf{y}_i(t),
            \\%
            \mbf{u}_i(t)       &= \mbf{K}_{i} \mbf{x}_i(t),
        \end{align}
    \end{subequations}
    for \(i \in \mathcal{N}\), with \(\mbf{A}_i, \mbf{B}_i\), and \(\mbf{C}_i\) in \cref{eq:linearized_model_A_B_C}, and \(\mbf{B}_{\text{c}, i}\) is to be constructed such that the subcontrollers are QSR-dissipative. To synthesize QSR-dissipative subcontrollers, an SDP is constructed using a modified version of the LMI in \Cref{eqn:QSR_LMI_Lemma} through a change of variable, \(\mbf{F}_i = \mbf{P}_i\mbf{B}_{\text{c}, i}\), and solved using \texttt{CVX}~\cite{CVXPY, CVXPY_paper}, and \texttt{MOSEK}~\cite{mosek}. This involves solving for \(\mbf{P}_i \in \mathbb{S}_{++}^{6}\), \(\mbf{Q}_{\text{c}, i} \in \mathbb{S}^{2}\), \(\mbf{R}_{\text{c}, i} \in \mathbb{S}^{3}\), \(\mbf{F}_i \in \mathbb{R}^{6 \times 3}\), and \(\rho_i \in \mathbb{R}_{>0}\) subject to
    \begin{subequations} \label{eqn:SDP}
        \begin{align} 
            \begin{bmatrix}
                \mbf{P}_i\hat{\mbf{A}}_i + \hat{\mbf{A}}_i^{\trans} \mbf{P}_i - \mbf{F}_i\mbf{C}_i - \mbf{C}_{i}^{\trans}\mbf{F}_{i}^{\trans} - \hat{\mbf{Q}}_i & \mbf{F}_i - \hat{\mbf{S}}_i \\%
                \mbf{F}_{i}^{\trans}  - \hat{\mbf{S}}_{i}^{\trans}      & -\mbf{R}_{\text{c}, i}
            \end{bmatrix}
            &\preceq 0, \label{eqn:QSR_LMI_modified_1}
            \\%
            \begin{bmatrix}
                \rho_i\mbf{Q}_{\text{P}} + \mbf{R}_{\text{c}, i} & - \rho_i\mbf{S}_{\text{P}} + \mbf{S}_{\text{c}, i}^{\trans}  \\%
                - \rho_i\mbf{S}_{\text{P}}^{\trans} + \mbf{S}_{\text{c}, i}   & \mbf{Q}_{\text{c}, i}
            \end{bmatrix}
            &\prec 0, \label{eqn:QSR_LMI_modified_2}
        \end{align}
    \end{subequations}
    where
    \begin{align} \label{eqn:SDP_change_of_variables}
        \hat{\mbf{A}}_i &= \mbf{A}_{i} - \mbf{B}_{i}\mbf{K}_{i}, & \hat{\mbf{Q}}_i &= \mbf{K}_i^{\trans} \mbf{Q}_{\text{c}, i} \mbf{K}_i, & \hat{\mbf{S}}_i &= \mbf{K}_i^{\trans} \mbf{S}_{\text{c}, i},
    \end{align}
    and \(\mbf{S}_{\text{c}, i} = \mbf{S}_{\text{P}}^{\trans} \) for \(i \in \mathcal{N}\). Additionally, \(\mbf{B}_{\text{c}, i}\) is obtained from \(\mbf{F}_i\) as \(\mbf{B}_{\text{c}, i} = \mbf{P}_i^{-1}\mbf{F}_i\).
    
    The LMI in \cref{eqn:QSR_LMI_modified_1} is obtained from \Cref{eqn:QSR_LMI_Lemma} and ensures the subcontrollers are QSR-dissipative. Additionally, in the absence of gain-scheduling, the LMI in \cref{eqn:QSR_LMI_modified_2} guarantees the asymptotic stability of the negative feedback interconnection of any individual subcontroller with the prewrapped plant \(\bar{\bm{\mathcal{G}}}_0\) as per \Cref{eqn:QSR_Stability_Theorem}. Therefore, the LMI in \cref{eqn:SDP} is a sufficient condition for the synthesis of QSR-dissipative subcontrollers that render the closed-loop system asymptotically stable when connected in negative feedback with the prewrapped plant.

    Similar to the control synthesis techniques in~\cite{ryan_conic_forbes, QSR}, a cost function can be used in tandem with the LMI in \cref{eqn:SDP} to impose certain design constraints.
    Here, from \cref{eqn:SDP}, it can be shown using the Schur complement lemma that \mbox{\(\mbf{R}_{\text{c}, i} \in \mathbb{S}_{+}^{3}\)} and \(\mbf{Q}_{\text{c}, i} \in \mathbb{S}_{--}^{2}\). Therefore, minimizing the cost function \(\mathcal{J}\mleft( \mbf{R}_{\text{c}, i} \mright) = \trace \mleft( \mbf{R}_{\text{c}, i} \mright)\) results in \(\mbf{R}_{\text{c}, i} = \mbf{0}\) and subsequently \(\mbf{F}_i = \hat{\mbf{S}}_i\). This choice of cost function is motivated by the fact that solving the SDP in \cref{eqn:SDP} subject to \(\mbf{R}_{\text{c}, i} = \mbf{0}\) and \(\rho_i = 1\) is equivalent to solving for \(\mbf{P}_i \in \mathbb{S}_{++}^{6}\) and \(\mbf{Q}_{\text{c}, i} \in \mathbb{S}_{--}^{2}\), subject to
    \begin{equation} \label{eqn:SDP_with_R_eq_0}
        \mbf{P}_i\hat{\mbf{A}}_i + \hat{\mbf{A}}_i^{\trans} \mbf{P}_i - \hat{\mbf{S}}_i\mbf{C}_i - \mbf{C}_i^{\trans}\hat{\mbf{S}}_i^{\trans} - \hat{\mbf{Q}}_i \preceq 0,
    \end{equation}
    with \(\hat{\mbf{A}}_i\), \(\hat{\mbf{Q}}_i\), and \(\hat{\mbf{S}}_i\) found in \cref{eqn:SDP_change_of_variables}. Again, \(\mbf{B}_{\text{c}, i}\) can be obtained from \(\mbf{F}_i = \hat{\mbf{S}}_i = \mbf{P}_i\mbf{B}_{\text{c}, i}\) as \(\mbf{B}_{\text{c}, i} = \mbf{P}_{i}^{-1} \hat{\mbf{S}}_i\). Compared to~\cref{eqn:SDP}, the SDP involved in solving \cref{eqn:SDP_with_R_eq_0} is computationally less expensive and results in a subcontroller \(\bm{\mathcal{G}}_i\) that is QSR-dissipative with \(\mbf{Q}_{\text{c}, i} \in \mathbb{S}_{--}^{2}\), \(\mbf{S}_{\text{c}, i} = \frac{1}{2}\hat{\mbf{B}}^{\trans}\), and \(\mbf{R}_{\text{c}, i} = \mbf{0}\). Three QSR-dissipative subcontrollers, \(\bm{\mathcal{G}}_i\) for \(i \in \mathcal{N}\), are synthesized using the linearization of the prewrapped model in \cref{eq:linearized_model} about the linearization points \(\bar{\mbf{q}}_i\) and the LQR weights in \Cref{tab:Controller_Properties}. The subcontrollers are then gain-scheduled to form the overall gain-scheduled controller \(\bar{\bm{\mathcal{G}}}\) in \Cref{fig:GS_feedback_control_system}.
    Since \(\mbf{Q}_{\text{c}, i} \in \mathbb{S}_{--}^{2}\) and the subcontrollers share a common \(\mbf{S}_{\text{c}, i}\) for \(i \in \mathcal{N}\), \Cref{theorem:2} suggest the overall gain-scheduled controller $\bar{\bm{\mathcal{G}}}$, in \Cref{fig:GS_feedback_control_system}, is QSR-dissipative, with \(\mbf{Q}_{\text{c}} = -\varepsilon \eye  \in \mathbb{S}_{--}^{2}\), \(\mbf{S}_{\text{c}} = \mbf{S}_{\text{c}}^{\trans} = \frac{1}{2}\hat{\mbf{B}}^{\trans}\), and \(\mbf{R}_{\text{c}} = \mbf{0} \) with \(\varepsilon\) defined in \Cref{eqn:Q_epsilon}, provided the output scheduling matrices are active, and the scheduling matrices pseudo-commute with \(\hat{\mbf{B}}^{\trans}\). Finally, the closed-loop system in \Cref{fig:GS_feedback_control_system} is guaranteed to be asymptotically stable, since the LMI in \cref{eqn:QSR_Stability_Theorem_LMI} can be trivially satisfied with \(\rho = 1\).

% ****************************************************************************************************************** %
% ************************************************ Scheduling Singal ******************************************************** %
% ****************************************************************************************************************** %
\subsection{Scheduling Function Construction} \label{subsec:scheduling_signals}
    Similar to~\cite{moalemi_forbes_ccta}, fourth degree scalar polynomials are used as scheduling signals in the construction of scheduling matrices. For the linearization points \(\bar{\mbf{q}}_{i}\), the scheduling signals in \Cref{fig:signals} are defined as
    \begin{subequations} \label{eq:scalar_scheduling_signals}
        \begin{align}
            s_1(t) &= 
            \begin{cases}
                1                                  & \qquad \phantom{-} 0.0 \leq t < 1.0, \\
                1 - \left(\frac{t-1}{3}\right)^4   & \qquad \phantom{-} 1.0 \leq t \leq 4.0, \\
                0                                  & \qquad \phantom{-} 4.0 < t,
            \end{cases} \\
            s_2(t) &= 
            \begin{cases}
                1 - \left(\frac{t-5}{4}\right)^4   & \qquad \phantom{-}  1.0 \leq t \leq 9.0 ,\\
                0                                  & \qquad \phantom{-} \text{otherwise},
            \end{cases} \\
            s_3(t) &= 
            \begin{cases}
                0                                  & \qquad \phantom{-} 0.0 \leq t < 7.0, \\
                1 - \left(\frac{t-9}{2}\right)^4   & \qquad \phantom{-} 7.0 \leq t \leq 9.0, \\
                1                                  & \qquad \phantom{-} 9.0 < t,
            \end{cases}
        \end{align}
    \end{subequations}
    where for $9 < t$, $s_3(t) = 1$, while the other signals are zero. Additionally, at all times, all scheduling signals are bounded and for each time \(t \in \mathbb{R}_{\geq0}\), at least one scheduling signal is active, meaning \(\forall t \in \mathbb{R}_{\geq0}, \exists i \in \mathcal{N}\) such that \(s_i(t) \neq 0\). Contrary to~\cite{QSR}, the scheduling signals in \cref{eq:scalar_scheduling_signals} are \mbox{``turned off''}, in the sense that \(0 = \nu_i \leq \lvert s_i(t) \rvert \leq 1\) for \(i \in \mathcal{N}\) and \(\forall t \in \mathbb{R}_{\geq0}\).
    \begin{figure}[t]
        \centering
        \vspace{5pt}
        \includegraphics[width=0.485\textwidth]{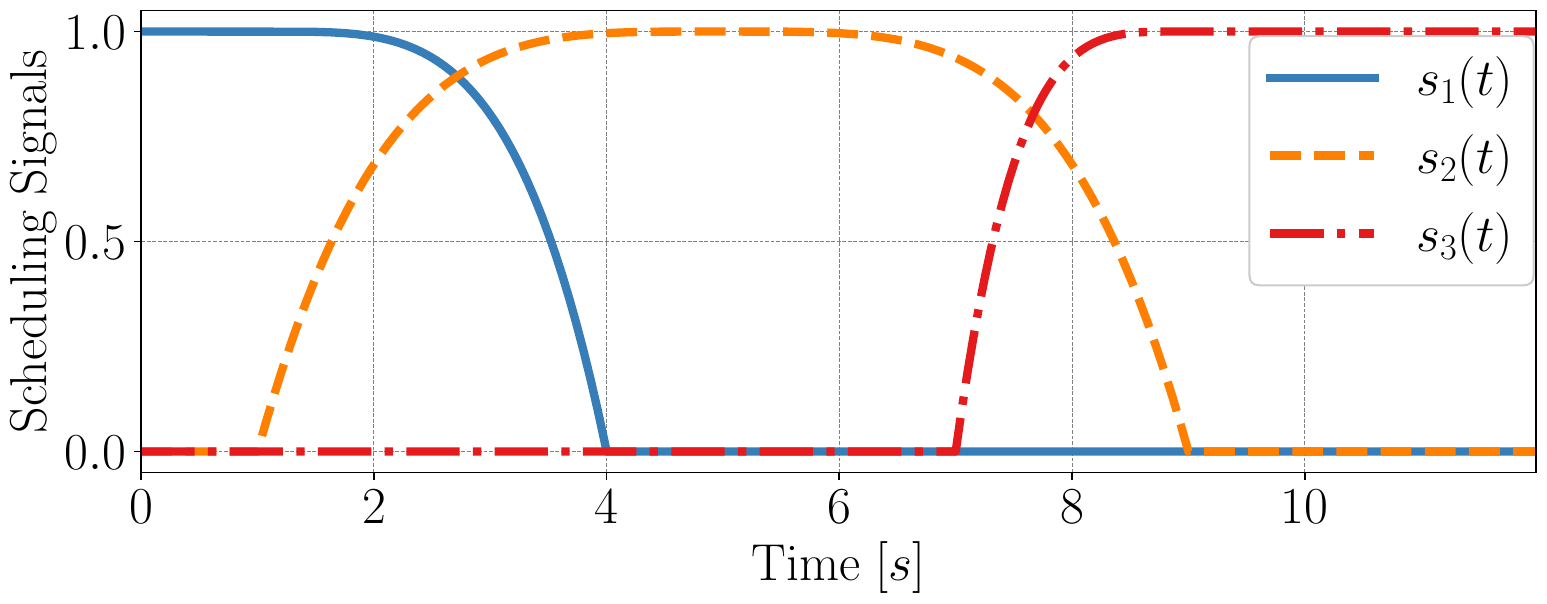}
        \vspace{-15pt}
        \caption{Scheduling signals $s_1(t)$, $s_2(t)$, and $s_3(t)$ defined in \cref{eq:scalar_scheduling_signals}.}
        \label{fig:signals}
        \vspace{-5pt}
    \end{figure}
    
    As discussed in \Cref{subsec:QSR-Dissipative_Control_Synthesis}, to guarantee the gain-scheduled controller $\bar{\bm{\mathcal{G}}}$ is QSR-dissipative, the scheduling matrices need to pseudo-commute with \(\mbf{S}_{\text{c}} = \mbf{S}_{\text{P}}^{\trans} = \frac{1}{2}\hat{\mbf{B}}^{\trans}\), while the output scheduling matrices need to be active. Furthermore, as discussed in \Cref{subsec:scheduling_matrix_construction}, the scheduling matrices can be design using \(\mbf{S}_{\text{c}}\) to satisfy this pseudo-commutativity condition. Consider the SVD of \(\mbf{S}_{\text{c}} = \frac{1}{2} \begin{bmatrix} \mbf{0} & \eye \end{bmatrix}\) given by 
    \begin{align*}
        \mbf{U} &=  
        \begin{bmatrix} 
            1 & 0\\%
            0 & 1
        \end{bmatrix}, &
         \mbs{\Sigma}_1 &= \frac{1}{2}         
         \begin{bmatrix} 
            1 & 0\\%
            0 & 1
        \end{bmatrix},& 
         \mbf{V} &=
        \begin{bmatrix} 
            0 & 0 & 1\\%
            1 & 0 & 0\\%
            0 & 1 & 0
        \end{bmatrix},
    \end{align*}
    such that \(\mbf{S}_{\text{c}} = \mbf{U}\begin{bmatrix} \mbs{\Sigma}_1 & \mbf{0} \end{bmatrix}\mbf{V}^{\trans}\). As per \Cref{lemma:AS_eq_SB}, to satisfy the pseudo-commutativity condition, \(\mbs{\Phi}_{y, i}^{\trans}(t) \mbf{S}_{\text{c}} = \mbf{S}_{\text{c}} \mbs{\Phi}_{u, i}(t)\), the scheduling matrices need to satisfy
    \begin{subequations}\label{eqn:scheduling_matrices_after_svd}
        \begin{align}
            \mbs{\Phi}_{u, i}(t) &= 
            \mbf{V} 
            \begin{bmatrix}
                \bar{\mbf{Z}}_{11, i}(t) & \mbf{0} \\%
                \bar{\mbf{z}}_{21, i}(t) & \bar{z}_{i}(t)
            \end{bmatrix}
            \mbf{V}^{\trans}
            =
            \mbf{V} 
            \bar{\mbf{Z}}_{i}(t)
            \mbf{V}^{\trans},
            \\%
            \mbs{\Phi}_{y, i}(t) &= \mbf{U} \bar{\mbf{Z}}_{11, i}^{\trans}(t) \mbf{U}^{\trans} = \bar{\mbf{Z}}_{11, i}^{\trans}(t),
        \end{align}
    \end{subequations}
    for all \(t \in \mathbb{R}_{\geq0}\), where \(\bar{\mbf{Z}}_{11, i}(t) \in \mathbb{R}^{2\times 2}\), \(\bar{\mbf{z}}_{21, i}(t) \in \mathbb{R}^{1 \times 2}\), and \(\bar{z}_{i}(t) \in \mathbb{R}\), for \(i \in \mathcal{N}\), are design variables. Therefore, scheduling of each subcontroller \(\bm{\mathcal{G}}_i\) in \Cref{fig:GS_feedback_control_system} requires seven hyperparameters. Herein, \(\bar{\mbf{Z}}_{i}(t)\) for each subcontroller is chosen as
    \begin{subequations} \label{eqn:matrix_scheduling_signals}
        \begin{align}
            \bar{\mbf{Z}}_{1}(t) &= 
                \left[
                    \begin{array}{cc;{1pt/1pt}c}
                        s_1(t) & -0.5 s_1(t) & 0 \\
                        0      & 0           & 0 \\ \hdashline[1pt/1pt]
                        s_3(t) & s_2(t)      & s_1(t)
                    \end{array}
                \right],\\%
                \bar{\mbf{Z}}_{2}(t) &= 
                \left[
                    \begin{array}{cc;{1pt/1pt}c}
                        s_1(t) + s_2(t) & 0               & 0\\ 
                        0               & s_2(t) + s_3(t) & 0\\ \hdashline[1pt/1pt]
                        0               & 0               & s_2(t)\\
                    \end{array}
                \right],\\%
                \bar{\mbf{Z}}_{3}(t) &= 
                \left[
                    \begin{array}{cc;{1pt/1pt}c}
                        s_3(t) & 0      & 0\\ 
                        s_1(t) & s_3(t) & 0\\ \hdashline[1pt/1pt]
                        0      & 0      & s_2(t) + s_3(t)\\
                    \end{array}
                \right],\
        \end{align} 
    \end{subequations}
    for all \(t \in \mathbb{R}_{\geq0}\), where \(s_1(t)\), \(s_2(t)\), and \(s_3(t)\) are defined in \cref{eq:scalar_scheduling_signals}. The scheduling matrices constructed as per \cref{eqn:scheduling_matrices_after_svd} with \(\bar{\mbf{Z}}_{i}(t)\) in \cref{eqn:matrix_scheduling_signals} are chosen to demonstrate the various possible structures. For instance, \(\bar{\mbf{Z}}_{1}(t)\) is rank deficient for all \(t \in \mathbb{R}_{\geq0}\), \(\bar{\mbf{Z}}_{2}(t)\) is diagonal, and \(\bar{\mbf{Z}}_{3}(t)\) is lower-triangular. Finally, as required by \Cref{theorem:2}, the output scheduling matrices in \cref{eqn:scheduling_matrices_after_svd} are active, that is, \(\forall t \in \mathbb{R}_{\geq0}, \exists i \in \mathcal{N} = \mleft\{ 1, 2, 3 \mright\}\) such that \( \mbs{\Phi}_{y, i}(t) = \bar{\mbf{Z}}_{11, i}^{\trans}(t) \neq \mbf{0}\).
% ****************************************************************************************************************** %
% ************************************************ Comparison ******************************************************** %
% ****************************************************************************************************************** %
\subsection{Comparison}
    A closed-loop simulation is performed using three different control approaches to have the three-link planar manipulator in \Cref{fig:robot} track the trajectory given by \cref{eqn:trajectory_interpolation}. As a baseline, the controller \(\bm{\mathcal{G}}_3\), a QSR-dissipative controller synthesized using the SDP in \Cref{subsec:QSR-Dissipative_Control_Synthesis}, is used and will be referred to as the unscheduled controller henceforth. The second approach, referred to as the scalar gain-scheduled (GS) controller, uses the scalar scheduling signals in \cref{eq:scalar_scheduling_signals} to gain-schedule the three subcontrollers, \(\bm{\mathcal{G}}_1\), \(\bm{\mathcal{G}}_2\), and \(\bm{\mathcal{G}}_3\), designed about the linearization points \(\bar{\mbf{q}}_1\), \(\bar{\mbf{q}}_2\), and \(\bar{\mbf{q}}_3\), respectively. This approach is inspired by the VSP gain-scheduling architecture in~\cite{Forbes_Damaren} but with non-square subcontrollers gain-scheduled as per \Cref{fig:GS_feedback_control_system}, where \(\mbs{\Phi}_{u, i}(t) =  s_i(t)\eye\) and \(\mbs{\Phi}_{y, i}(t) = s_i(t)\eye\) for all \(i \in \mathcal{N}\). The third approach, referred to as the matrix GS controller, uses the scheduling matrices constructed as per \cref{eqn:scheduling_matrices_after_svd,eqn:matrix_scheduling_signals} instead. Finally, since the controllers designed in~\mbox{\Cref{subsec:QSR-Dissipative_Control_Synthesis}} are meant to control the second and third joint angles, the comparison will only focus on these joints. The desired trajectory generated using \cref{eqn:trajectory_interpolation} with the discrete points in \Cref{tab:trajectory_points} is shown in \Cref{fig:trajectory_comparison} where the close tracking performance of the three control approaches is shown. The joint angle error, \(\mbf{e}(t) = \begin{bmatrix} e_1(t) & e_2(t) & e_3(t) \end{bmatrix}^{\trans} = \mbf{q}(t) - \bm{\theta}_\text{d}(t)\), is shown in \Cref{fig:error_comparison}, where the matrix GS controller has noticeably less error, magnitude wise, than the scalar GS controller when the three-link manipulator transitions between desired positions. The root-mean-square (RMS) joint angle error, and joint angle error rates are tabulated in \Cref{tab:RMS_error}. Again, the matrix GS controller realizes lower RMS angle error and RMS angle rate error. The code used to generate the figures presented in this section can be found in the GitHub repository at \url{https://github.com/decargroup/matrix_scheduling_qsr_dissipative_systems}.
    \begin{figure}
        \centering
        \vspace{4pt}
        \includegraphics[width=0.485\textwidth]{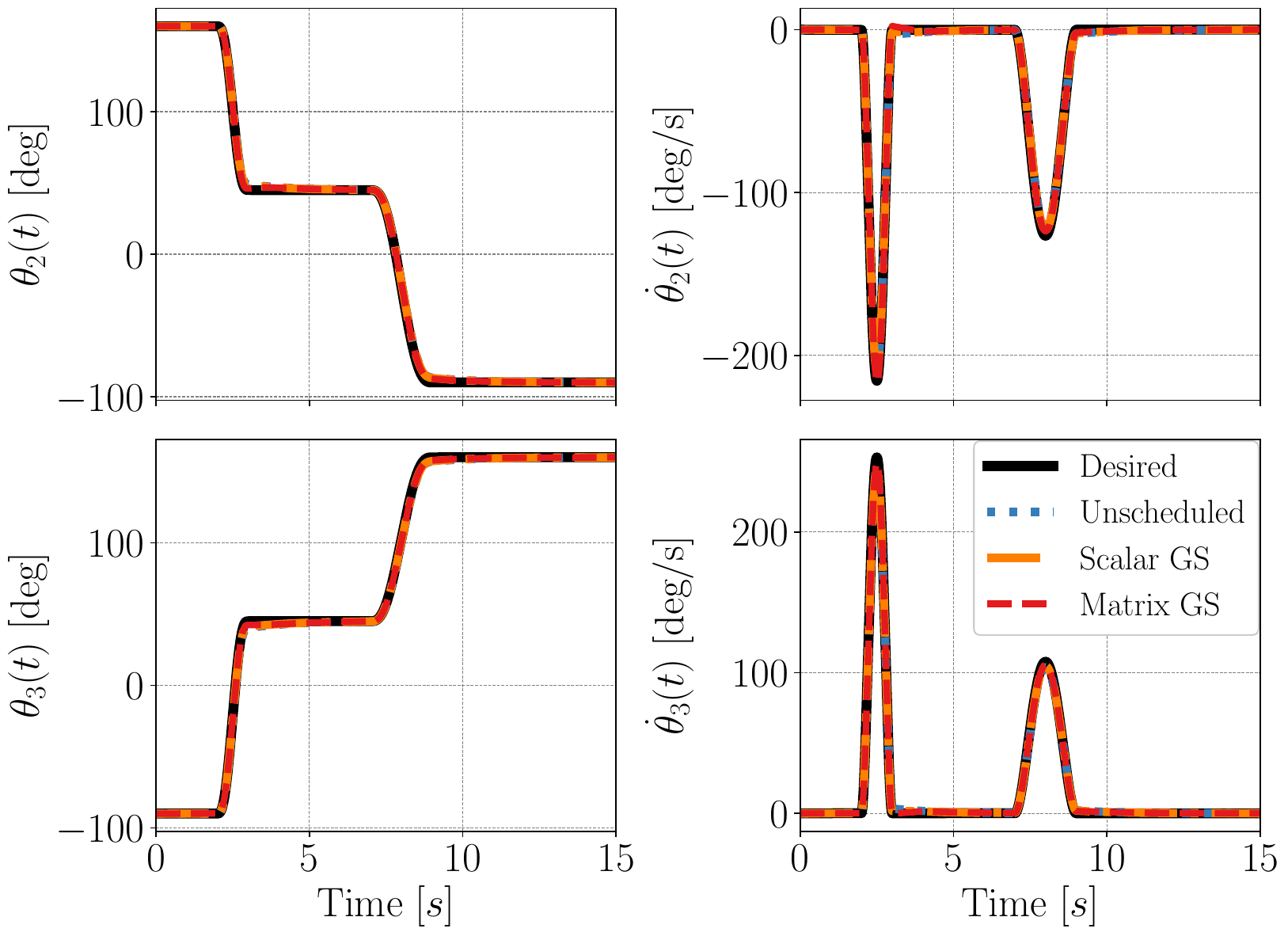}
        \captionof{figure}{Comparison of joint angles and joint rates.}
        \vspace{18pt}
        
        \label{fig:trajectory_comparison}
        \includegraphics[width=0.485\textwidth]{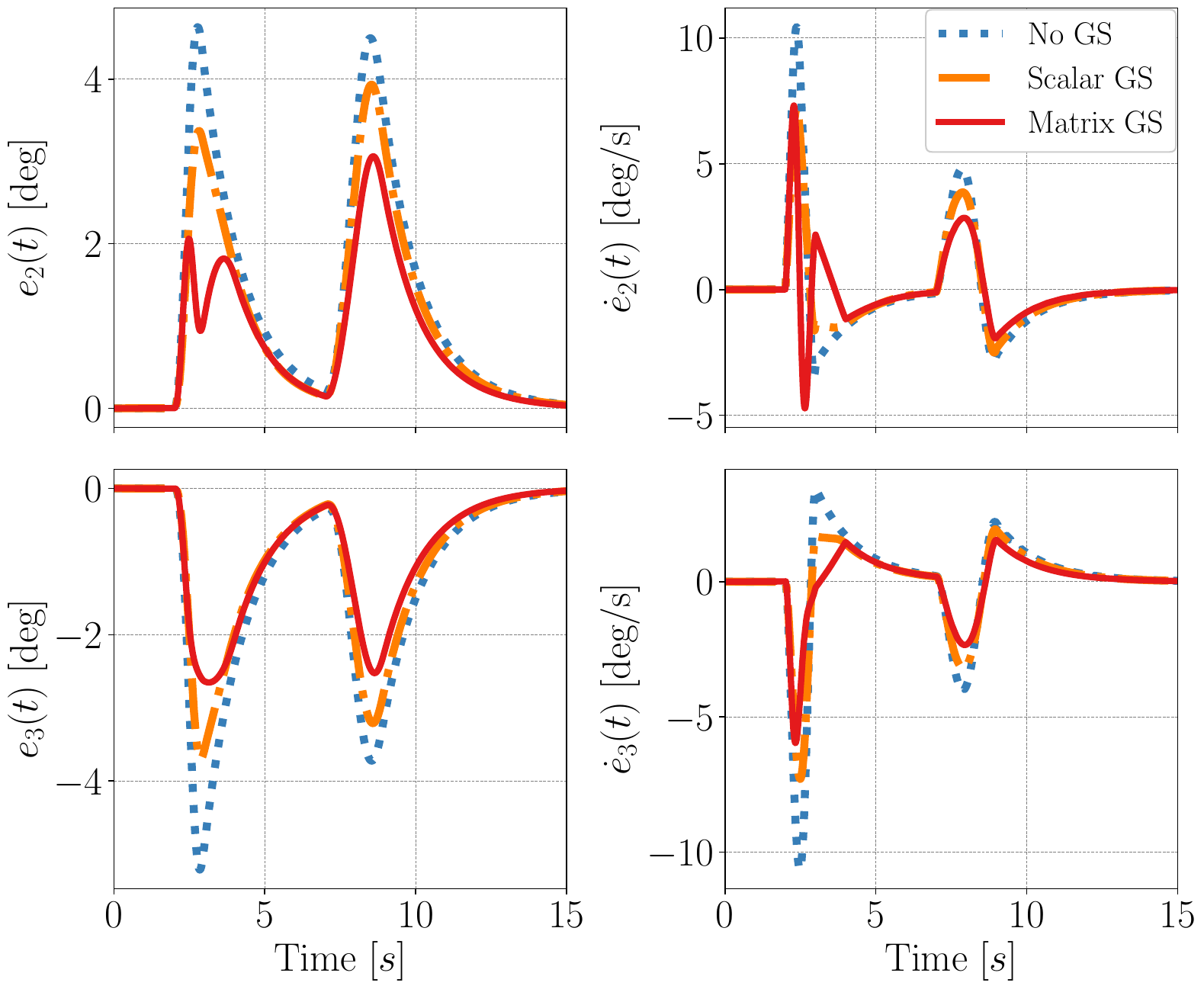}
        \captionof{figure}{Comparison of joint angles errors and error rates.}
        \vspace{18pt}

        \label{fig:error_comparison}
        \includegraphics[width=0.485\textwidth]{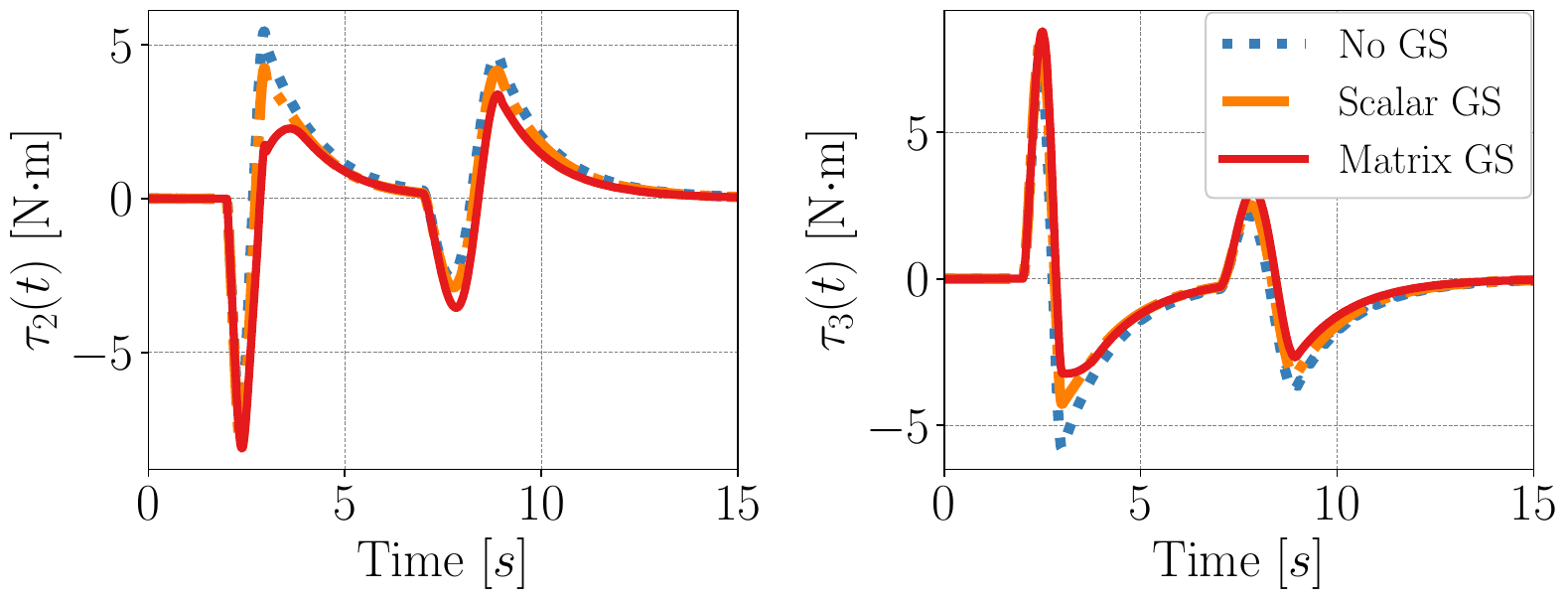}
        \captionof{figure}{Comparison of joint torques.}
        \vspace{18pt}

        \label{fig:torques_comparison}
        \captionof{table}{RMS Error of Joint Angle and Joint Angle Rate}
        \vspace{-5pt}
        \label{tab:RMS_error}
        \small
        \begin{tabularx}{\linewidth}{l *{2}{>{\centering\arraybackslash}X} *{2}{>{\centering\arraybackslash}X}}
            \hline
            \hline
            \addlinespace[2pt]
            & 
            \multicolumn{2}{c}{RMS angle error}
            &
            \multicolumn{2}{c}{RMS angle rate error} 
            \\
            \multirow[b]{2}{*}{\vspace{-8pt}Control method} 
            & 
            \multicolumn{2}{c}{$\si{[deg]}$}
            &
            \multicolumn{2}{c}{$\si{[deg/s]}$} 
            \\
            \cmidrule(l){2-3}
            \cmidrule(l){4-5}
            & {$e_2$} & {$e_3$} & {$\dot{e}_2$} & {$\dot{e}_3$} \\
            \hline
            \addlinespace[2pt]
            Unscheduled        & 1.8239          & 1.7892          & 2.1111          & 2.1191 \\
            Scalar scheduling  & 1.5226          & 1.4358          & 1.6073          & 1.5442 \\
            Matrix scheduling  & \textbf{1.1304} & \textbf{1.2015} & \textbf{1.3669} & \textbf{1.1356} \\
            \hline
            \hline
        \end{tabularx}
    \end{figure}
    \section{Closing Remarks}\label{sec:closing_remarks}
The gain-scheduling of QSR-dissipative subsystems using scheduling matrices is considered in this paper. The proposed gain-scheduling architecture is shown to result in an overall QSR-dissipative gain-scheduled system, provided the scheduling matrices are bounded and possess certain pseudo-commutative properties. A detailed discussion is provided regarding the design and construction of scheduling matrices that satisfy the introduced pseudo-commutative property. Furthermore, the gain-scheduling of a broader class of QSR-dissipative systems is considered than that found in prior work. For the various passive cases of QSR-dissipativity, the conditions on the scheduling matrices reduce to the same conditions on the scheduling signals reported in~\cite{Damaren_passive_map, Forbes_Damaren}, while for the conic case, the results of~\cite{QSR} were recovered when the scheduling matrices are chosen to be a scalar times the identity matrix. The proposed gain-scheduling architecture, along with the LMI-based QSR-dissipative control synthesis technique, is applied to the control of a planar three-link robotic manipulator subject to model uncertainty. Numerical simulation results highlight the added benefits of using scheduling matrices compared to scheduling signals.
    \section{Acknowledgment}
The authors would like to thank Prof. Xiao-Wen Chang for their helpful input with regard to \Cref{lemma:AS_eq_SB}.
    \appendix
\begin{lemma} \label{lemma:AS_eq_SB}
    Consider a real nonzero matrix \(\mbf{S} \in \mathbb{R}^{m \times n}\) such that \(\mbf{S} \neq \mbf{0}\). The arbitrary matrices \(\mbf{A} \in \mathbb{R}^{m \times m}\) and \(\mbf{B} \in \mathbb{R}^{n \times n}\) satisfy \(\mbf{A}\mbf{S} = \mbf{S}\mbf{B}\) if and only if
    \begin{align*}
        \mbf{A} &= 
        \mbf{U} 
        \begin{bmatrix}
            \mbs{\Sigma}_{1}\bar{\mbf{B}}_{11}\mbs{\Sigma}_{1}^{-1} & \bar{\mbf{A}}_{12} \\%
            \mbf{0}                                                 & \bar{\mbf{A}}_{22}
        \end{bmatrix}
        \mbf{U}^{\trans},
        &
        \mbf{B} &= 
        \mbf{V} 
        \begin{bmatrix}
            \bar{\mbf{B}}_{11} & \mbf{0} \\%
            \bar{\mbf{B}}_{21} & \bar{\mbf{B}}_{22}
        \end{bmatrix}
        \mbf{V}^{\trans},
    \end{align*}
    where \(\mbf{U} \in \mathbb{R}^{m \times m}\), \(\mbs{\Sigma}_{1} \in \mathbb{S}^{\varrho}\), and \(\mbf{V} \in \mathbb{R}^{n \times n}\) are given by the singular value decomposition (SVD) of \(\mbf{S}\) such that 
    \begin{equation*} 
        \mbf{S} = \mbf{U} 
        \begin{bmatrix}
            \mbs{\Sigma}_{1} & \mbf{0} \\%
            \mbf{0}          & \mbf{0}
        \end{bmatrix}
        \mbf{V}^{\trans},
    \end{equation*} 
    with \(\varrho = \rank\mleft( \mbf{S} \mright) \geq 1\), and \(\bar{\mbf{A}}_{12}, \bar{\mbf{A}}_{22}, \bar{\mbf{B}}_{11}, \bar{\mbf{B}}_{21}\), and \(\bar{\mbf{B}}_{22}\) are arbitrary real matrices of appropriate dimensions.
\end{lemma}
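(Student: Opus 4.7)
The plan is to diagonalize the constraint $\mbf{A}\mbf{S} = \mbf{S}\mbf{B}$ by changing basis via the orthogonal factors of the SVD, then read off block-wise conditions. Concretely, I would set $\tilde{\mbf{A}} = \mbf{U}^{\trans}\mbf{A}\mbf{U}$ and $\tilde{\mbf{B}} = \mbf{V}^{\trans}\mbf{B}\mbf{V}$. Since $\mbf{U}$ and $\mbf{V}$ are orthogonal, $\mbf{A}\mbf{S} = \mbf{S}\mbf{B}$ is equivalent to
\begin{equation*}
\tilde{\mbf{A}}
\begin{bmatrix}\mbs{\Sigma}_{1} & \mbf{0}\\ \mbf{0} & \mbf{0}\end{bmatrix}
=
\begin{bmatrix}\mbs{\Sigma}_{1} & \mbf{0}\\ \mbf{0} & \mbf{0}\end{bmatrix}
\tilde{\mbf{B}}.
\end{equation*}

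Next, I would partition $\tilde{\mbf{A}}$ and $\tilde{\mbf{B}}$ conformably with the SVD block structure, writing $\tilde{\mbf{A}} = [\tilde{\mbf{A}}_{jk}]$ and $\tilde{\mbf{B}} = [\tilde{\mbf{B}}_{jk}]$ for $j,k \in \{1,2\}$, where the $(1,1)$ blocks are $\varrho \times \varrho$. Carrying out the two block multiplications and equating, the four resulting equations are $\tilde{\mbf{A}}_{11}\mbs{\Sigma}_{1} = \mbs{\Sigma}_{1}\tilde{\mbf{B}}_{11}$, $\tilde{\mbf{A}}_{21}\mbs{\Sigma}_{1} = \mbf{0}$, $\mbs{\Sigma}_{1}\tilde{\mbf{B}}_{12} = \mbf{0}$, and a trivial $\mbf{0} = \mbf{0}$. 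The key point, which is really the only nontrivial ingredient, is that $\mbs{\Sigma}_{1}$ is a diagonal matrix of strictly positive singular values and hence invertible. This immediately yields $\tilde{\mbf{A}}_{21} = \mbf{0}$, $\tilde{\mbf{B}}_{12} = \mbf{0}$, and $\tilde{\mbf{A}}_{11} = \mbs{\Sigma}_{1}\tilde{\mbf{B}}_{11}\mbs{\Sigma}_{1}^{-1}$, while $\tilde{\mbf{A}}_{12}$, $\tilde{\mbf{A}}_{22}$, $\tilde{\mbf{B}}_{21}$, and $\tilde{\mbf{B}}_{22}$ remain free.

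Substituting back through $\mbf{A} = \mbf{U}\tilde{\mbf{A}}\mbf{U}^{\trans}$ and $\mbf{B} = \mbf{V}\tilde{\mbf{B}}\mbf{V}^{\trans}$, and renaming the free blocks as $\bar{\mbf{A}}_{12}, \bar{\mbf{A}}_{22}, \bar{\mbf{B}}_{11}, \bar{\mbf{B}}_{21}, \bar{\mbf{B}}_{22}$, I recover exactly the parametrization in the statement, establishing the forward direction. The converse follows by direct substitution: multiplying the candidate $\mbf{A}$ and $\mbf{S}$ (respectively $\mbf{S}$ and $\mbf{B}$) yields the same block expression $\mbf{U}\bigl[\begin{smallmatrix}\mbs{\Sigma}_{1}\bar{\mbf{B}}_{11} & \mbf{0}\\ \mbf{0} & \mbf{0}\end{smallmatrix}\bigr]\mbf{V}^{\trans}$ on both sides, confirming $\mbf{A}\mbf{S} = \mbf{S}\mbf{B}$.

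There is no real obstacle here; the result is essentially a block-diagonalization exercise. The only subtlety worth flagging in the write-up is why $\mbs{\Sigma}_{1}$ must be taken as the compact $\varrho \times \varrho$ block of strictly positive singular values (so that its inverse exists), which is guaranteed by the hypothesis $\varrho = \rank(\mbf{S}) \geq 1$ and the convention that zero singular values are collected in the trailing zero blocks.
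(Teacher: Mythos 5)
Your proposal is correct and follows essentially the same route as the paper's proof: conjugate by the orthogonal SVD factors, partition conformably, equate blocks, and use invertibility of \(\mbs{\Sigma}_{1}\) to pin down \(\tilde{\mbf{A}}_{11}\), \(\tilde{\mbf{A}}_{21}\), and \(\tilde{\mbf{B}}_{12}\) while leaving the remaining blocks free. The only difference is that you spell out the converse direction by direct substitution, which the paper leaves implicit.
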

\begin{proof}
    Assume \(\mbf{A} \in \mathbb{R}^{m \times m}\) and \(\mbf{B} \in \mathbb{R}^{n \times n}\) satisfy \mbox{\(\mbf{A}\mbf{S} = \mbf{S}\mbf{B}\)}. Using the SVD of \(\mbf{S}\), it follows that
    \begin{equation} \label{eqn:AS_eq_SB_step_1}
        \mbf{A}\mbf{U} 
        \begin{bmatrix}
            \mbs{\Sigma}_{1} & \mbf{0} \\%
            \mbf{0}          & \mbf{0}
        \end{bmatrix}
        \mbf{V}^{\trans}
        =
        \mbf{U} 
        \begin{bmatrix}
            \mbs{\Sigma}_{1} & \mbf{0} \\%
            \mbf{0}          & \mbf{0}
        \end{bmatrix}
        \mbf{V}^{\trans}
        \mbf{B}.
    \end{equation}
    Since \(\mbf{U}\) and \(\mbf{V}\) are orthogonal matrices, left and right multiplying \cref{eqn:AS_eq_SB_step_1} by \(\mbf{U}^{\trans}\) and \(\mbf{V}\), respectively, yields
    \begin{equation} \label{eqn:AS_eq_SB_step_2}
        \mbf{U}^{\trans}\mbf{A}\mbf{U} 
        \begin{bmatrix}
            \mbs{\Sigma}_{1} & \mbf{0} \\%
            \mbf{0}          & \mbf{0}
        \end{bmatrix}
        =
        \begin{bmatrix}
            \mbs{\Sigma}_{1} & \mbf{0} \\%
            \mbf{0}          & \mbf{0}
        \end{bmatrix}
        \mbf{V}^{\trans}\mbf{B}\mbf{V}.
    \end{equation}
    Define
    \begin{align} \label{eqn:AS_eq_SB_A_bar_B_bar} 
        \mbf{U}^{\trans}\mbf{A}\mbf{U} 
        &= 
        \begin{bmatrix}
            \bar{\mbf{A}}_{11} & \bar{\mbf{A}}_{12} \\%
            \bar{\mbf{A}}_{21} & \bar{\mbf{A}}_{22}
        \end{bmatrix},
        &
        \mbf{V}^{\trans}\mbf{B}\mbf{V}
        &=
        \begin{bmatrix}
            \bar{\mbf{B}}_{11} & \bar{\mbf{B}}_{12} \\%
            \bar{\mbf{B}}_{21} & \bar{\mbf{B}}_{22}
        \end{bmatrix}.
    \end{align}
    Substituting \cref{eqn:AS_eq_SB_A_bar_B_bar} into \cref{eqn:AS_eq_SB_step_2} results in
    \begin{equation} \label{eqn:AS_eq_SB_step_3}
        \begin{bmatrix}
            \bar{\mbf{A}}_{11}\mbs{\Sigma}_{1} & \mbf{0}\\%
            \bar{\mbf{A}}_{21}\mbs{\Sigma}_{1} & \mbf{0}
        \end{bmatrix}
        =
        \begin{bmatrix}
            \mbs{\Sigma}_{1}\bar{\mbf{B}}_{11} & \mbs{\Sigma}_{1} \bar{\mbf{B}}_{12}\\%
            \mbf{0}                            & \mbf{0}
        \end{bmatrix}.
    \end{equation}
    Equating similar terms in \cref{eqn:AS_eq_SB_step_3} yields \(\bar{\mbf{A}}_{11}\mbs{\Sigma}_{1} = \mbs{\Sigma}_{1}\bar{\mbf{B}}_{11}\), \(\bar{\mbf{A}}_{21}\mbs{\Sigma}_{1} = \mbf{0}\), and \(\mbs{\Sigma}_{1}\bar{\mbf{B}}_{12} = \mbf{0}\). Since \(\mbs{\Sigma}_{1}\) is full rank, it follows that
    \begin{align} \label{eqn:AS_eq_SB_step_5}
        \bar{\mbf{A}}_{11} &= \mbs{\Sigma}_{1}\bar{\mbf{B}}_{11}\mbs{\Sigma}_{1}^{-1},& 
        \bar{\mbf{A}}_{21} &= \mbf{0},&
        \bar{\mbf{B}}_{12} &= \mbf{0}. 
    \end{align}
    Substituting \cref{eqn:AS_eq_SB_step_5} into \cref{eqn:AS_eq_SB_A_bar_B_bar} and rearranging recovers the desired result.
\end{proof}
\vspace{3pt}
\begin{lemma} \label{lemma:AM-QM}
    Suppose that \(u_1, \hdots, u_N\) are real numbers. The arithmetic mean-quadratic mean (AM-QM) inequality, given as part of the mean inequalities in~\cite[Theorem~2.16]{IMO}, states that
    % \(
    %     \mleft( \sum_{i=1}^{N} u_i  \mright)^2 
    %     \leq 
    %     N \sum_{i=1}^{N} u_i^2.
    % \)
    \begin{equation*}
        \mleft( \sum_{i=1}^{N} u_i  \mright)^2 
        \leq 
        N \sum_{i=1}^{N} u_i^2.
    \end{equation*}

    \begin{proof}
        The Cauchy\textendash{}Schwartz inequality for \(\mbf{u} = \mleft[ u_1, \hdots, u_N \mright]^{\trans} \in \mathbb{R}^{N} \) and \(\mbf{v} = \mleft[ 1, \hdots, 1 \mright]^{\trans} \in \mathbb{R}^{N} \) states that
        \begin{equation*}
            \mleft(\sum_{i=1}^{N} u_i\mright)^2 
            =
            \lvert \left\langle \mbf{u}, \mbf{v} \right\rangle \rvert ^2 
            \leq 
            \mleft\| \mbf{u} \mright\|_{2}^{2}
            \mleft\| \mbf{v} \mright\|_{2}^{2}
            =
            N \sum_{i=1}^{N} u_i^2.
        \end{equation*}
    \end{proof}
\end{lemma}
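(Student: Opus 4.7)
The plan is to derive the AM--QM inequality as an immediate consequence of the Cauchy--Schwartz inequality, which has already been invoked elsewhere in the paper (notably inside the proof of \Cref{theorem:1}). The idea is to recognize the left-hand side as the squared inner product of the data vector with the all-ones vector, and the right-hand side as the product of their squared $\ell_2$ norms.

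Concretely, I would first define $\mbf{u} = \mleft[ u_1, \ldots, u_N \mright]^{\trans} \in \mathbb{R}^N$ and $\mbf{v} = \mleft[ 1, \ldots, 1 \mright]^{\trans} \in \mathbb{R}^N$. A direct computation then gives $\left\langle \mbf{u}, \mbf{v} \right\rangle = \sum_{i=1}^{N} u_i$, $\mleft\| \mbf{u} \mright\|_2^2 = \sum_{i=1}^{N} u_i^2$, and $\mleft\| \mbf{v} \mright\|_2^2 = N$. Substituting these three identities into the Cauchy--Schwartz bound $\lvert \left\langle \mbf{u}, \mbf{v} \right\rangle \rvert^2 \leq \mleft\| \mbf{u} \mright\|_2^2 \mleft\| \mbf{v} \mright\|_2^2$ produces the claimed inequality in a single line.

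Since the result is a standard textbook identity and the chosen pair of vectors collapses Cauchy--Schwartz directly onto the statement, there is no genuine obstacle to overcome. A natural alternative would be to expand the nonnegative sum $\sum_{i<j}(u_i - u_j)^2 \geq 0$ and regroup the terms to obtain $N\sum_{i=1}^{N} u_i^2 - \mleft( \sum_{i=1}^{N} u_i \mright)^2 \geq 0$; however, the Cauchy--Schwartz route is both shorter and stylistically consistent with the special case of Cauchy--Schwartz already applied in \cref{eqn:QSR_with_different_S_before_cauchy_vector_form}, making it the preferred presentation.
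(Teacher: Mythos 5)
Your proposal is correct and is essentially identical to the paper's own proof: both apply the Cauchy--Schwartz inequality to the vectors \(\mbf{u} = [u_1, \ldots, u_N]^{\trans}\) and \(\mbf{v} = [1, \ldots, 1]^{\trans}\) and read off the result from \(\lvert \langle \mbf{u}, \mbf{v} \rangle \rvert^2 \leq \|\mbf{u}\|_2^2 \|\mbf{v}\|_2^2\). No further comment is needed.
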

\vspace{3pt}
\begin{lemma}[{\mycite[Lemma 2]{QSR_LMI_OF_LTI}}]\label{eqn:QSR_LMI_Lemma}
    The continuous-time linear time-invariant system described by the state-space realization \(\mleft(\mbf{A}, \mbf{B}, \mbf{C}, \mbf{D}\mright)\) is QSR-dissipative with matrices \(\mbf{Q} = \mbf{Q}^{\trans}\), \(\mbf{S}\), and \(\mbf{R} = \mbf{R}^{\trans}\) if and only if there exists \(\mbf{P} = \mbf{P}^{\trans} \succ 0\) such that
    \begin{equation*}
        \begin{bmatrix}
            \mbf{P}\mbf{A} + \mbf{A}^{\trans} \mbf{P} - \hat{\mbf{Q}} & \mbf{P}\mbf{B} - \hat{\mbf{S}}  \\%
            \mleft( \mbf{P}\mbf{B} - \hat{\mbf{S}} \mright)^{\trans}  & - \hat{\mbf{R}}
        \end{bmatrix}
        \preceq 0,
    \end{equation*}
    where
    \begin{align*}
        \hat{\mbf{Q}} &= \mbf{C}^{\trans} \mbf{Q} \mbf{C},\\%
        \hat{\mbf{S}} &= \mbf{C}^{\trans} \mbf{S} + \mbf{C}^{\trans} \mbf{Q} \mbf{D},\\%
        \hat{\mbf{R}} &= \mbf{D}^{\trans} \mbf{Q} \mbf{D} + \mleft( \mbf{D}^{\trans} \mbf{S} + \mbf{S}^{\trans} \mbf{D} \mright) + \mbf{R}.
    \end{align*}
\end{lemma}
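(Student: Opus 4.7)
The plan is to reduce both directions of the equivalence to the statement that a single quadratic form in $(\mbf{x}, \mbf{u})$ is sign-definite. I would take as the candidate storage function $V(\mbf{x}) = \mbf{x}^{\trans}\mbf{P}\mbf{x}$ with $\mbf{P} \succ 0$, substitute $\mbf{y} = \mbf{C}\mbf{x} + \mbf{D}\mbf{u}$ into the QSR supply rate $w(\mbf{u}, \mbf{y}) = \mbf{y}^{\trans}\mbf{Q}\mbf{y} + 2\mbf{y}^{\trans}\mbf{S}\mbf{u} + \mbf{u}^{\trans}\mbf{R}\mbf{u}$, and expand to obtain
\begin{equation*}
    w(\mbf{u}, \mbf{y}) = \mbf{x}^{\trans}\hat{\mbf{Q}}\mbf{x} + 2\mbf{x}^{\trans}\hat{\mbf{S}}\mbf{u} + \mbf{u}^{\trans}\hat{\mbf{R}}\mbf{u},
\end{equation*}
which recovers exactly the $\hat{\mbf{Q}}, \hat{\mbf{S}}, \hat{\mbf{R}}$ given in the statement. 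In parallel, $\dot{V} = \mbf{x}^{\trans}(\mbf{P}\mbf{A} + \mbf{A}^{\trans}\mbf{P})\mbf{x} + 2\mbf{x}^{\trans}\mbf{P}\mbf{B}\mbf{u}$, so the pointwise dissipation inequality $\dot{V} \leq w$ is equivalent to
\begin{equation*}
    \begin{bmatrix} \mbf{x} \\ \mbf{u} \end{bmatrix}^{\trans}
    \begin{bmatrix} \mbf{P}\mbf{A} + \mbf{A}^{\trans}\mbf{P} - \hat{\mbf{Q}} & \mbf{P}\mbf{B} - \hat{\mbf{S}} \\ (\mbf{P}\mbf{B} - \hat{\mbf{S}})^{\trans} & -\hat{\mbf{R}} \end{bmatrix}
    \begin{bmatrix} \mbf{x} \\ \mbf{u} \end{bmatrix} \leq 0
\end{equation*}
holding for all $(\mbf{x}, \mbf{u})$, which is exactly the claimed LMI.

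For sufficiency ($\Leftarrow$), I would assume the LMI admits $\mbf{P} \succ 0$ and use the identity above to conclude that $\dot{V}(\mbf{x}(t)) \leq w(\mbf{u}(t), \mbf{y}(t))$ along every trajectory of $(\mbf{A}, \mbf{B}, \mbf{C}, \mbf{D})$. Integrating from $0$ to $T$ yields $V(\mbf{x}(T)) - V(\mbf{x}(0)) \leq \langle \mbf{y}, \mbf{Q}\mbf{y}\rangle_T + 2\langle \mbf{y}, \mbf{S}\mbf{u}\rangle_T + \langle \mbf{u}, \mbf{R}\mbf{u}\rangle_T$ for all $\mbf{u} \in \mathcal{L}_{2e}$ and $T \in \mathbb{R}_{\geq 0}$. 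Since $\mbf{P} \succ 0$ implies $V$ is nonnegative, this matches \Cref{def:QSR-Dissipativity}.

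For necessity ($\Rightarrow$), I would invoke Willems' classical argument. Given that the system is QSR-dissipative, the available storage $V_a(\mbf{x}_0) = \sup_{\mbf{u}, T \geq 0} -(\langle \mbf{y}, \mbf{Q}\mbf{y}\rangle_T + 2\langle \mbf{y}, \mbf{S}\mbf{u}\rangle_T + \langle \mbf{u}, \mbf{R}\mbf{u}\rangle_T)$ is a valid storage function bounded above by any other storage function. For LTI dynamics with quadratic supply, $V_a$ is itself quadratic in $\mbf{x}$, so $V_a(\mbf{x}) = \mbf{x}^{\trans}\mbf{P}\mbf{x}$ with $\mbf{P} \succeq 0$. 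Differentiating along trajectories and applying the converse of the algebraic identity used for sufficiency yields the LMI in nonstrict form; strict positive definiteness of $\mbf{P}$ is then recovered either from minimality of the realization or, as is standard in the KYP literature, by a vanishing perturbation argument, matching the construction in the cited reference \cite{QSR_LMI_OF_LTI}.

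The main obstacle lies in the necessity direction, specifically in justifying that the storage function can be taken quadratic in $\mbf{x}$ and that $\mbf{P}$ can be made strictly positive definite. The algebraic part (turning $\dot{V} \leq w$ into the block LMI) is routine expansion of the two quadratic forms, whereas the existence of a quadratic storage function is a nontrivial structural fact about LTI systems with quadratic supply rates. In practice I would carry out the expansions explicitly and then defer to \cite{QSR_LMI_OF_LTI} or Willems' original treatment for the existence step.
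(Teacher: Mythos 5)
The paper offers no proof to compare against: this lemma is imported verbatim as Lemma~2 of the cited reference \cite{QSR_LMI_OF_LTI} and is stated in the appendix without argument. Judged on its own, the algebraic core of your proposal is correct and complete: substituting \(\mbf{y} = \mbf{C}\mbf{x} + \mbf{D}\mbf{u}\) into the supply rate does reproduce exactly the \(\hat{\mbf{Q}}\), \(\hat{\mbf{S}}\), \(\hat{\mbf{R}}\) in the statement, the pointwise inequality \(\dot{V} \leq w\) with \(V(\mbf{x}) = \mbf{x}^{\trans}\mbf{P}\mbf{x}\) is equivalent to the block quadratic form being negative semidefinite, and integrating from \(0\) to \(T\) with \(\mbf{P} \succ 0\) giving \(V \geq 0\) establishes the sufficiency direction unconditionally, matching \Cref{def:QSR-Dissipativity}.

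The one genuine issue is in the necessity direction, and it is worth being precise about it because your deferral (``minimality of the realization or a vanishing perturbation argument'') understates the problem: with the paper's definition of QSR-dissipativity, which requires only \emph{some} nonnegative storage function, the ``only if'' claim is actually \emph{false} without structural hypotheses on \((\mbf{A}, \mbf{B}, \mbf{C}, \mbf{D})\). Take the scalar system \(\mbf{A} = 1\), \(\mbf{B} = \mbf{C} = \mbf{D} = \mbf{0}\) with \(\mbf{Q} = \mbf{S} = \mbf{R} = \mbf{0}\): the supply rate is identically zero and \(V \equiv 0\) is a valid storage function, so the system is QSR-dissipative per \Cref{def:QSR-Dissipativity}; yet \(\hat{\mbf{Q}} = \hat{\mbf{S}} = \hat{\mbf{R}} = \mbf{0}\) and the \((1,1)\) block of the LMI reads \(2\mbf{P} \preceq 0\), which no \(\mbf{P} \succ 0\) can satisfy. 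So a minimality (or reachability/observability) assumption ensuring the available storage is finite, quadratic, \emph{and} positive definite is load-bearing, not housekeeping; it must be inherited from the hypotheses of the cited reference rather than recovered by a perturbation trick. Your proof as written proves \((\Leftarrow)\) in full and \((\Rightarrow)\) only modulo Willems' quadratic-available-storage facts together with such an explicit structural hypothesis, and your own closing paragraph correctly identifies this as where the real work lies --- the fix is simply to state the needed assumption rather than suggest the strictness of \(\mbf{P}\) can always be recovered.
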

\vspace{5pt}
\begin{theorem}[{\mycite[Theorem 2]{QSR_Stability}}]\label{eqn:QSR_Stability_Theorem}
    Consider two continuous-time systems \(\bm{\mathcal{H}}_1\) and \(\bm{\mathcal{H}}_2\) that are QSR-dissipative with matrices \(\mbf{Q}_1 = \mbf{Q}_{1}^{\trans}\), \(\mbf{S}_1\), \(\mbf{R}_1 = \mbf{R}_{1}^{\trans}\), and \(\mbf{Q}_2 = \mbf{Q}_{2}^{\trans}\), \(\mbf{S}_2\), \(\mbf{R}_2 = \mbf{R}_{2}^{\trans}\), respectively. Provided \(\bm{\mathcal{H}}_1\) is connected in a negative feedback loop with \(\bm{\mathcal{H}}_2\), the feedback system is asymptotically stable if there exists a \(\rho \in \mathbb{R}_{>0}\) such that
    \begin{equation} \label{eqn:QSR_Stability_Theorem_LMI}
        \begin{bmatrix}
            \rho \mbf{Q}_1 + \mbf{R}_2              & -\rho \mbf{S}_1 + \mbf{S}_{2}^{\trans}  \\%
            -\rho \mbf{S}_1^{\trans} + \mbf{S}_{2}  & \rho \mbf{R}_1  + \mbf{Q}_{2}
        \end{bmatrix}
        \prec 0.
    \end{equation}
\end{theorem}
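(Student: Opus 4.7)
The plan is to combine the two QSR-dissipativity supply-rate inequalities into a single dissipation inequality for the closed-loop system, then use the LMI hypothesis to show that a weighted sum of the individual storage functions serves as a Lyapunov function. First, I would write the QSR-dissipativity of $\bm{\mathcal{H}}_1$ and $\bm{\mathcal{H}}_2$ in integrated form per \Cref{def:QSR-Dissipativity}, namely $V_i(\mbf{x}_i(T)) - V_i(\mbf{x}_i(0)) \leq \langle \mbf{y}_i, \mbf{Q}_i \mbf{y}_i \rangle_T + 2 \langle \mbf{y}_i, \mbf{S}_i \mbf{u}_i \rangle_T + \langle \mbf{u}_i, \mbf{R}_i \mbf{u}_i \rangle_T$, multiply the first by $\rho > 0$, and add the two.

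Next, I would invoke the negative-feedback interconnection with zero external inputs, which enforces $\mbf{u}_1(t) = -\mbf{y}_2(t)$ and $\mbf{u}_2(t) = \mbf{y}_1(t)$. Substituting these relations and grouping the cross terms, the right-hand side of the weighted sum collapses to exactly
\begin{equation*}
\int_0^T
\begin{bmatrix} \mbf{y}_1(t) \\ \mbf{y}_2(t) \end{bmatrix}^{\trans}
\begin{bmatrix}
\rho \mbf{Q}_1 + \mbf{R}_2              & -\rho \mbf{S}_1 + \mbf{S}_{2}^{\trans} \\
-\rho \mbf{S}_1^{\trans} + \mbf{S}_{2}  & \rho \mbf{R}_1 + \mbf{Q}_2
\end{bmatrix}
\begin{bmatrix} \mbf{y}_1(t) \\ \mbf{y}_2(t) \end{bmatrix} \dt,
\end{equation*}
that is, the very block matrix appearing in \cref{eqn:QSR_Stability_Theorem_LMI}. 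The LMI hypothesis that this matrix is strictly negative definite yields a constant $\alpha > 0$ such that this integral is upper bounded by $-\alpha \bigl(\mleft\| \mbf{y}_1 \mright\|_{2T}^2 + \mleft\| \mbf{y}_2 \mright\|_{2T}^2 \bigr)$.

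Defining the composite storage function $W(\mbf{x}_1, \mbf{x}_2) = \rho V_1(\mbf{x}_1) + V_2(\mbf{x}_2) \geq 0$, the previous step reads $W(\mbf{x}(T)) - W(\mbf{x}(0)) \leq -\alpha \bigl( \mleft\| \mbf{y}_1 \mright\|_{2T}^2 + \mleft\| \mbf{y}_2 \mright\|_{2T}^2 \bigr)$ for every $T \geq 0$, and differentiating in $T$ under the standard smoothness assumptions gives the pointwise decrease $\dot W \leq -\alpha \bigl( \mleft\| \mbf{y}_1 \mright\|_2^2 + \mleft\| \mbf{y}_2 \mright\|_2^2 \bigr) \leq 0$. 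Hence $W$ is a valid Lyapunov function candidate for the unforced closed-loop dynamics, and Lyapunov stability follows immediately. Asymptotic stability is then concluded through LaSalle's invariance principle: the largest invariant set contained in $\{\dot W = 0\} = \{\mbf{y}_1 = \mbf{0}, \mbf{y}_2 = \mbf{0}\}$ reduces to the origin under the zero-state observability/detectability conditions that are standard in the QSR-dissipative stability framework of~\cite{QSR_Stability}.

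The main obstacle is the passage from the integral dissipation-style inequality, which by itself delivers only a finite $\mathcal{L}_2$-gain/input-output type bound, to a genuine asymptotic stability conclusion on the state. This step relies on two standing structural assumptions that are implicit in the theorem as stated in~\cite{QSR_Stability}: positive definiteness of each storage function $V_i$ in the state, and zero-state observability (or at least detectability) of each subsystem, so that the output being identically zero forces the state to the origin. With those in place the LaSalle argument closes the proof; without them one obtains only stability in the sense of bounded trajectories with square-integrable outputs.
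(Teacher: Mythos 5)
The paper contains no proof of this theorem: it is imported verbatim, with citation, as Theorem~2 of~\cite{QSR_Stability}, and is used in \Cref{sec:simulation} purely as a black box to certify closed-loop asymptotic stability. So there is no in-paper argument to compare against; the relevant benchmark is the classical Hill--Moylan-type feedback-stability proof in the cited source, and your reconstruction is exactly that argument. Your core computation is correct: with zero exogenous inputs, the negative-feedback constraints \(\mbf{u}_1 = -\mbf{y}_2\) and \(\mbf{u}_2 = \mbf{y}_1\) turn the \(\rho\)-weighted sum of the two supply rates into a quadratic form in \((\mbf{y}_1, \mbf{y}_2)\) whose kernel is precisely the block matrix of \cref{eqn:QSR_Stability_Theorem_LMI}, and strict negative definiteness of that matrix gives \(W(\mbf{x}(T)) - W(\mbf{x}(0)) \leq -\alpha \mleft( \mleft\| \mbf{y}_1 \mright\|_{2T}^2 + \mleft\| \mbf{y}_2 \mright\|_{2T}^2 \mright)\) for all \(T\), with \(W = \rho V_1 + V_2\) and \(\alpha > 0\) the negative of the matrix's largest eigenvalue.

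You are also right, and commendably explicit, about where the content beyond algebra lies. \Cref{def:QSR-Dissipativity} only requires \(V_i \geq 0\), so \(W\) is not automatically a Lyapunov function, and a decrease expressed in terms of the \emph{outputs} does not by itself control the \emph{state}. The extra hypotheses you name --- positive definiteness of the storage functions (in the Hill--Moylan framework this is manufactured from zero-state observability) and zero-state detectability, so that \(\mbf{y}_1 \equiv \mbf{y}_2 \equiv \mbf{0}\), hence via the interconnection \(\mbf{u}_1 \equiv \mbf{u}_2 \equiv \mbf{0}\), forces the state to the origin --- are indeed standing assumptions of Theorem~2 in~\cite{QSR_Stability} and are not derivable from the statement as transplanted into this paper; without them one gets only boundedness of \(W\) along trajectories and \(\mbf{y}_i \in \mathcal{L}_2\), exactly as you say. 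One small technical refinement: you need not differentiate the integral inequality (which presupposes differentiable storage and sufficiently smooth trajectories); since the bound holds for all \(T\) and all initial times, \(W\) is nonincreasing along trajectories and the outputs are square integrable, so an invariance or Barbalat-style argument closes the proof without any smoothness assumption on \(V_i\). With that cosmetic change, your proposal is a faithful and correct reconstruction of the cited result's proof.
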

\vspace{5pt}
    \printbibliography
\end{document}